\date{}
\newcommand{\para}[1]{\vspace{0.2em}\noindent\textbf{#1.}~}
\newtheorem*{rep@theorem}{\rep@title}
\newcommand{\newreptheorem}[2]{%
\newenvironment{rep#1}[1]{%
 \def\rep@title{#2 \ref{##1}}%
 \begin{rep@theorem}}%
 {\end{rep@theorem}}}
\renewcommand{\le}{\leqslant}
\renewcommand{\ge}{\geqslant}
\renewcommand{\geq}{\geqslant}
\renewcommand{\leq}{\leqslant}
\newcommand{\disjoint}{\mathsf{Disj}\xspace}
\newcommand{\msgs}{\mathsf{Msgs}\xspace}
\newcommand{\GS}{\mathcal{GS}\xspace}
\newcommand{\CG}{\mathcal{CG}\xspace}
\def\etc{etc.\@\xspace}
\newtheorem{theorem}{Theorem}
\newtheorem{observation}[theorem]{Observation}
\newtheorem{lemma}[theorem]{Lemma}
\newtheorem{corollary}[theorem]{Corollary}
\newtheorem{claim}[theorem]{Claim}
\let\originalleft\left
\let\originalright\right
\renewcommand{\left}{\mathopen{}\mathclose\bgroup\originalleft}
\renewcommand{\right}{\aftergroup\egroup\originalright}
\newcommand{\prob}{\mathrm{Pr}\xspace}
\newcommand{\Prob}[1]{\mathrm{Pr}\left[#1\right]\xspace}
\newcommand{\expect}[1]{\mathbb{E}\left[#1\right]\xspace}
\newcommand{\vol}{\mathsf{Vol}\xspace}
\newenvironment{proofoutline}{%
  \proof}{\endproof}
\newlength\myindent
\begin{document}
\title{Leader Election in Well-Connected Graphs\thanks{Authors are listed alphabetically.}}

\author{
Seth Gilbert \\
{\em National University of Singapore}\\
\url{seth.gilbert@comp.nus.edu.sg}
\and
Peter Robinson\\
{\em McMaster University, Canada}\\
\url{peter.robinson@mcmaster.ca}
\and
Suman Sourav\thanks{This author is the corresponding author.} \\
{\em National University of Singapore}\\
\url{sourav@comp.nus.edu.sg}
}
\maketitle

\begin{abstract}
In this paper, we look at the problem of randomized leader election in synchronous distributed networks with a special focus on the message complexity. We provide an algorithm  that solves the implicit version of leader election (where non-leader nodes  need not be aware of the identity of the leader) in any general network with $O(\sqrt{n} \log^{7/2} n \cdot t_{mix})$ messages and in $O(t_{mix}\log^2 n)$ time, where $n$ is the number of nodes and $t_{mix}$ refers to the mixing time of a random walk in the network graph $G$. For several classes of well-connected networks (that have a large conductance or alternatively small mixing times e.g. expanders, hypercubes, etc), the above result implies extremely efficient (sublinear running time and messages) leader election algorithms. 
Correspondingly, we show that any substantial improvement is not possible over our algorithm, by presenting an almost matching lower bound for randomized leader election. We show that $\Omega(\sqrt{n}/\phi^{3/4})$ messages are needed for any leader election algorithm that succeeds with probability at least $1-o(1)$, where $\phi$ refers to the conductance of a graph. 
To the best of our knowledge, this is the first work that shows a dependence between the time and message complexity to solve leader election and the connectivity of the graph $G$, which is often characterized by the graph's conductance $\phi$. Apart from the $\Omega(m)$ bound in \cite{Kutten:2015:CUL:2742144.2699440} (where $m$ denotes the number of edges of the graph), this work also provides one of the first non-trivial lower bounds for leader election in general networks. 
\end{abstract}
\newpage
\pagenumbering{arabic}

\section{Introduction} \label{sec:intro}

Leader election is one of the most classical and fundamental problem in the field of distributed computing having applications in numerous problems relating to synchronization, resource allocation, reliable replication, load balancing, job scheduling (in master slave environment), crash recovery, membership maintenance \etc Computing a leader can be thought of as a form of symmetry breaking, where exactly one special node or process (denoted as leader) is chosen to take some critical decisions. %

Loosely speaking, the problem of leader election requires a set of nodes in a distributed network to elect a unique leader
among themselves, i.e., exactly one node must output the decision that it is the leader. There are two well known variants of this problem (cf. \cite{AW98,Lynch:1996:DA:525656}), the explicit variant where at the end of the election process all the nodes are required to be aware of the identity of the leader and the implicit variant where the non-leader nodes need not be aware of the identity of the leader. %

Often, the implicit variant is sufficient for many practical applications, e.g. its original application for token generation in token ring environments \cite{DBLP:conf/ifip/Lann77} \etc This variant also allows us to clearly distinguish between the two aspects of explicit leader election and costs associated to each of them, i.e. electing a leader (implicitly) as compared to broadcasting the unique id of the leader to all the other nodes. Clearly, any solution for the explicit variant of leader election also solves the implicit variant. However, it is to be noted that any solution for the implicit leader election could solve explicit leader election by broadcasting the identity of the leader to all the nodes. 
In this paper, we mainly focus on the implicit variant of leader election on a network without edge or link failures.

Compared to deterministic solutions that provide absolute guarantees for the election of a
unique leader, randomized solutions guarantees unique leader election with high probability.
However, this weakened assumption is still sufficient for many practical applications. With an acceptable error probability, randomization can result in a significant reduction in time and message complexities. This is highly advantageous for large scale distributed systems (e.g. P2P systems, overlay and sensor networks \cite{Ratnasamy:2001:SCN:964723.383072,Rowstron:2001:PSD:646591.697650,Zhao:2006:TRG:2312202.2314945}), where scalability is an important issue. Furthermore, in anonymous networks, a randomized solution is often possible by randomly assigning unique identifiers to nodes (as done herein) whereas a corresponding deterministic solution is impossible (see \cite{Angluin:1980:LGP:800141.804655}).

This paper focuses on studying the message complexity of implicit leader election in synchronous distributed networks. Here, we show the relationship between the graph connectivity (which is characterized by the graph's conductance $\phi$) with the time and number of messages required for leader election. We provide an algorithm that solves implicit leader election in any general network with $\tilde{O}(\sqrt{n}\cdot t_{mix})$ messages and in $\tilde{O}(t_{mix})$ time, where $n$ is the number of nodes and $t_{mix}$ refers to the mixing time of a random walk in the network graph $G$.\footnote{Throughout the paper, we use the $\tilde{O}$ notation to hide $\log n$ factors.} Correspondingly, we show that $\Omega(\sqrt{n}/(\phi)^{3/4})$ messages are needed for any leader election algorithm that succeeds with high probability, where $\phi$ refers to the graph's conductance. We also show that the knowledge of the network size $n$ is critical to achieve the said message and time complexities, but surprisingly, knowledge of other graph properties, such as the conductance, mixing time, or diameter is not needed.

\para{Computing Model}
We model the network as a connected, undirected graph $G = (V,E)$ with $|V|=n$ nodes and $|E|=m$ edges where nodes communicate over the graph edges.
We assume synchronous communication that follows the standard $\mathcal{CONGEST}$ model \cite{peleg}. %
In each round, each node can perform some local computation which can involve accessing a private source
of randomness. Additionally, each node $u$ is allowed to send a message of size $O(\log n)$ bits through each edge $(u,v)$ incident on $u$. %
Nodes do not have predefined ids and there are no node or link failures. 

\para{Port Numbering Model}
We assume that the nodes know the network size $n$ and wake up simultaneously at the beginning of the execution. %
Also, nodes are anonymous in the sense that they do not have unique IDs.\footnote{Our lower bound holds even if nodes start with unique IDs.} Each node chooses an id uniformly at random from within the range $[1,n^4]$.\footnote{This range guarantees that the chosen values are unique with high probability \cite{Lynch:1996:DA:525656} (Chapter 4, page 72).}
Each node $u$ with degree $|d_u|$ has ports $1,\dots,d_u$, over which it can send messages across undirected links to its  neighbors; that is, each neighbor  of $u$ is the endpoint of exactly $1$ of $u$'s ports. Nodes only know the port numbers of connections and are unaware of their neighbors' identities.
We do not assume these mappings to be symmetric: in particular, it can happen that $u$ is connected to $v$ via port number $i$, whereas $v$ is connected to $u$ via port number $j \ne i$.

\para{Randomized Implicit Leader Election}
Every node of a given distributed network has a flag variable (or a boolean variable) initialized to $0$ and, after the process of election, with high probability (w.h.p), only one node, called the leader, raises its flag by setting the flag variable to $1$. An algorithm $\mathcal{A}$ is said to solve leader election in $T$ rounds, if within $T$ rounds nodes elect a unique leader w.h.p., and none of the nodes send any more messages after $T$ rounds. %

\para{Results}
In this paper, we present both upper and lower bounds for the problem of implicit leader election in general networks. We provide an algorithm that solves implicit leader election in any general network with $O(\sqrt{n} \log^{7/2} n \cdot t_{mix})$ messages and in $O(t_{mix}\log^2 n)$ time, where $n$ is the number of nodes and $t_{mix}$ refers to the mixing time of a random walk in the network graph $G$. If larger message sizes of $O(\log^3 n)$ is allowed, the message and time complexity  reduces to $O(\sqrt{n} \log^{3/2} n \cdot t_{mix})$ and $O(t_{mix})$ respectively.  This implies significantly faster and efficient solutions (that are sub-linear in terms of message complexity) for leader election in several important classes of well-connected graphs that have a large conductance or alternatively a small mixing time. For example, to solve implicit leader election, in expander graphs (see \cite{hoory06} for applications) that have a mixing time of $O(\log n)$, it takes only $O(\log^3 n)$ time and $O(\sqrt{n} \log^{9/2} n)$ messages; in hypercube  graphs, that have a mixing time of $O(\log n\log \log n)$, it takes only $O(\log^3 n\log \log n)$ time and $O(\sqrt{n} \log^{9/2} n\log \log n)$ messages. The algorithm can also be used for solving the explicit variant of leader election by adding a broadcasting procedure, wherein the leader broadcasts its identity to all other nodes. For well connected graphs, this breaks the $\Omega(m)$ lower bound given in \cite{Kutten:2015:CUL:2742144.2699440} (where $m$ denotes the number of edges of the graph) and nearly matches the $\Omega(\sqrt{n})$ lower bound for clique graphs  \cite{Kutten:2015:SBR:2945722.2945791} (as cliques have constant conductance). 

We show that a dependence on the graph conductance is unavoidable, by presenting a message complexity lower bound of $\Omega(\sqrt{n}/\phi^{3/4})$ messages that holds for any leader election algorithm that succeeds with probability at least $1-o(1)$. This nearly matches the upper bound since we know that $\Theta(1/\phi) \le t_{mix} \le \Theta(1/{\phi}^2)$ from \cite{Sinclair}.

By a similar analysis, we also provide lower bounds for other graph problems like broadcast and spanning tree construction in terms of the graph's conductance. Our lower bounds also apply for the $\mathcal{LOCAL}$ model \cite{peleg}, where there are no restrictions on the message size. Other than the $\Omega(m)$ bound in \cite{Kutten:2015:CUL:2742144.2699440}, to the best of our knowledge, this is the first non-trivial lower bound for randomized leader election in general networks. Also, ours is one of the first results to show the dependence of the time and message complexity to solve leader election on the connectivity of the graph $G$, which is often characterized by the graph's conductance $\phi$.

Additionally, we show that the knowledge of the network size $n$ is critical for our algorithm to succeed by giving a lower bound of $\Omega(m)$ for all graphs if $n$ is not known. However surprisingly, the knowledge of other graph properties, like the conductance, mixing time, or diameter is not needed.

\para{Prior Works}
Leader election, being one of the fundamental paradigms in the theory of distributed computing has been widely studied.
The problem was first stated by G{\'{e}}rard Le Lann in \cite{DBLP:conf/ifip/Lann77} in the context of token ring networks, and thereafter has been extensively studied for various types of networks, scenarios and communication models. For particular types of networks topologies like token rings, mesh, torus, hypercubes and cliques the problem of leader election has been well-studied resulting in specialized algorithms and lower bounds in terms of both, time and message complexities (e.g. \cite{Chang:1979:IAD:359104.359108,DOLEV1982245,Vitanyi:1984:DEA:800057.808725,peleg,Lynch:1996:DA:525656,FLOCCHINI199676,Afek:1985:TMB:323596.323613, doi:10.1137/0216019,Kutten:2015:SBR:2945722.2945791,Ramanathan2007} and references therein). In a seminal paper Gallager, Humblet and Spira \cite{Gallager:1983:DAM:357195.357200}, provided a deterministic solution for general graphs by finding the minimum spanning tree of the graph in $O(n\log n)$ time and exchanging a total of $O(m\log n)$ messages. Thereafter, Awerbuch \cite{Awerbuch:1987:ODA:28395.28421} provided an $O(n)$ round deterministic algorithm with a message complexity of $O(m + n \log n)$ messages, where $m$ refers to the total number of edges in the graph. %
Peleg \cite{Peleg:1990:TLE:78028.78040} provided an $O(D)$ time optimal algorithm for leader election with a message complexity of $O(mD)$, where $D$ the diameter of the graph. More recently, in \cite{Kutten:2015:CUL:2742144.2699440}, the authors provide an algorithm that requires only $O(m)$ messages though it could take arbitrary (albeit finite) time. 
There also exists significant amount of literature (see \cite{10.1007/3-540-40026-5_6, 815321, 491576, Brunekreef1996} and references therein) that provides a solution for leader election on fault prone networks, with possible node or link failures. 

The best known bounds for general graphs are as follows. In \cite{Kutten:2015:CUL:2742144.2699440} Kutten et al. show that $\Omega(m)$ is the lower bound on messages and $\Omega(D)$ is the lower bound on time for any implicit leader election algorithm. They compare and contrast the deterministic algorithms to randomized algorithms while trying to simultaneously achieve optimal time and message complexity for leader election. Unlike the deterministic case where an algorithm cannot be simultaneously time and message optimal (e.g. in a cycle any $O(n)$ time deterministic algorithm requires at least $\Omega(n \log n)$ messages even when nodes know $n$ \cite{Frederickson:1987:ELS:7531.7919}), they show that for the randomized case simultaneous optimality can be achieved in certain cases. In particular, to show the bounds are tight they give an algorithm that takes $O(m)$ messages (not time optimal), an algorithm that takes $O(m\log \log n)$ messages and $O(D)$ time (almost simultaneously optimal).

In \cite{Kutten:2015:SBR:2945722.2945791}, Kutten et al. show that in terms of message complexity, there exists a gap between the implicit and the explicit variants of leader election. For the explicit variant, all nodes needs to be informed of the identity of the leader, and as such $\Omega(n)$ messages is the obvious lower bound for all graphs. However, for the implicit version the authors  by provide a sub-linear bound algorithm on complete networks that runs in $O(1)$ rounds and (w.h.p.) uses only $O(\sqrt{n}\log^{3/2}n)$ messages to elect a unique leader (w.h.p.). Thereafter, they extend this algorithm to solve leader election on any connected graph $G$ in $O(t_{mix})$ time and $O(t_{mix} \cdot \sqrt{n}\log^{3/2}n)$ messages, where $t_{mix}$ is the mixing time of a random walk on $G$. 

A key difference, however, was that they assume that every node in the graph knows the mixing time of the graph, which significantly simplifies the problem.  An important challenge addressed by our algorithm is (effectively) estimating when a collection of random walks is well-enough mixed. While there is a recent result in \cite{Molla:2017:DCM:3007748.3007784} that shows how nodes can quickly estimate the mixing time of the graph, their algorithm requires $\Omega(m)$ messages and hence cannot be used for the purpose of achieving a small message complexity, where $m$ is the total number of edges in the graph. 

In the context of using random sampling for leader election, the work of \cite{byz-leader} uses random walks to limit the impact of Byzantine nodes on electing an honest leader in dynamic networks.

\section{Preliminaries}

In this section, we describe some basic definitions and concepts that we make use of throughout the paper. First, we give a brief overview and the definition of graph conductance. Next, we describe some basic notation for random walks on a graph $G$ including its mixing time, and state the relationship between the mixing time and the conductance of $G$.

Conductance, in general, is a characterization of the bottleneck in communication of a graph.
The notion of graph conductance was introduced by Sinclair \cite{Jerrum:1988:CRM:62212.62234}.
For a given graph $G = (V,E)$, a subset of nodes $U \subseteq V$ and cut $\mathcal{K}=(U, V\setminus U)$,  we define $E_{\mathcal{K}}$ to be the subset of edges across the cut $\mathcal{K}$, and the volume $\vol(U) = \sum_{v\in U}d_v$, where $d_v$ refers to the degree of node $v$.
The cut-conductance is defined as $ \phi_{\mathcal{K}} = {|E_{\mathcal{K}}|}/{\min\{\vol(U),\vol(V\setminus U)\}}$. 
The conductance of the graph $G$ is defined as the minimum of the cut-conductance across all possible cuts $\tilde{\mathcal{K}}$, i.e., $\phi(G) = \min \{ \phi_{\mathcal{K}} \mid {\mathcal{K} \in \tilde{\mathcal{K}}} \}$. %
We simply write $\phi$  instead of $\phi(G)$, when graph $G$ is clear from the context.

For a random walk on $G(V,E)$, we define a node set $V = \{ v_1, \dots , v_n \}$ and an $n \times n$ transition matrix $P$ of $G$. Each position of the form $P[i,i]$ in the transition matrix has an entry $p_{v_i,v_i} =1/2$, else if $i \neq j$ then $P[i,j]$ has an entry $p_{v_i,v_j} = 1 / 2d_{v_i}$  if there is an edge $(v_i , v_j) \in E$, otherwise $p_{v_i,v_j} = 0$. At a particular step, the entry $p_{v_i,v_j}$ gives the probability of a random walk moving from node $v_i$ to node $v_j$. This exactly corresponds to a \textit{lazy} random walk wherein a random walk either stays in the current node  with probability $1/2$; otherwise moves to a neighbor with probability $1/2d_{v_i}$. The probability distribution $\pi_t$ determined by $P$ represents the position of a random walk after $t$ steps. If some node $v_i$ starts a random walk, the initial distribution $\pi_0$ of the walk is an $n$-dimensional vector having all zeros except at index $i$ where it is $1$. After the node $v_i$ has chosen to forward the random walk token, either to itself or to a random neighbor, the distribution of the walk (after $1$ step) is given by $\pi_1 = P\pi_0$ and in general we have $\pi_t = P^t\pi_0$. For any %
connected graph $G$, the distribution will eventually converge to the stationary distribution $\pi_*=(q_1,\dots,q_n)$, which has entries $q_i = d_{v_i}/(2|E|)$ and satisfies $\pi_* = P \pi_*$.
The mixing time of an $n$-node graph $G$, $t_{mix}(G)$ is defined as the minimum $t$ such that, for each starting distribution $\pi_0$, $\parallel P\pi_t-\pi_*\parallel_\infty \hspace{1mm} \le \frac{1}{2n}$, where $\parallel \cdot \parallel_\infty$ denotes the usual maximum norm on a vector. We simply write $t_{mix}$  instead of $t_{mix}(G)$, when graph $G$ is clear from the context. %

Note that, the connectedness of a graph $G$ (determined by its conductance $\phi$) and the mixing time of a random walk on $G$ are closely related. Better connectivity implies fast mixing and vice versa. 
There is a well known result that formally relates the graph conductance $\phi$ to the mixing time (see \cite{Sinclair}) as follows
\begin{equation}\label{eq:phimix}
\Theta(1/\phi) \le t_{mix} \le \Theta(1/{\phi}^2)
\end{equation}

\section{A Leader Election Algorithm for Well-Connected Networks} \label{sec: algorithm}

In this section, we provide an algorithm that solves implicit leader election for any given graph $G$ with time complexity of $\tilde O(t_{mix})$  and more importantly, with $\tilde O(\sqrt{n} t_{mix})$ message complexity, where $t_{mix}$ refers to the mixing time of a random walk on the graph $G$. 

The proposed algorithm, in its initial phase is similar to the algorithms given in \cite{Gilbert:2010:DAO:1873601.1873679, Kutten:2015:SBR:2945722.2945791} where the initial objective is to reduce the number of competing nodes (contenders), while also ensuring that there exists at least one contender (w.h.p.). For this purpose, each node $v$ in the network graph $G$, elects itself as a contender with a probability of $c_1\log (n)/ n$, where %
$c_1$ is a sufficiently large constant. As such, the probability of no node electing itself as a contender is $(1-\frac{c_1\log n}{n})^n \approx \exp (-c_1\log n) = n^{-c_1}$ ; thus, implying that w.h.p. the number of contenders is nonzero. %

Now, imagine a scenario in which each of these contenders contacts a set of nodes, which we refer to as the contender's target set. If the target set is large enough (say, $n/2 +1$), then for any two contenders we can say that there would be common/intersecting node that would have communicated with both contenders. Thereafter, the contenders can communicate via this intersecting node. If all contenders have a sufficiently large target set then all contenders would be able to communicate with one another. We design our algorithm based on this idea.

First, we determine the minimum size of the target set needed to guarantee an intersection w.h.p. between the target sets of any two contenders. It can be easily shown with the birthday paradox argument that if any two contender nodes $u$ and $v$ contact $O(\sqrt{n\log n})$ random nodes, then w.h.p. there is at least one node $w$ that was chosen by both $u$ and $v$. %
By the definition of mixing time, if a random walk has taken at least $t_{mix}$ steps, then (for all practical purposes) its end point can be considered as a random node. Therefore, each contender node can find $O(\sqrt{n\log n})$ random nodes by performing $O(\sqrt{n\log n})$ independent random walks in parallel. The random walks essentially function as mechanisms for selecting/sampling ``random" nodes, where the guarantee is if the length of the walk is long enough, the choice is close to uniform. We might as well think of the random walks as a black box that return a collection of random nodes. However, as nodes are not aware of the mixing time of the graph, this technique cannot be used directly to obtain random nodes.  %

Without knowledge of the mixing time, it is difficult (if not impossible) to obtain a set of nodes that are chosen uniformly at random by using random walks. Therefore, the major challenge reduces to correctly obtaining a set of possibly non-random nodes (as random walks might be of length less than the mixing time) that satisfy the required properties that we had hoped to achieve from a uniformly random chosen target set.
One difficulty here lies is in determining the ideal length of the random walks of each contender without the knowledge of the mixing time.
To deal with this, in our algorithm, we use a guess and double strategy where in each iteration nodes guess a length for the random walk, perform random walks of the chosen length, determine based on some criteria if the length is sufficient; if not the next iteration begins with double the previous estimate. 

The critical part is to determine the criterion for which we can consider the length of the random walks to be sufficient. A natural solution would be to check if there are enough intersections in the target set (with target sets of other contenders). For example, if the target set of each contender had an intersection with target sets of all other contenders, all of them can communicate via the intersecting node(s). However, then we would require the knowledge of the exact number of contenders to determine termination, which is difficult to obtain with certainty. In fact, we show that an intersection with greater than half of the contenders is sufficient and obtainable. 

Given such a criterion, it creates another challenge, as it might be the case that all the random walks do not terminate in the same round. For example, consider the case where a large number of contenders belong to the same locality of the graph and as a result they contact each other quickly, via their random walks. However, a few of the contenders do not belong to this neighborhood and are slightly far off from this locality. In this case, the target set of the locally placed contenders would belong to the same locality (and not be nearly randomly spread). As such, it would be difficult for the far flung contenders to make contact with any of the locally placed contender's target sets, requiring much longer lengths of random walks than the mixing time. 
For this case, we would also need to guarantee that the random walks that terminate early are still easily discoverable.

To deal with the above challenges, we provide a twofold stopping criterion: first, we want to ensure that the end points of the random walks of a contender intersect with the random walks of at least half of the total number of contenders; second, we would also like to ensure that the end points of these random walks are sufficiently spread out, such that other random walks do not spend too much time discovering them. %
Another crucial part to consider is dealing with the congestion that might be caused by the information carried along the random walks.

Basically, the given randomized leader election algorithm can be divided into three major parts. First, a node makes a probabilistic decision determining its candidature, i.e., whether or not it becomes a contender. Then, in the second part, contenders guess and double length of random walks until it satisfies some required properties. Lastly, based on information retrieved from random walks, a node elects itself as the leader if it satisfies a certain winning condition.

We provide the following contender lemma which restricts the total number of possible contenders.
\begin{lemma} (Contender Lemma)\label{lem:contenders}
  With high probability the number of contenders is in the range  $[ \frac{3}{4}c_1 \log n, \frac{5}{4} c_1 \log n]$, where $c_1$ is a sufficiently large constant.
\end{lemma}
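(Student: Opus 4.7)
The plan is to apply the standard multiplicative Chernoff bound to the sum of independent Bernoulli indicators. For each node $v \in V$, let $X_v$ be the indicator random variable that $v$ elects itself as a contender, so $\Prob{X_v = 1} = c_1 \log n / n$ independently across nodes. Let $X = \sum_{v \in V} X_v$ denote the total number of contenders; then by linearity of expectation $\mu := \expect{X} = c_1 \log n$.

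Next, I would invoke the two-sided Chernoff bound with deviation parameter $\delta = 1/4$. The upper tail gives $\Prob{X \geq (5/4)\mu} \leq \exp(-\delta^2 \mu / 3)$ and the lower tail gives $\Prob{X \leq (3/4)\mu} \leq \exp(-\delta^2 \mu / 2)$. Substituting $\mu = c_1 \log n$ and $\delta = 1/4$, both probabilities are of the form $n^{-\Omega(c_1)}$. By the union bound, the probability that $X$ falls outside the interval $[\frac{3}{4} c_1 \log n, \frac{5}{4} c_1 \log n]$ is bounded by $2 n^{-\Omega(c_1)}$.

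Finally, by choosing $c_1$ to be a sufficiently large constant (exceeding a modest threshold depending on the desired high-probability exponent), this failure probability can be driven below $n^{-c}$ for any fixed $c > 0$, which matches the "with high probability" conclusion of the lemma. The only subtle point is the choice of the constant $c_1$, which must be large enough for the Chernoff tails to yield high probability; however, since $c_1$ is already a free parameter of the algorithm and the analysis elsewhere in the paper will impose its own lower bound on $c_1$, this can be absorbed without further effort. There is no substantive obstacle here: the statement is a textbook concentration argument for a sum of $n$ independent Bernoulli trials with expectation $\Theta(\log n)$.
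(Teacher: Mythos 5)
Your proposal is correct and follows essentially the same route as the paper: both define the contender count as a sum of independent Bernoulli indicators with mean $c_1\log n$, apply the two-sided multiplicative Chernoff bound with $\delta=1/4$, take a union bound, and absorb the high-probability requirement into the choice of the constant $c_1$. No gaps.
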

\begin{proof}
Since each node becomes contender independently with probability $(c_1\log n)/{n}$, we can apply two tail bounds to show concentration around the expected number of contenders ${c_1\log n}$.
Let $X$ be the number of contenders.
By standard Chernoff Bounds (Theorems 4.4 and 4.5 in \cite{Mitzenmacher:2005:PCR:1076315}), we know that
$
  \Prob{X \ge \left(1+{1}/{4}\right)c_1\log n} \le \exp\left(-\left({1}/{4}\right)^2 c_1\log(n) / 3\right)
$
and, similarly,
$
  \Prob{X \le \left(1 - {1}/{4}\right)c_1\log n} \le \exp\left(-\left({1}/{4}\right)^2 c_1\log(n) / 2\right).
$
For sufficiently large $c_1$, both of these bounds can be shown to hold with high probability and hence the lemma follows by a simple union bound.
\end{proof}

Each contender node $u$ creates  $c_2 \sqrt{n \log n}$ tokens and starts $c_2 \sqrt{n \log n}$ random walks of length $t_u$ in parallel, where $c_2$ is a constant $> 2$. Each random walk is represented by a token $\langle u, t_u \rangle$ (of $O ( \log n )$ bits), where $u$ represents the node's id, and $t_u$ represents the length of the random walk. At each step of the random walk $t_u$ is decremented by 1, until it finally becomes 0. We define \textit{proxies} of node $u$ as the nodes where the random walks generated by $u$ complete $t_u$ steps, where $t_u$ is either $u$'s current or final guess of the length of the random walk. Two contender nodes are said to be \emph{adjacent} if they share at least one proxy.  

\noindent The algorithm guarantees the following properties at the end of the execution: \vspace{-2mm}
\begin{itemize}
\item \emph{Intersection Property}: A contender $u$ satisfies the intersection property iff $u$ is adjacent to at least $\geq \frac{3}{4}c_1 \log n$ of the other contender nodes. Using Lemma \ref{lem:contenders}, we see that any node which satisfies the intersection property is adjacent to greater than half of the total number of contenders (as $\frac{3}{4}c_1 \log n \geq \frac{1}{2}\left( \frac{5}{4} c_1 \log n\right)$) w.h.p. \vspace{-2mm}
\item \emph{Distinctness Property}: A contender node $u$ satisfies the distinctness property if $\geq \tfrac{c_{2}}{2} \sqrt{n \log n}$ of its proxies are distinct. For a particular guess of $t_u$, a proxy $p_u$ of a random walk belonging to $u$ is called a distinct proxy only if $p_u$ is the end node of exactly one random walk belonging to $u$ (from among the $c_2\sqrt{n\log n}$ many random walks belonging to $u$) i.e. no other random walks belonging to $u$ ends at $p_u$. For any contender node, the spreading out of its random walks is characterized by the number of distinct proxies. \vspace{-2mm}
\end{itemize}

In the $\mathcal{CONGEST}$ model, due to the restriction on the message size,  it is impossible to perform too many walks in parallel along an edge. We solve this issue by sending only one token and the count of tokens that need to be sent by a particular contender which is still $O(\log n)$, and not all the tokens themselves.
Similarly, our algorithm also requires some id information (set of ids of other contenders) to be sent along the random walk. %
We note that the maximum possible number of contenders is $\le \frac{5}{4} c_1 \log n$ w.h.p. (c.f. Lemma \ref{lem:contenders}). In the worst case, an intermediate node might have to deal with $O(\log n)$ messages of $O(\log^2 n)$ size each, introducing a maximum possible delay of $O(\log^2 n)$ rounds. 
To account for this delay, in the algorithm, we define $T = \frac{25}{16} c_1 t_u \log^2 n  $, and use this upper bound estimate to keep the execution of the algorithm in synchrony. We relegate the formal details of handling congestion to the proof of Lemma~\ref{lem:msgcomplexity}.

\begin{algorithm}
\caption{Leader Election: Initialization}
\label{algo:leader1}
\begin{algorithmic}[1]
\State Each node generates a random id in the range $[1, \ldots, n^4]$.
\State Each node designates itself a contender with probability $c_1(\log{n}/n)$.
\State Each contender begins the protocol by executing a Random Walk Phase of length $O(1)$.
\State Any node that is not a contender declares itself as non-leader. 
\end{algorithmic}
\end{algorithm}

\begin{algorithm}
\caption{Leader Election: Random Walk Phase of length $t_u$ of contender $u$.}
\label{algo:leader2}
\begin{algorithmic}[1]
\State Each contender $u$ initiates $c_2 \sqrt{n \log{n}}$ parallel random walks of length $t_u$ for time $T = O(t_u \log^2{n})$.
\State When a random walk completes, the last node in the random walk is called a \textbf{proxy of $u$}.
\State Node $u$ then performs three synchronized rounds of information exchange with its proxies, each taking time $T = O(t_u \log^2{n})$:
\Statex \textbf{Round 1.} Each proxy sends back its id, a Boolean $d$ determined by distinctness, and the set $I_1$, which contains the ids of the other contenders for which it is also a proxy.
\Statex \textbf{Round 2.} $u$ sends set $I_2$ to its proxies, which is the union of the $I_1$ sets received in round~1.
\Statex \textbf{Round 3.} Proxies send back set $I_3$: the union of the $I_2$ sets received.
\State Contender $u$ decides to stop if the \emph{Intersection Property} and the \emph{Distinctness Property} are met for the set $I_2$.  %
\State Let $I_4$ be the union of all $I_3$ sets received by $u$. If $u$ decides to stop,  has the largest id in set $I_4$, and it has not previously received any \emph{winner} messages, then it designates itself as the leader and sends its proxies a \emph{winner} message. %
\State The first time a proxy receives a \emph{winner} message, it sends it to all its contenders.
\State The first time a contender receives a \emph{winner} message, it sends it to all its proxies and appends it to all future messages.
\State At the end of the random walk phase, a contender that has not decided to stop waits $2T$ time (for \emph{winner} messages to propagate) and then begins a new Random Walk Phase of length $2t_u$.
\State Any contender that has stopped and is not a leader, declares itself as non-leader.
\end{algorithmic}
\end{algorithm}

Consider contender nodes that are yet to satisfy the intersection and distinctness properties as active nodes; consequently nodes that have already satisfied the said properties are considered inactive. That is, all nodes that will not double their estimate of $t_u$ are considered as inactive. It is to be noted that all the active nodes are synchronous and for all inactive nodes, the distance to their respective proxies is less than the current estimate of $T$ (of the active nodes).

\begin{observation} \label{obv:inactive}
All inactive contender nodes satisfy both the intersection and the distinctness properties. %
\end{observation}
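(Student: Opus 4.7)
The plan is to argue that the observation is essentially immediate from the stopping condition built into Algorithm~\ref{algo:leader2}, once one checks that the information a contender collects in its three-round exchange faithfully reflects the two properties it is meant to test.

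First, I would recall the relevant control flow of the Random Walk Phase. A contender $u$ becomes inactive exactly when line~4 of Algorithm~\ref{algo:leader2} fires, that is, when $u$ decides to stop because the Intersection and Distinctness Properties are both met on the set $I_2$; otherwise, line~8 forces $u$ into another phase with doubled estimate $2t_u$, keeping $u$ active. So it suffices to verify that the check performed on $I_2$ (together with the distinctness bits returned in Round~1) truly certifies the two properties as defined.

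Next, I would unwind the definitions of $I_1, I_2$. In Round~1 each proxy $p$ of $u$ returns the set $I_1(p)$ consisting of all other contenders for which $p$ is also a proxy. By construction, $I_2 = \bigcup_{p} I_1(p)$ is precisely the set of contenders that share at least one proxy with $u$, i.e., the set of contenders adjacent to $u$ in the algorithm's sense. Hence $|I_2| \ge \tfrac{3}{4} c_1 \log n$ is literally the Intersection Property. Likewise, the Boolean flag $d$ attached by each proxy in Round~1 is by definition the indicator of whether that proxy is the endpoint of exactly one random walk of $u$, so aggregating these bits at $u$ allows it to count distinct proxies and check the Distinctness Property directly.

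Combining these two observations, any contender that has become inactive must, at the moment it decided to stop, have certified both properties on the information it had already collected. I do not expect any genuine obstacle: the statement is, in essence, a restatement of the algorithm's stopping rule, and the only thing to check is the bookkeeping correspondence between $I_2$ (respectively the $d$-bits) and the combinatorial definitions of the two properties given earlier in the section.
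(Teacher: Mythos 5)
Your proposal is correct and matches the paper's treatment: the paper offers no separate proof because, by its definition, inactive contenders are precisely those that have stopped after certifying both properties, so the observation is immediate from the stopping rule. Your additional bookkeeping check that $I_2$ and the $d$-bits faithfully encode the two properties is sound but not something the paper felt the need to spell out.
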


\begin{lemma} \label{lem:active}
For any active contender node $y$, after the iteration where $t_y = c_3t_{mix}$ $(c_3 \geq 1)$, w.h.p. $y$ satisfies both the intersection and distinctness properties. In fact, $y$ has intersecting proxies with all of the other contenders (both active and inactive). %
\end{lemma}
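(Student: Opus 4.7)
The plan is to prove two sub-claims using the mixing time of $y$'s walks, from which the lemma follows. Since $t_y = c_3 t_{mix} \geq t_{mix}$, each of $y$'s $k := c_2\sqrt{n\log n}$ walks ends at a node whose distribution $\mu_y$ is within $1/(2n)$ of the stationary distribution $\pi_*$ in $\ell_\infty$ norm.

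For distinctness, I would lower bound $\mathbb{E}[D] = \sum_w k\mu_y(w)(1-\mu_y(w))^{k-1}$ by $\Omega(k)$. Separating ``light'' nodes (those with $\mu_y(w) \leq 1/k$, where $(1-\mu_y(w))^{k-1} = \Theta(1)$) from ``heavy'' ones and using that $\mu_y \approx \pi_*$ cannot concentrate too much mass on a small heavy set (since $\sum_w \pi_*(w) = 1$), the contribution of the light nodes alone is $\Omega(k)$. A bounded-differences concentration argument (altering a single walk changes $D$ by at most $1$) then yields $D \geq k/2$ with high probability for sufficiently large $c_2$.

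For the intersection sub-claim---which asserts that $y$ shares a proxy with \emph{every} other contender and hence implies the stated intersection property via Lemma~\ref{lem:contenders}---I would fix an arbitrary other contender $v$ and split into cases. If $v$ is active, synchrony of active contenders forces $\mu_v$ to also be near-stationary, so the expected number of $(y,v)$-coincident pairs of walks is at least $(1-o(1))\,k^2\|\pi_*\|_2^2 \geq (1-o(1))\,k^2/n = \Omega(\log n)$ using $\|\pi_*\|_2^2 \geq 1/n$ (Cauchy--Schwarz), and a second-moment (or Chernoff-style) concentration gives at least one coincidence w.h.p. If $v$ is inactive, then Observation~\ref{obv:inactive} says $v$ already satisfies distinctness, which through a collision-counting argument forces $\|\mu_v\|_2^2 = O(1/k)$; combined with the reversibility identity $\sum_w \mu_y(w)\mu_v(w)/\pi_*(w) = P^{t_y+t_v}(y,v)/\pi_*(v) = 1-o(1)$ (since $t_y+t_v \geq t_{mix}$), this lower bounds $\sum_w \mu_y(w)\mu_v(w)$ sufficiently to again yield $\Omega(\log n)$ expected coincidences. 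A union bound over the $O(\log n)$ contenders (Lemma~\ref{lem:contenders}) closes the proof.

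The main obstacle is the inactive case, because $v$'s walks may have terminated before mixing and so $\mu_v$ need not be close to $\pi_*$, leaving open the possibility that $v$'s distinct proxies lie on low-stationary-weight nodes that $y$'s degree-biased samples rarely reach. The resolution combines the distinctness-induced $\ell_2$-bound on $\mu_v$ with reversibility of the lazy walk to convert the pair-collision probability into the well-mixed transition probability $P^{t_y+t_v}(y,v)$, ensuring that $y$'s mixed walks still intersect $v$'s distinct proxies with adequate probability per walk-pair.
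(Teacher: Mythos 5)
Your overall architecture matches the paper's: the proof splits into the same two sub-claims (adjacency to \emph{every} other contender, and distinctness), and it hinges on the same key observation you articulate at the end---that an inactive contender's walks may be unmixed, but its distinctness property hands you a fixed set of $\ge\tfrac{c_2}{2}\sqrt{n\log n}$ proxy nodes that $y$'s well-mixed walks will hit by a birthday-paradox computation. The differences are in the concentration machinery. For distinctness, the paper bounds the pairwise collision probability of two of $y$'s walks by $\approx 1/n$, union-bounds to get a per-walk distinctness probability, and then applies a Chernoff bound to a stochastically dominating family of independent indicators; your expectation-plus-bounded-differences (McDiarmid) route is a valid alternative, and your light/heavy split is actually more honest than the paper about the fact that the endpoint distribution is $\approx\pi_*$ (degree-biased) rather than uniform---though both arguments quietly need $\pi_*$ not to concentrate too much mass on $O(\sqrt{n\log n})$ nodes. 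For the active-vs-active intersection, your second-moment argument and the paper's birthday-paradox bound are essentially interchangeable.

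The one place your route is genuinely shakier is the inactive case. You deduce $\sum_w\mu_v(w)^2 = O(1/k)$ ``from distinctness,'' but distinctness is an event about $v$'s \emph{realized} proxy set, not a property of the distribution $\mu_v$; conditioning on $v$ having become inactive does not let you infer an $\ell_2$ bound on $\mu_v$ (the distribution could be concentrated and distinctness could still have occurred on the event you are conditioning on). The reversibility identity is correct but is then being applied to the wrong object, and in any case converting $\sum_w\mu_y(w)\mu_v(w)/\pi_*(w)\approx 1$ into a useful lower bound on $\sum_w\mu_y(w)\mu_v(w)$ requires a pointwise lower bound on $\pi_*$ that you do not have in general. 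The fix is to drop the distributional detour entirely and argue as the paper does: condition on $v$'s realized set $S$ of $\ge\tfrac{c_2}{2}\sqrt{n\log n}$ distinct proxies (which is independent of the randomness of $y$'s current walks), and bound the probability that all $c_2\sqrt{n\log n}$ of $y$'s near-stationary samples miss $S$ by $\bigl(1-\Theta(|S|/n)\bigr)^{c_2\sqrt{n\log n}} = O(1/n)$, then union-bound over the $O(\log n)$ contenders. With that repair your proof goes through and is otherwise equivalent to the paper's.
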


\begin{proof}

Consider a set $Y$ consisting of all active contenders and a set $X$ of all the inactive contenders (contenders that decide not to double their estimate after some previous epoch). We prove the lemma using the following claims. %

\begin{claim}
Each contender node in $Y$ is adjacent to (has intersecting proxies with) all the other contender nodes, w.h.p. %
\end{claim}
\begin{proof}
For a contender node $y \in Y$, when $t_y=c_3t_{mix}$ $(c_3 \geq 1)$, $y$ has $c_2 \sqrt{n \log n}$ random proxies by running $c_2 \sqrt{n \log n}$ independent random walks of length $= c_3t_{mix}$ (proxies are random by the definition of mixing time). For any contender node $x \in X$, since $x$ satisfies the distinctness property, it has at least $\frac{c_{2}}{2} \sqrt{n \log n}$ distinct proxies. The probability of non-intersection between this set of $\frac{c_{2}}{2} \sqrt{n \log n}$ distinct proxies and the set $c_2 \sqrt{n \log n}$ random proxies is given by a birthday-paradox style argument to be $(1-\frac{(c_2/2)\sqrt{n\log n}}{n})^{c_2\sqrt{n\log n}} = \exp(-\frac{(c_2)^2}{2}\log n) = O(\frac{1}{n})$. The statement is true for all pair of nodes by taking a simple union bound.
Thus, with high probability, each contender node in $Y$ has at least one common/intersecting proxy with any contender node in $X$.

Now, consider two different contenders $y_1,y_2 \in Y$, each of which has a set of $c_2 \sqrt{n \log n}$ random proxies. 
Using similar arguments as above it can be easily shown that each contender node in $Y$ has at least one common/intersecting proxy with every other contender node in $Y$, w.h.p. 

This implies that each contender in $Y$ has intersecting proxies with all the other contenders, both in $X$ and $Y$, and thus is adjacent to all the other contenders.
\end{proof}

\begin{claim}
Each contender node $y$ in $Y$ satisfies the distinctness property,  %
w.h.p., when $t_y=c_3t_{mix}$, where $c_3$ is a constant $\geq 1$.
\end{claim}
\begin{proof}
To show the number of distinct proxies, we name the $c_2 \sqrt{n \log n}$ independent random walks of any contender $y \in Y$ as $w_1, w_2, \dots, w_{c_2 \sqrt{n \log n}}$. After the random walks have taken $c_3t_{mix}$ $(c_3 \geq 1)$ number of steps, the probability that two of these random walks $w_i$ and $w_j$ do not share a proxy  is $\approx 1-\tfrac{1}{n}$ (by the definition of mixing time). The probability that no other the random walks of node $u$ ends up at the same node as $w_i$ is obtained by taking an union bound. \vspace{-3mm}
\begin{equation}
\Prob{w_i \text{ has a distinct proxy}} \ge \left(1-\frac{1}{n}\right)^{c_2 \sqrt{n \log n}} \ge \exp \left(-c_2\sqrt{\frac{\log n}{n}}\right) \geq 1- c_2\sqrt{\frac{\log n}{n}}
\end{equation}
{The above equation holds as $\left(1-\frac{1}{x}\right) = \exp(-1)$ and $ \exp(x) \geq 1+x$.}

Let $X$ be a binary random variable such that $X_i=0$ when $w_i$ has a distinct proxy (no other $w_j$ ends at the proxy of $w_i$), and $X_i=1$ when it does not. 
This implies (from above) that $\Prob {X_i=0} \geq 1- c_2\sqrt{\frac{\log n}{n}}$ and $\Prob{X_i=1} \leq  c_2\sqrt{\frac{\log n}{n}}$. 
We define another binary random variable $Y$ such that $Y_i = 1$ with probability $c_2\sqrt{\frac{\log n}{n}}$, otherwise $0$. Clearly, $\Prob{Y_i =1}$ is always $\ge \Prob{X_i=1}$.
Since each $Y_i$ is independent of one another,  %

\[
	\expect {\sum_{i=0}^{c_2 \sqrt{n \log n}}Y_i} = c_2\sqrt{\frac{\log n}{n}} \cdot c_2 \sqrt{n \log n} = {c_2}^2\log n
\]
Thereafter, using a standard chernoff's bound we show a bound on the summation over $Y_i$. %
\[
	\Prob {\sum_{i=0}^{c_2 \sqrt{n \log n}}Y_i \geq \left(1-\frac{1}{2}\right){c_2}^2 \log n }\leq \exp\left(- {c_2}^2\log(n) / 2\right) \leq \frac{1}{n^{({c_2}^2/2)}}
\]
	
Since each $Y_i$ stochastically dominates over the corresponding $X_i$, it implies 
\[
	\Prob {\sum_{i=0}^{c_2 \sqrt{n \log n}}X_i \geq \left(1-\frac{1}{2}\right){c_2}^2 \log n }\leq \exp\left(- {c_2}^2\log(n) / 2\right) \leq \frac{1}{n^{({c_2}^2/2)}}
\]

Therefore, we can say that with high probability $\sum_{i=0}^{c_2 \sqrt{n \log n}}X_i \leq \left(\frac{1}{2}\right){c_2}^2 \log n $. This means that the number of non-distinct proxies of the contender $y$ is $\leq \left(\frac{1}{2}\right){c_2}^2 \log n$, w.h.p. The statement holds over all contender nodes at the same time by taking a  simple union bound.

Thus, when $t_y=c_3t_{mix}$, each contender $y\in Y$ would have found at least $\frac{c_2}{2}\sqrt{n \log n}$ (which is $<  c_2 \sqrt{n \log n} - ({c_2}^2 \log n)/2$) distinct proxies and therefore satisfying the distinctness property.
\end{proof}

\end{proof}

The time complexity of the algorithm, is determined by the following lemma.

\begin{lemma} [Safety Lemma] \label{lem:safety}
In $O(t_{mix} \log^2 n)$ time, w.h.p. all contender nodes satisfy both the intersection and the distinctness properties. Consequently, for the given algorithm, every node eventually stops, no later than $O(t_{mix}\log^2 n)$ time. %
\end{lemma}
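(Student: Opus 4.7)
The plan is to combine Lemma~\ref{lem:active} with the guess-and-double schedule and then sum the resulting geometric series, treating the $\log^2 n$ congestion slack as a black box (formally justified later in Lemma~\ref{lem:msgcomplexity}).

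First, I would observe that by Lemma~\ref{lem:active}, as soon as an active contender $u$ reaches a guess $t_u \geq t_{mix}$, its $c_2\sqrt{n\log n}$ random walks behave (with high probability) like samples from the stationary distribution, and therefore $u$ satisfies both the intersection and the distinctness properties in that iteration. Since $u$ starts with $t_u = O(1)$ in Algorithm~\ref{algo:leader1}--\ref{algo:leader2} and doubles its guess each time it fails to satisfy the stopping criterion, after at most $k^* = \lceil \log_2 t_{mix}\rceil + O(1)$ iterations we have $t_u \in [t_{mix},\, 2t_{mix}]$. A union bound over the (at most $O(\log n)$) contenders from Lemma~\ref{lem:contenders} ensures that w.h.p.\ \emph{every} contender satisfies both properties within $k^*$ iterations; inactive contenders have already satisfied them by Observation~\ref{obv:inactive}.

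Next, I would account for the wall-clock time. The iteration with guess $t_u$ takes $T = \frac{25}{16}c_1 t_u \log^2 n = O(t_u \log^2 n)$ rounds (this is exactly the budget used to absorb the $O(\log^2 n)$ congestion delay on each edge, whose rigorous justification I defer to Lemma~\ref{lem:msgcomplexity}), plus the three synchronized information-exchange subrounds of the same length, plus the $2T$ waiting pad at the end that allows any \emph{winner} message to propagate. Hence one iteration costs $O(t_u \log^2 n)$ time. Because the $t_u$ values double, the total time up to and including the iteration with $t_u \in [t_{mix}, 2t_{mix}]$ is a geometric series
\begin{equation*}
\sum_{i=0}^{k^*} O\bigl(2^i \log^2 n\bigr) \;=\; O\bigl(2^{k^*} \log^2 n\bigr) \;=\; O(t_{mix}\log^2 n),
\end{equation*}
which dominates all prior iterations.

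Finally, I would argue termination: once a contender satisfies both properties it sets its stopping flag, and in the same random-walk phase the leader (the maximum id in $I_4$) issues a \emph{winner} message that, by the distinctness property and the $2T$ wait, reaches every other contender through the shared proxies before the next doubling would start. Any contender that does not become leader declares itself non-leader; nodes that were never contenders already declared themselves non-leaders in Algorithm~\ref{algo:leader1}. Thus every node halts by time $O(t_{mix}\log^2 n)$ w.h.p.

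The main obstacle I anticipate is not the doubling argument itself but cleanly pinning down that the $O(\log^2 n)$ factor truly suffices to absorb all congestion along the random walks \emph{and} the three-round exchange \emph{and} the \emph{winner}-message propagation simultaneously; this hinges on the bound $O(\log n)$ on the number of contenders (Lemma~\ref{lem:contenders}) together with the $O(\log^2 n)$-size id sets routed along each edge, a bookkeeping task I would discharge by appealing to the formal congestion analysis in Lemma~\ref{lem:msgcomplexity} rather than reproving it here.
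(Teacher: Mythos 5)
Your proposal is correct and follows essentially the same route as the paper's proof: invoke Observation~\ref{obv:inactive} and Lemma~\ref{lem:active} to get both properties once $t_u$ reaches $\Theta(t_{mix})$, note that guess-and-double only costs a constant factor over the final guess (your geometric series makes this explicit), and absorb congestion via the $T = O(t_u\log^2 n)$ budget whose justification is deferred to Lemma~\ref{lem:msgcomplexity}. Your version is simply more detailed on the union bound over contenders and the termination bookkeeping.
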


\begin{proof}
Each contender $u$ in parallel, runs several random walks till it satisfies the intersection and distinctness properties. %
$u$ begins with an initial estimate of $1$ and doubles each time till the above condition is not satisfied. This is the standard guess and double strategy and this does not increase the overall complexity by more than a constant factor of the maximum estimate. From Observation \ref{obv:inactive} and Lemma \ref{lem:active}, we see that all contender nodes satisfy both the conditions w.h.p. when $t_u = c_3t_{mix}$ $(c_3 \geq 1)$. Since the algorithm runs an upper-bound of $t_u$, i.e. $T=O(t_u \log^2 n)$ to avoid congestion, the time required to satisfy both the intersection and distinctness property is $O(T)=O(t_{mix}\log^2 n)$. %
\end{proof}

\begin{lemma} [At least one leader] \label{lem:atleast}
After the iteration where the active nodes estimate $t_u = c_3t_{mix}$, where $c_3$ is a constant $\geq 1$, if no node had elected itself as leader in any of the earlier rounds, at least one contender node elects itself as the leader.
\end{lemma}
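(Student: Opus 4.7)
The plan is to produce an explicit witness: I will show that the contender $u^*$ holding the globally maximum contender id (which exists and is unique w.h.p.\ by Lemma~\ref{lem:contenders} together with the choice of ids from $[1,n^4]$) must elect itself as leader. The argument reduces to two parts: (a) $u^*$ decides to stop no later than the iteration $t_u = c_3 t_{mix}$, and (b) when $u^*$ stops, its set $I_4$ contains $u^*$'s own id and no strictly larger id, so $u^*$ passes the leader test.

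For (a), if $u^*$ is still active at $t_u = c_3 t_{mix}$, Lemma~\ref{lem:active} guarantees that it now satisfies both the intersection and distinctness properties, and hence stops. If instead $u^*$ has already become inactive, then by Observation~\ref{obv:inactive} it had stopped at some strictly earlier iteration. Either way, $u^*$ has a well-defined stopping iteration, occurring no later than the current one.

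For (b), I would trace the three-round proxy exchange that defines $I_4$. Each $I_1$ returned by a proxy contains only contender ids, and so inductively do the unions $I_2$, $I_3$, and $I_4$; in particular, no id strictly greater than $u^*$'s can appear in $u^*$'s $I_4$. Furthermore, since $u^*$ satisfies the intersection property at its stopping iteration, it shares at least one proxy $p$ with some other contender $v$. The $I_2$ that $v$ sends to $p$ in Round~2 then contains $u^*$'s id (because $v$ is adjacent to $u^*$ via $p$), so the $I_3$ computed at $p$ contains $u^*$'s id, and therefore so does $u^*$'s $I_4$. Combining these two facts, $u^*$'s id is the maximum element of $I_4$.

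Putting (a) and (b) together with the hypothesis that no \emph{winner} message has been issued so far (so $u^*$ certainly has not received one), all three criteria of Algorithm~\ref{algo:leader2} for $u^*$ to designate itself as leader are met. If $u^*$ were inactive already, those same criteria would have held at its earlier stopping iteration, contradicting the assumption that no leader was elected earlier; hence $u^*$ is active at this iteration and elects itself as leader now. The main subtlety I anticipate is the bookkeeping around inactive contenders whose proxies were frozen in earlier iterations: step~(b) requires that an active $u^*$'s current proxy $p$ is simultaneously a proxy of some (possibly inactive) contender $v$, which is exactly what Lemma~\ref{lem:active} provides, so the issue is notational rather than substantive.
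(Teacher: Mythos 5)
Your proposal is correct and follows essentially the same route as the paper: both identify the highest-id contender as the witness, case-split on whether it is active or already inactive at the iteration $t_u = c_3 t_{mix}$, invoke Observation~\ref{obv:inactive} or Lemma~\ref{lem:active} to get the intersection and distinctness properties, and conclude that the winning condition holds since no \emph{winner} message has been issued. Your part~(b), which traces the three-round exchange to verify that $u^*$'s id actually appears in and is maximal in $I_4$, is a detail the paper's proof asserts without elaboration, so your write-up is if anything slightly more complete.
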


\begin{proof}

Consider the iteration $i$, where the active nodes estimate $t_u = c_3t_{mix}$ $(c_3 \geq 1)$. We look at the contender node with the highest id, say $v_h$. In iteration $i$, $v_h$ can either be inactive or active depending on whether it has stopped. (If $v_h$ is inactive, we look at the iteration $j$ $(j < i)$ in which $v_h$ became inactive). We show that for either case, if no other node has elected itself as the leader in any of the earlier rounds then $v_h$ becomes leader.

Suppose that $v_h$ becomes inactive in iteration $j$ where $(j < i)$ and no other node has elected itself as the leader in any of the earlier iterations. By Observation \ref{obv:inactive}, $v_h$ satisfies both the intersection and the distinctness properties. %
Alternatively, it could be that $v_h$ is active until iteration $i$, where the active nodes estimate $t_u = c_3t_{mix}$. Also in that case, Lemma~\ref{lem:active} says that $v_h$ satisfies the intersection and the distinctness properties. %
For either case, since $v_h$ has the highest id among the contender nodes, satisfies both the distinctness and the intersection properties and none of the other nodes has elected itself as the leader in any of the earlier rounds (implying that $v_h$ has not received a \textit{winner} message), $v_h$ satisfies all the required conditions and becomes leader.
\end{proof}

\begin{lemma} [At most one leader] \label{lem:atmost}
After the completion of the algorithm, at most one contender node elects itself as the leader.
\end{lemma}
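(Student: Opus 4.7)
The plan is to argue by contradiction: assume two contenders $u$ and $v$ both elect themselves as leader, stopping in iterations $i_u \le i_v$ (without loss of generality), and derive a contradiction depending on whether these iterations coincide. The common thread in both subcases is a combinatorial bridge: since both $u$ and $v$ satisfy the intersection property, each is adjacent (in the shared-proxy sense) to at least $\frac{3}{4} c_1 \log n$ other contenders, while by Lemma~\ref{lem:contenders} the total number of contenders is at most $\frac{5}{4} c_1 \log n$ w.h.p. A one-line inclusion-exclusion then forces the existence of a common contender $w$ that shares a proxy $p_{uw}$ with $u$ and a (possibly different) proxy $p_{vw}$ with $v$.

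For the case $i_u = i_v$, I would trace the three synchronized exchange rounds of Algorithm~\ref{algo:leader2} to argue $v \in I_4(u)$: in Round~1 the proxy $p_{vw}$ places $v$ into $I_2(w)$; in Round~2 $w$ pushes $I_2(w)$ to all of its proxies, including $p_{uw}$; and in Round~3 $p_{uw}$ folds $I_2(w)\ni v$ into the $I_3$ set it returns to $u$, so $v \in I_4(u)$, and symmetrically $u \in I_4(v)$. Because the random ids are distinct w.h.p., exactly one of $u, v$ has the smaller id, and that node fails the ``largest id in $I_4$'' criterion of Line~5, contradicting the assumption that both elected themselves.

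For the case $i_u < i_v$, I would invoke the winner-message propagation specified by Lines~6--7. Upon becoming leader at the end of iteration $i_u$, $u$ dispatches a winner message through its proxies; by the combinatorial bridge above, it reaches $v$ along the four-hop chain $u \to p_{uw} \to w \to p_{vw} \to v$, each hop costing at most $T$ time, so within $4T$ time. On the other hand, the gap between $u$'s Line~5 check and $v$'s subsequent Line~5 check spans at least the $2T$ waiting buffer of Line~8 plus the full $\ge 4T$ duration of iteration $i_u + 1$, comfortably exceeding $4T$. Thus $v$ has already received the winner message before its own Line~5 check and is blocked from electing itself by the ``has not previously received any winner messages'' clause, again a contradiction.

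The main obstacle I expect is the cross-iteration bookkeeping: verifying that the intermediary $w$ (which may itself have stopped in a still earlier iteration) together with the proxies $p_{uw}, p_{vw}$ retain the required adjacency information and promptly relay the winner message, and checking that the chosen $T = \Theta(t_u \log^2 n)$ truly absorbs the $O(\log^2 n)$ congestion incurred along this four-hop chain so that each ``hop'' really costs at most $T$. The same-iteration case is essentially immediate from the intersection property; all of the delicate accounting lives in the winner-message timing for the cross-iteration case.
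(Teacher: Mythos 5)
Your proposal is correct and follows essentially the same two-case decomposition as the paper's proof: in the same-iteration case a common adjacent contender $w$ forces $v \in I_4(u)$ (and symmetrically $u \in I_4(v)$) so the smaller id fails the winning check, and in the cross-iteration case winner-message propagation to more than half of the contenders blocks any later winner via the intersection property and pigeonhole. The only cosmetic difference is that the paper does not treat the cross-iteration relay as a real-time four-hop chain completing within $4T$ of $u$'s decision; instead the final hops to $v$ occur during $v$'s own proxy exchanges in iteration $i_v$ through the ``appends it to all future messages'' clause --- which is precisely the bookkeeping you flagged as the delicate point.
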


\begin{proof}
We prove the lemma by combining the following two claims:

\begin{claim}
Two different nodes cannot elect themselves as the leader in an iteration of the algorithm.
\end{claim}
\begin{proof}

Suppose two nodes $u$ and $v$ elect themselves as the leader in the same iteration of the algorithm.
We know by the description of the algorithm that any node that becomes the leader would first need to satisfy both the intersection and the distinctness properties. %
Therefore, both contenders $u$ and $v$ would have at least $\tfrac{c_2}{2}\sqrt{n \log n}$ distinct proxies and would be adjacent to $>\frac{3}{4}c_1 \log n$, i.e., more than half of the contenders. Recall that the sets $I_2$ of $u$ and $v$ (denoted by $I_2(u)$ resp.\ $I_2(v)$) contain the ids of their adjacent contenders.
Let $w$ be a contender whose id is in the intersection of $I_2(u)$ and $I_2(v)$. As both $u$ and $v$ are adjacent to more than half of the contenders, there must be at least one such node $w$.

Without loss of generality, assume that the id of $u$ is larger than the id of $w$. Since $w \in ID_2(u)$, some proxy $p_1$ of $u$ must have also been a proxy of $w$ in this iteration. Similarly, since $w \in ID_2(v)$, some proxy $p_2$ of $v$ must have also been a proxy of $w$ in this iteration. Then, by the description of the algorithm $w$ would obtain the ids of both $u$ and $v$ in the set $I_2(w)$, %
 which it then disseminated to all its proxies. %
The proxies $p_1$ and $p_2$ both get this information $I_2(w)$ (of ids of $u$ and $v$) which is then forwarded to $u$ and $v$ respectively as sets $I_3(p_1)$ and $I_3(p_2)$ respectively. This means that $v$ must have known about $u$ while checking the winning condition and hence it knows that its id was not maximal, a contradiction.
\end{proof}

\begin{claim}
If a node elects itself as the leader in some iteration $i$, no other node can elect itself as the leader in any subsequent iteration.
\end{claim}
\begin{proof}
Suppose two nodes $u$ and $v$ elect themselves as the leader and suppose that $u$ does so in iteration $i$ whereas $v$ does so in iteration $j > i$. For this case we show that when iteration $i+1$ begins, more than half of the contender nodes are aware that some node $u$ has become the leader. If any other contender node $v$ satisfies both, the intersection and the distinctness properties, then it must have interacted with at least one of the nodes that is aware of the existence of a leader and thereby also becomes aware of the leader. This means that $v$ must have known about the existence of a leader by receiving a \textit{winner} message (either directly or indirectly), leading to a contradiction. 

If $u$ becomes leader in iteration $i$, then it immediately sends a \textit{winner} message to all its proxies, which is then immediately forwarded it to all the other adjacent contenders (see Algorithm \ref{algo:leader2}). The \textit{winner} message reaches all the adjacent contenders of $u$ before the next iteration begins (as active contenders wait for $2T$ time at the end of random walk phase). As $u$ has to satisfy both the intersection and the distinctness properties to satisfy the winning condition, the number of adjacent contenders of $u$ is $\ge \tfrac{3}{4}c_1 \log n$, which in turn is greater than half of the total number of contenders. Any other contender node that also satisfies the intersection and the distinctness properties has to have at least one intersecting proxy with at least one of the adjacent contender nodes of $u$ (by the pigeon hole principle). Any interaction with adjacent contender nodes of $u$ is accompanied with an additional \textit{winner} message notifying $v$ of the existence of a leader, and thus leading to a contradiction.
\end{proof}
This completes the proof of Lemma~\ref{lem:atmost}.
\end{proof}

Combining Lemma \ref{lem:atleast} and Lemma \ref{lem:atmost}, we obtain the following lemma that determines the correctness of the algorithm.

\begin{lemma} [Unique Leader Lemma]
With high probability and in $O(t_{mix}\log^2 n)$ time, exactly one contender becomes the leader.
\end{lemma}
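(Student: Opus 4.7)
The plan is to combine the three previously established lemmas: the Safety Lemma~\ref{lem:safety}, the At-least-one-leader Lemma~\ref{lem:atleast}, and the At-most-one-leader Lemma~\ref{lem:atmost}. First, I would condition on the high-probability event of Lemma~\ref{lem:contenders}, which guarantees that the set of contenders has size in $[\tfrac{3}{4}c_1\log n,\tfrac{5}{4}c_1\log n]$, so in particular the contender set is nonempty and all later ``more than half'' arguments are valid.

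Next I would invoke the Safety Lemma to obtain the time bound: within $O(t_{mix}\log^2 n)$ rounds, every contender has satisfied both the intersection and distinctness properties (w.h.p.) and therefore stops doubling its random-walk length. This simultaneously bounds the running time and ensures that the algorithm terminates with every contender having made a final stop/no-stop decision.

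For existence, I would apply Lemma~\ref{lem:atleast} to the (w.h.p.\ unique) contender $v_h$ with the largest id: either $v_h$ became inactive in some earlier iteration — in which case Observation~\ref{obv:inactive} tells us it satisfies both properties — or $v_h$ is still active at the iteration where $t_u = c_3 t_{mix}$, in which case Lemma~\ref{lem:active} tells us the same. In either case, since $v_h$ has the maximum id in its set $I_4$ and has received no \emph{winner} message (by the hypothesis of Lemma~\ref{lem:atleast}), $v_h$ elects itself as leader. For uniqueness, I would appeal directly to Lemma~\ref{lem:atmost}, which rules out both simultaneous election in the same iteration and delayed election in any later iteration via the propagation of \emph{winner} messages through shared proxies.

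The main obstacle is not any new argument but rather a clean accounting of the high-probability events: the bound on the number of contenders (Lemma~\ref{lem:contenders}), the uniqueness of the generated ids, the intersection-and-distinctness properties holding by time $c_3 t_{mix}$ (Lemma~\ref{lem:active}), and the congestion-free execution within the $T = O(t_u \log^2 n)$ budget. A single union bound over these $\operatorname{poly}(n)$-many events, each failing with probability at most $1/n^{\Omega(1)}$, yields the claimed $1 - o(1)$ success probability and closes the proof.
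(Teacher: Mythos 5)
Your proposal is correct and matches the paper's approach exactly: the paper derives this lemma simply by combining Lemma~\ref{lem:atleast} (at least one leader) and Lemma~\ref{lem:atmost} (at most one leader), with the time bound coming from the Safety Lemma~\ref{lem:safety}. Your additional care in conditioning on Lemma~\ref{lem:contenders} and union-bounding the high-probability events is a slightly more explicit accounting than the paper provides, but it is the same argument.
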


\begin{lemma} [Message Complexity Lemma] \label{lem:msgcomplexity}
With high probability, the total number of messages sent by the above algorithm is at most $O(\sqrt{n} \log^{7/2} n \cdot t_{mix})$. If larger message size of $O(\log^3n)$ is allowed the total number of messages comes down to $O(\sqrt{n} \log^{3/2} n \cdot t_{mix})$.
\end{lemma}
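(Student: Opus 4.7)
The approach is to partition all edge-messages by the iteration of the guess-and-double doubling strategy in which they are sent, bound each iteration separately, and then sum the resulting geometric series. Throughout, I would condition on the high-probability event from Lemma~\ref{lem:contenders} that the set $C$ of contenders satisfies $|C| = O(\log n)$; failure contributes only a $o(1)$ additive error.

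Fix an iteration with current walk-length estimate $t_u$. There are three kinds of messages to account for: (i) token forwarding across all $|C| \cdot c_2 \sqrt{n \log n}$ parallel random walks, each of length $t_u$ and each carrying an $O(\log n)$-bit token; (ii) the three rounds of information exchange between each contender and its $c_2 \sqrt{n \log n}$ proxies, where the exchanged sets $I_1, I_2, I_3$ each contain at most $|C| = O(\log n)$ contender ids and hence $O(\log^2 n)$ bits, traveling along the walk path of length $t_u$ in either direction; and (iii) the one-shot propagation of winner messages, which is subsumed by the other terms. In the standard $\mathcal{CONGEST}$ model, an $O(\log^2 n)$-bit id set must be split into $O(\log n)$ constituent $O(\log n)$-bit messages, incurring an extra $\log n$ factor that disappears when $O(\log^3 n)$-bit messages are permitted (in which case each id set is transmitted as a single edge-message).

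Summing these contributions across one iteration and then over all iterations produces a geometric series in $t_u$. By Lemma~\ref{lem:active} and Lemma~\ref{lem:safety}, with high probability the algorithm terminates no later than the iteration with $t_u = O(t_{mix})$, so the sum is dominated by its last term and the cumulative walk length contribution collapses to $O(t_{mix})$. Combining this with the per-iteration count yields the claimed bounds of $O(\sqrt{n} \log^{7/2} n \cdot t_{mix})$ in $\mathcal{CONGEST}$ and $O(\sqrt{n} \log^{3/2} n \cdot t_{mix})$ with $O(\log^3 n)$-bit messages.

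The main obstacle is the congestion analysis. At a single intermediate node, tokens and id-information from up to $|C| = O(\log n)$ distinct contenders can compete for the same outgoing edge, with each carrying up to $O(\log^2 n)$ bits, inducing an $O(\log^2 n)$-round delay in $\mathcal{CONGEST}$. I would need to argue two things: (a) this delay is absorbed by the per-iteration budget $T = O(t_u \log^2 n)$, so that walks still complete in synchrony with the doubling schedule; and (b) the optimization of transmitting one token together with a count (rather than retransmitting each colliding token from the same contender individually at a given node) ensures that only the \emph{time} is inflated by congestion, never the cumulative edge-message count, which remains bounded by the number of walk-steps times the $\log n$ splitting factor for id payloads.
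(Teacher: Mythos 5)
Your proposal is correct and follows essentially the same route as the paper: decompose the messages by type (random-walk tokens, the $I_1,I_2,I_3$ exchanges, winner messages), absorb the guess-and-double iterations into a constant factor via the geometric sum terminating at $t_u=O(t_{mix})$ (Lemmas~\ref{lem:active} and~\ref{lem:safety}), and handle $\mathcal{CONGEST}$ congestion with the token-plus-count trick and by splitting the $O(\log^2 n)$-bit id sets, with the delay absorbed by $T=O(t_u\log^2 n)$. One minor remark: your own accounting (a single extra $\log n$ splitting factor on top of the $O(\sqrt{n}\log^{3/2}n\cdot t_{mix})$ large-message baseline) actually gives $O(\sqrt{n}\log^{5/2}n\cdot t_{mix})$ rather than the $\log^{7/2}$ you state --- the paper arrives at $\log^{7/2}$ by charging the full $O(\log^2 n)$ per-node congestion delay as a message multiplier --- but since the lemma claims only an upper bound, both accountings establish it.
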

\begin{proof}
To calculate the message complexity, we look at the various messages that are sent by the algorithm. Considering Algorithm \ref{algo:leader2}, we observe that all the information is sent only along the random walks. The messages that are sent include the random walk tokens, the sets $I_1, I_2$ and $I_3$, the Boolean $d$ and the \textit{winner} messages. 
In each phase (iteration), the maximum number of steps taken by any of these messages is proportional to the estimate of the length of the random walk $t_u$. The maximum possible estimate is $O(t_{mix})$ (c.f. Lemma \ref{lem:safety}) and as this estimate is chosen in a guess-and-double style which only increases the overall complexity to a constant factor of the maximum guess for a successful trial, the overall number of steps taken throughout the algorithm (without accounting for congestion) by any of these messages is $O(t_{mix})$ as well.  

Individually, the Boolean $d$ and the \textit{winner} messages are of $O(1)$ bits and the random walk tokens are of $O(\log n)$ bits. Since the ids of the contenders are of $O(\log n)$ bits and number of contenders is $\le \tfrac{5}{4}c_1 \log n$ (c.f. Lemma \ref{lem:contenders}), the sets  $I_1, I_2$ and $I_3$ can be of size $O(\log^2 n)$ as they can contain the ids of $O(\log n)$ other contenders. This implies that an intermediate node might receive up to $O(\log n)$ many $O(\log ^2 n)$ sized messages. 

\noindent First, consider the case where $O(\log^3 n)$ message sizes are allowed to be sent over an edge.
Each contender node ($O(\log n)$ many) initiates a total of $O(\sqrt{n \log n})$ messages which backtracks after reaching the proxies taking a total of $O(t_{mix})$ steps. Additionally, the \textit{winner} message also takes only $O(t_{mix})$ many steps. As there would be no congestion, the message complexity here would be $O(\log n) \times O (\sqrt{n \log n}) \times O(t_{mix})$ w.h.p. , which equals $O(\sqrt{n} \log^{3/2} n \cdot t_{mix})$.

Now, we consider the standard $\mathcal{CONGEST}$ model where message sizes are restricted to $O(\log n)$. Firstly, during the execution of the random walk, a contender node $u$ does not send $ O (\sqrt{n \log n})$ different tokens for each random walk, but rather sends only one token along with a count of tokens that need to be sent in a particular path. For multiple instances of the variable $d$ originating from different proxies of the same contender, only the summation value is sent (which is $O(\log n)$). Multiple messages coming from either the same or different nodes could possibly lead to congestion. For messages that have the same destination, we send only one distinct copy of id information over a particular edge (i.e. we use a filtering and forwarding technique wherein if an intermediate node has sent the information to a particular destination once it does not send the same information again to that destination). For messages having different destinations, there is a possibility that $O(\log n)$ many messages of $O(\log^2 n)$ size could arrive at a particular intermediate node. Larger sized messages of $O(\log ^2 n)$ bits would have to be broken down into $O(\log n)$ sized messages, i.e. we can assume that each  $O(\log n)$ sized message contains the information of the id of a node and some additional $O(1)$ bits. The maximum delay possible here an at intermediate node is $O(\log^2 n)$. We note that we use the variable $T=O(t_{mix}\log^2 n)$ in the algorithm to deal with this possible delay. Hence, the number of messages sent in the worst case is $O(\log n) \times c_2 \sqrt{n \log n} \times t_{mix} \times O(\log^2 n)$ w.h.p. , which equals $O(\sqrt{n} \log^{7/2} n \cdot t_{mix})$.
\end{proof}

We conclude with the following theorem that combines the results of all the previous lemmas.

\begin{theorem} \label{thm:main}
For any given graph $G$ that has a mixing time of $t_{mix}$, there exists an implicit leader election algorithm that succeeds w.h.p. in $O(t_{mix}\log^2 n)$ time and has a message complexity of $O( \sqrt{n} \log^{7/2} n \cdot t_{mix} )$, assuming that nodes know $n$.%
\end{theorem}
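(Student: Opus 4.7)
The proof strategy is to assemble the pieces already established rather than perform fresh analysis. The algorithm referenced in the statement is specified by Algorithm~\ref{algo:leader1} followed by repeated invocations of Algorithm~\ref{algo:leader2}, with the guessed length $t_u$ doubled each time a contender fails to satisfy both the intersection and distinctness properties. My plan is to invoke three main lemmas and bind their guarantees together by a union bound.

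First, for correctness I would cite the Unique Leader Lemma, which itself follows from Lemma~\ref{lem:atleast} (at least one leader is elected) and Lemma~\ref{lem:atmost} (at most one leader is elected), to conclude that with high probability exactly one contender raises its flag. This argument implicitly relies on the Contender Lemma (Lemma~\ref{lem:contenders}) guaranteeing that the contender set is nonempty and of size $\Theta(\log n)$, which is the step where the assumption that $n$ is known enters: nodes need $n$ both to set the self-election probability $c_1 \log n / n$ and to initiate the correct number $c_2 \sqrt{n \log n}$ of parallel random walks.

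Next, I would invoke the Safety Lemma (Lemma~\ref{lem:safety}) for the running time. Since each active contender satisfies both stopping properties by the iteration in which $t_u = c_3 t_{mix}$ (by Observation~\ref{obv:inactive} together with Lemma~\ref{lem:active}), and the guess-and-double schedule inflates the largest value of $t_u$ by at most a constant factor, termination happens within $O(t_u \log^2 n) = O(t_{mix} \log^2 n)$ rounds, where the extra $\log^2 n$ factor is the slack $T = \Theta(t_u \log^2 n)$ built in to absorb $\mathcal{CONGEST}$ delays. For the message complexity, I would directly cite Lemma~\ref{lem:msgcomplexity}, which gives the $O(\sqrt{n} \log^{7/2} n \cdot t_{mix})$ bound for $O(\log n)$-sized messages.

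The only remaining step is a union bound. All the constituent lemmas are of the form ``event $E_i$ holds with probability at least $1 - 1/\mathrm{poly}(n)$'', and only polynomially many such events accumulate across the $O(\log n)$ doubling phases (the contender concentration, the intersection and distinctness properties per phase, the winner-message propagation, and the congestion accounting). Consequently the joint failure probability remains $1/\mathrm{poly}(n)$, which suffices to preserve the w.h.p.\ guarantee in the theorem statement. There is no genuine obstacle specific to this theorem; the technical weight lies entirely in the preceding lemmas, and the only subtlety worth flagging is ensuring that the high-probability events compose consistently within a single execution, which the union bound handles cleanly.
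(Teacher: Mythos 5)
Your proposal is correct and matches the paper's approach exactly: the paper gives no separate proof of Theorem~\ref{thm:main}, stating only that it ``combines the results of all the previous lemmas,'' which is precisely the assembly you carry out (Unique Leader Lemma for correctness, Safety Lemma for the $O(t_{mix}\log^2 n)$ time bound, Lemma~\ref{lem:msgcomplexity} for the message bound, and a union bound over the polynomially many high-probability events). Your additional remark about where the knowledge of $n$ enters is a correct and slightly more explicit observation than the paper makes at this point.
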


After finding the leader we can use the well known push-pull broadcast \cite{Karp:2000:RRS:795666.796561} to disseminate the id of the leader to all the other nodes to obtain a solution for the explicit variant of leader election. 

\begin{corollary}
For any graph $G$ that has a conductance of $\phi$ and a mixing time of $t_{mix}$, there exists an explicit leader election algorithm that succeeds w.h.p. in $O(t_{mix}\log^2 n)$ time and has a message complexity of $O( \sqrt{n} \log^{7/2} n \cdot t_{mix} + n\frac{\log n}{\phi})$, assuming that nodes know $n$ and there are no failures.
\end{corollary}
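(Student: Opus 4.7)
The plan is to sequentially compose the implicit leader election algorithm of Theorem~\ref{thm:main} with a gossip-style broadcast that disseminates the leader's identifier to the rest of the network. Since every node knows $n$ and all nodes wake up simultaneously, they can agree on a deterministic round by which the first phase has provably terminated; at that round they switch in lock-step to a broadcast phase, so the two sub-protocols never run concurrently and their message/time costs simply add.

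First, I would invoke Theorem~\ref{thm:main} to elect a unique leader with high probability in $O(t_{mix}\log^2 n)$ rounds using $O(\sqrt{n}\log^{7/2} n \cdot t_{mix})$ messages. At the end of this phase, each non-leader has declared itself a non-leader but does not yet know the leader's id, and the leader alone holds the information that must be broadcast.

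Second, the elected leader initiates the push-pull rumor-spreading protocol of Karp et al.~\cite{Karp:2000:RRS:795666.796561} to propagate its id. Applying the standard conductance-based analysis of push-pull, with high probability every node learns the id within $O(\log n / \phi)$ rounds, and since each node performs only $O(1)$ push/pull exchanges per round during the broadcast, the total number of messages sent in this phase is $O(n \log n / \phi)$. Using the inequality $1/\phi \le t_{mix}$ from~\eqref{eq:phimix}, the broadcast running time $O(\log n / \phi)$ is absorbed into the first-phase bound $O(t_{mix}\log^2 n)$. Summing the two phases yields time $O(t_{mix}\log^2 n)$ and message complexity $O(\sqrt{n}\log^{7/2}n \cdot t_{mix} + n\log n / \phi)$, matching the stated bounds.

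The main obstacle is cleanly bounding the broadcast's message complexity without assuming knowledge of $\phi$. The natural fix is to let each node participate in push/pull for only a bounded window after it first receives the rumor (enough for the conductance analysis to apply) and to use the synchronous deadline inherited from the leader-election phase as a global cut-off. This avoids early termination that could leave uninformed nodes and late termination that would blow up the per-round $O(n)$ message cost beyond the $O(n\log n / \phi)$ target.
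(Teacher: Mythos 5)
Your proposal is correct and matches the paper's own argument: the paper likewise appends a push-pull broadcast (costing $O(\log n/\phi)$ time and $O(n\log n/\phi)$ messages) to the implicit algorithm of Theorem~\ref{thm:main} and uses $1/\phi \le O(t_{mix})$ from~\eqref{eq:phimix} to absorb the broadcast time. Your closing caveat about running the broadcast without knowing $\phi$ is a reasonable extra consideration, but the paper does not address it either, so no further work is needed to match its level of rigor.
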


\begin{proof}
The corollary follows by appending a simple push-pull broadcast procedure \cite{stacs2011_conductance} at the end of the implicit leader election algorithm. The push-pull broadcast takes $\tfrac{\log n}{\phi}$ time and  $n\tfrac{\log n}{\phi}$ messages. From equation \ref{eq:phimix}, we know that $\Theta(1/\phi) \le t_{mix} \le \Theta(1/{\phi}^2)$, %
it implies $\tfrac{\log n}{\phi} \le O(t_{mix}\log^2 n)$. Therefore the running time of leader election dominates the running time for broadcast.
\end{proof}

\section{Lower bounds}\label{sec:lowerbound}
In this section, we show the lower bounds for implicit leader election by showing that there exists a class of graphs with conductance $\phi$ on which any leader election algorithm that succeeds with probability $1- o(1)$ requires $\Omega\left( {\sqrt{n}}/{(\phi)^{3/4}} \right)$ messages in expectation.
We also obtain some corollaries that lower bound the total number of messages required by other graph problems like broadcast and spanning tree construction.

\begin{theorem} \label{thm:lb}
 Suppose there is a randomized leader election algorithm that succeeds with probability $1 - o(1)$ in $n$-node networks where each node has a unique ID and knows the network size $n$.
  Then, for every $\alpha$, where $\tfrac{1}{n^2} < \alpha < \tfrac{1}{12^2}$, there exists a graph $G$ of $\Theta(n)$ nodes and conductance $\phi = \Theta(\alpha)$ such that
  the algorithm requires $\Omega\left( {\sqrt{n}}/{\phi^{3/4}} \right)$ messages in expectation.
\end{theorem}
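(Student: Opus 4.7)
The plan is to construct, for each admissible $\alpha$, a ``dumbbell-shaped'' hard graph $G_\alpha$ on $\Theta(n)$ nodes whose conductance is $\Theta(\alpha)$, and then derive the message lower bound by exploiting the symmetry between the two halves. Specifically, I would take two near-cliques $A$ and $B$ on roughly $n/2$ nodes each and join them by a bipartite bottleneck of $s = \Theta(\alpha n^2)$ edges. The volume of each half is $\Theta(n^2)$ and the minimum cut is realised at the bottleneck, so the conductance is $\Theta(s/n^2) = \Theta(\alpha) = \Theta(\phi)$; the hypothesis $\alpha > 1/n^2$ ensures $s \ge 1$ (so that $G_\alpha$ is connected), while $\alpha < 1/12^2$ keeps the cut strictly smaller than each half's internal edge count. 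Crucially, $G_\alpha$ admits an automorphism swapping $A$ with $B$, which is the symmetry the lower bound exploits.

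The heart of the argument is a coupling between a run of the algorithm on $G_\alpha$ and two independent runs on $A$ and $B$ viewed as isolated graphs. Using the principle of deferred decisions on random IDs, internal coin flips, and port labels, these three processes can be coupled so that they agree exactly until the first moment a message crosses the bottleneck. Since the algorithm succeeds with probability $1-o(1)$ on $A$ alone and on $B$ alone, both isolated runs elect a unique leader w.h.p. Consequently, if with non-vanishing probability no bottleneck-crossing message is ever sent during the $G_\alpha$ run, then $G_\alpha$ elects two leaders and correctness is violated. The algorithm is therefore forced to send enough messages to induce cut crossings with overwhelming probability. To turn this into the $\Omega(\sqrt{n}/\phi^{3/4})$ bound, I would combine (i) a Kutten-et-al.-style birthday-paradox argument forcing $\Omega(\sqrt{n})$ messages per half for internal symmetry breaking, with (ii) the observation that, under an adversarial port labelling, any given message leaves a node via a cut edge only with probability $O(s/n^2) = O(\phi)$. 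A joint accounting of these two effects, followed by Markov's inequality to pass from a high-probability bound to an expectation bound, yields the claimed $\Omega(\sqrt{n}/\phi^{3/4})$.

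The main obstacle I expect is producing the specific exponent $3/4$ rather than $1/2$ or $1$. A naive decomposition --- ``$\Omega(\sqrt{n})$ internal plus $\Omega(1/\phi)$ across the cut'' --- only yields the weaker additive bound $\Omega(\sqrt{n} + 1/\phi)$. Obtaining the multiplicative bound $\sqrt{n}/\phi^{3/4}$ will require a joint optimisation over how many messages each would-be leader spends establishing its internal claim versus attempting cut crossings, with the port-labelling adversary responding simultaneously at both scales. A secondary subtlety is that the coupling must track not merely the first cut crossing but enough bottleneck traffic for the two halves' random ID sets to interact non-trivially, since a single crossing typically carries too little information to break the symmetry --- this second layer of birthday-style argument is likely where the extra factor $\phi^{-1/4}$ beyond the obvious $\phi^{-1/2}$ emerges. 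Formalising the joint argument while keeping both halves mutually hidden for as long as possible, via a careful deferred-decisions exposition of the algorithm's randomness, is where the technical weight of the proof will lie.
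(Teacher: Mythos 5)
There is a genuine gap, and it is exactly the one you flag as ``the main obstacle'': your two-block dumbbell cannot yield the multiplicative bound $\sqrt{n}/\phi^{3/4}$, and the paper resolves this by using a completely different construction. In your graph the two costs really are additive --- $\Omega(\sqrt{n})$ cheap intra-block messages for symmetry breaking plus $\Omega(1/\phi)$ messages of blind port search for one cut crossing --- and no joint optimisation or ``second layer of birthday argument'' will multiply them, because the adversary has no way to make every internal message also pay the port-search penalty. The paper instead sets $\epsilon = \log(1/\alpha)/(2\log n)$ and builds $G$ from $N = n^{1-\epsilon}$ cliques of size $n^{\epsilon}$ arranged as a random $4$-regular ``super-node'' graph, giving $\phi = \Theta(n^{-2\epsilon}) = \Theta(\alpha)$. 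Each clique has $n^{2\epsilon}$ ports of which only $4$ lead outside, so discovering even one inter-clique edge costs $\Omega(n^{2\epsilon}) = \Omega(1/\phi)$ expected messages (Lemma~\ref{lem:unexplored}); and a birthday-paradox argument over the $N$ cliques shows that unless $\Omega(\sqrt{N})$ inter-clique edges are discovered, the connected components of ``spontaneous'' cliques never merge (Lemma~\ref{lem:disjoint}), in which case a near-independence argument shows that zero or at least two leaders arise with constant probability. The product $\sqrt{N}\cdot n^{2\epsilon} = n^{1/2+3\epsilon/2} = \sqrt{n}\,\phi^{-3/4}$ is precisely where the exponent $3/4$ comes from: it is the interplay between the \emph{number} of blocks and the \emph{per-block discovery cost}, both of which are tuned by $\epsilon$. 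With only two blocks this trade-off does not exist.

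A second, independent problem is your coupling to ``two independent runs on $A$ and $B$ viewed as isolated graphs.'' The theorem assumes nodes know the network size $n$, and the algorithm is only guaranteed to succeed on $n$-node networks; each half of your dumbbell has $\Theta(n/2)$ nodes, so the isolated runs are executions on inputs for which the algorithm promises nothing (and if you instead feed the half-graphs the wrong value of $n$ to make the coupling exact, they are no longer valid executions at all). The paper's own Theorem~\ref{thm:needn} shows this dumbbell-plus-indistinguishability route is exactly the argument that works only when $n$ is \emph{unknown}. The paper avoids the issue entirely by never running the algorithm on a smaller graph: all reasoning takes place inside the single $n$-node execution, where the ``spontaneous'' components are defined by the nodes' own random strings and are shown to evolve nearly independently (Lemmas~\ref{lem:almostindependent}--\ref{lem:morethanone}), after which an anti-concentration argument over $N$ nearly independent indicators yields zero or multiple leaders with constant probability.
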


We prove the above theorem through a contradiction. Given a particular $n$ and a value of $\alpha$ (within a specified range), we first construct a lower bound graph with $n$ nodes and conductance $\phi = \Theta(\alpha)$. Then,  we assume towards a contradiction that there exists an algorithm that solves implicit leader election on the graph $G$ by sending at most $o({n^{(1-\epsilon)/2}})$ messages in expectation. The key intuition of the proof is to show that given this message budget, different distinct parts of the network are unable to communicate with one another. This lack in communication ensures that either all symmetric parts elect a leader or they do not. If all of the distinct parts do elect a leader it would imply more than one leaders, and if none of them elect a leader it implies zero leaders. Thus, leading to a contradiction.
In the final step of the proof, we leverage the assumed upper bound on the expected message complexity, to show that distinct parts of the network, where nodes might be initiating the exploration of their neighborhoods, are likely to never communicate, and this lack of communication results in having no leaders or multiple leaders with constant probability.

Throughout the proof of Theorem~\ref{thm:lb}, we assume that nodes start \emph{without} unique ids. However, since nodes have knowledge of the network size $n$, it is straightforward to generate unique IDs with high probability. 
Hence we can use the same reduction as \cite{Chatterjee:2018:CLE:3154273.3154308} (Sec.~3, paragraph ``Unique IDs vs Anonymous'')  to remove this assumption and show that our result holds even when nodes are equipped with unique ids.

\subsection{The lower bound graph(s)} \label{sec:lbg}

\para{Graphs $G$ and $\GS$} We start out by describing the construction of the graph $G$ that we use to prove the message complexity lower bound. 
For any given $n$ and $\alpha$ such that $\left(\frac{1}{n^2}\right) < \alpha < \left(\frac{1}{12^2}\right) $, we create the graph $G$ that has a total number of $n$ nodes and a conductance $\phi = \Theta(\alpha)$. In this regard, we also define a parameter $\epsilon = \left(\frac{\log (1/\alpha)}{2\log n}\right)$.

We first construct a super-node graph $\GS$ with $N = \lfloor n^{1-\epsilon} \rfloor$ super-nodes, and later derive the graph $G$ from $\GS$. The graph $\GS$ is created as a random regular graph (as in \cite{bollobas2001random},\cite{Bollobas}) where each super-node has a degree $4$. See Figure \ref{fig:super}. For the purpose of analysis, since it does not change our bounds, we assume that both $n^{1-\epsilon}$ and $n^{\epsilon}$ are integers.

\begin{figure}
	\centering
	\includegraphics[scale=0.3]{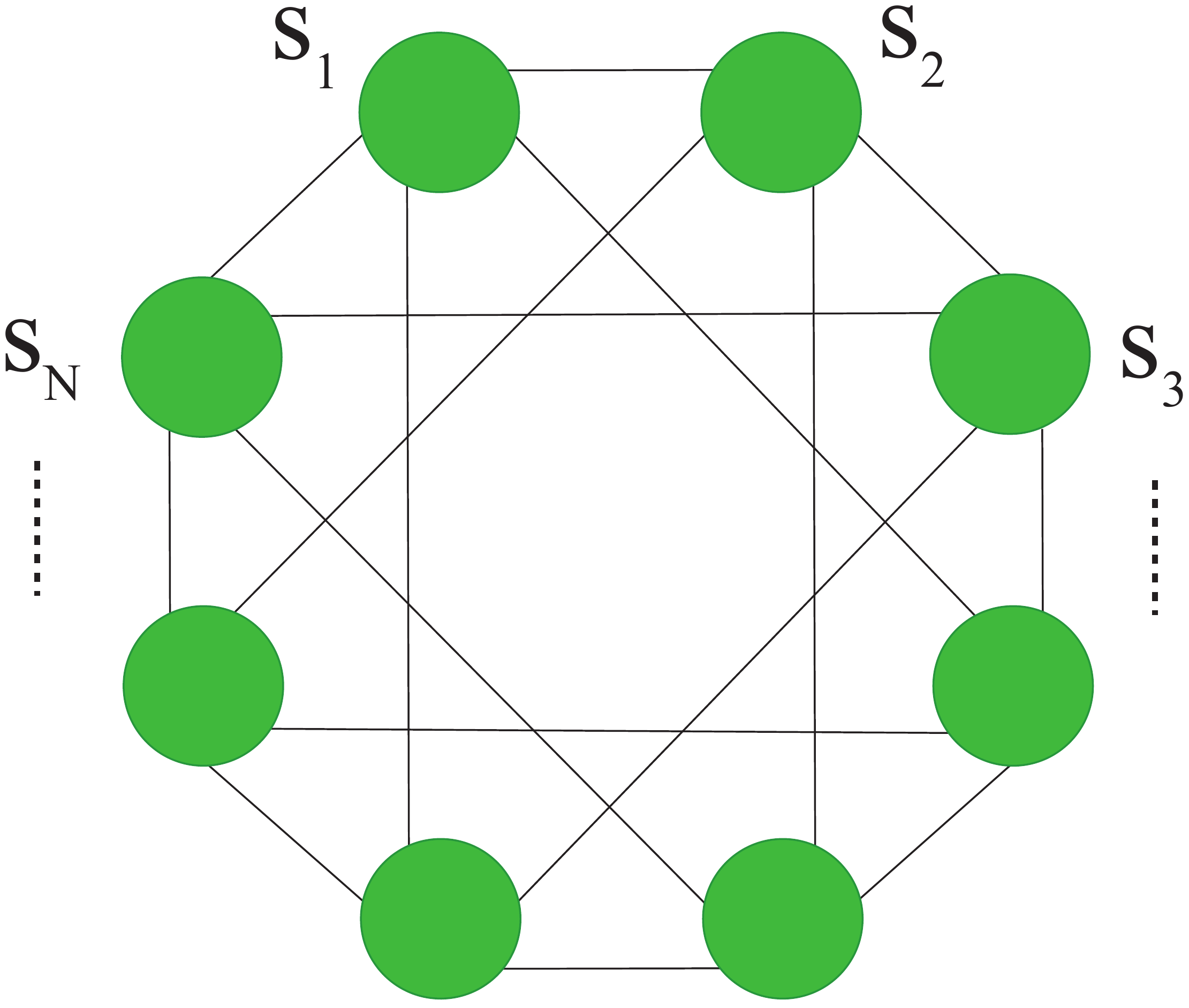}
	\caption{A random 4-regular super-node graph $\GS$ with $N=n^{1-\epsilon}$ super-nodes.}
\label{fig:super}
\end{figure}

Say $V(\GS)= \{s_1,s_2,\dots s_{n^{1-\epsilon}}\}$ and $E(\GS)=\{e_1,e_2,\dots,e_{4n^{1-\epsilon}}\}$ be the vertex set and the edge set of the graph $\GS$. To create the graph $G$ from $\GS$, each super-node $s_i$ is replaced with a clique $C_i$ of $\lceil n^{\epsilon} \rceil$ nodes. For each edge $e_i$ of $\GS$, that exists between two super-nodes say $s_j$ and $s_k$, a corresponding edge $e'_i$ is created in the graph $G$ between a (previously unchosen) node chosen randomly from the clique $C_j$ and a (previously unchosen) node chosen randomly from the clique $C_k$.  %
As each super-node has exactly $4$ edges connected to it, for each clique there would exist $4$ such chosen nodes (called external-edged nodes). An edge between any two nodes belonging to the same clique is called an intra-clique edge, whereas an edge between nodes belonging to different cliques is called an \emph{inter-clique edge}. To maintain uniform node degrees of exactly $ n^{\epsilon} $, two intra-clique edges are removed, one from between any two of the external-edged nodes, and the other from between the remaining two external-edged nodes. See Figure \ref{fig:graph}. %
Thus, in any clique of the graph $G$ there would be two types of nodes, $n^\epsilon -4$ nodes with only intra-clique edges called internal-edged nodes and $4$ nodes  with both intra-clique edges and one inter-clique edge called external-edged nodes.

\begin{figure}
	\centering
	\includegraphics[scale=0.5]{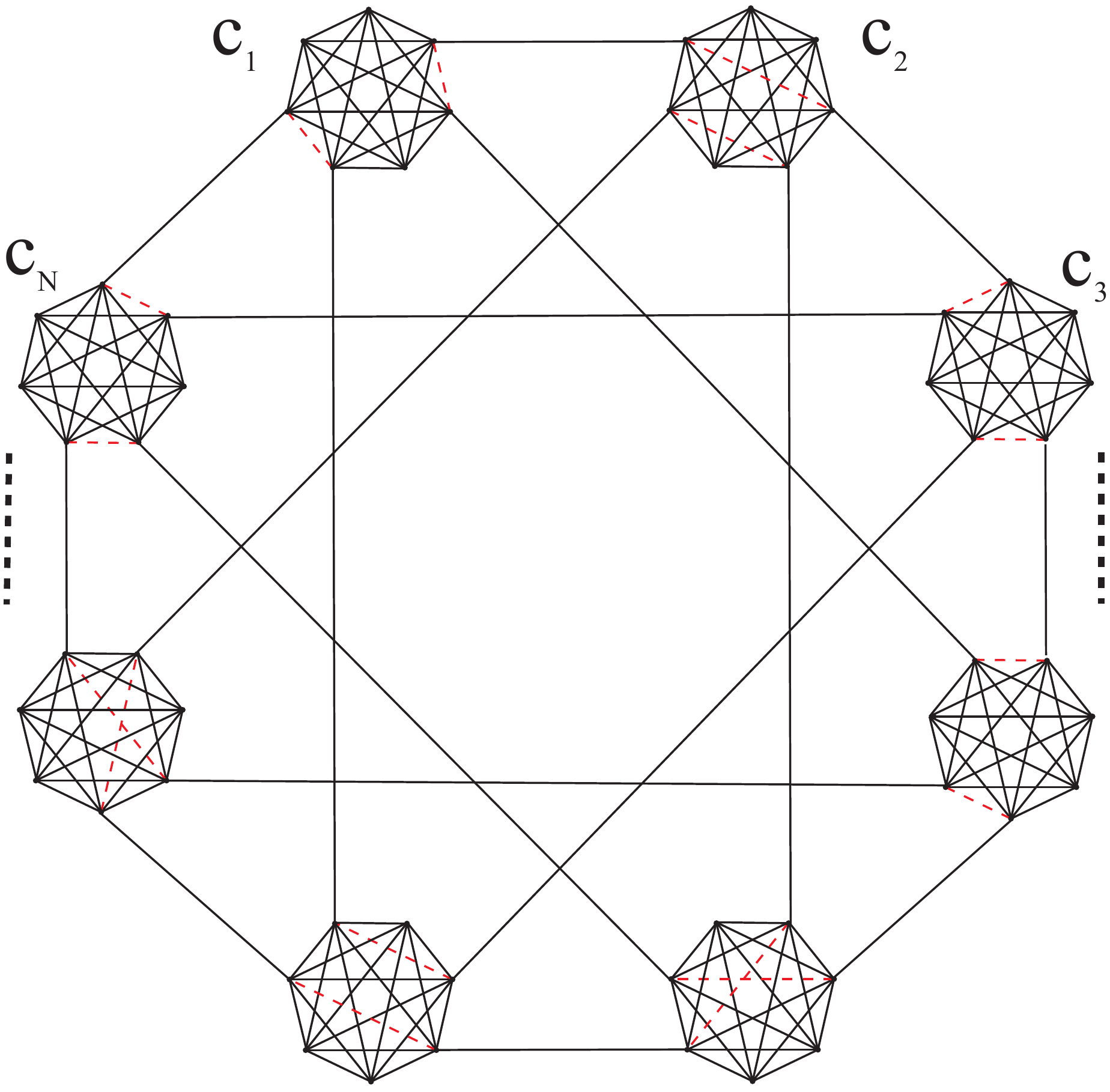}
	\caption{Graph $G$ constructed from $\GS$, where each super-node is replaced by a clique  of size $n^\epsilon$. The red dashed lines indicate the edges of the cliques that are removed to maintain uniform node degree. }
	\vspace{-1em}
\label{fig:graph}
\end{figure}

Based on the construction of $G$, there exists a one to one mapping between the nodes of the super-node graph $\GS$ and the cliques in the graph $G$. 

\para{Graph $\CG$} We define the \emph{clique communication graph $\CG$} as a graph whose vertex set is equivalent to the vertex set of the super-node graph $\GS$, which we simply call \emph{cliques}.
An edge exists in $\CG$ from clique $C_1$ to $C_2$ iff a message is sent on the inter-clique edge from some node in $C_1$ to a node in $C_2$.  
Note that the edge set of $\CG$ can grow over the course of the algorithm.
For the purpose of our analysis, we only keep track of the first message sent on an inter-clique edges and so we treat $\CG$ as a simple  graph. 

\subsection*{High-Level Overview of the Lower Bound Proof} 
We begin by showing that the conductance of the constructed lower bound graph $G$ is $\phi = \Theta(1/n^{2\epsilon})$. Then, we assume towards a contradiction that there exists an algorithm that solves implicit leader election on the graph $G$ by sending at most $M n^{2\epsilon}$ messages in expectation, where $M=o({n^{(1-\epsilon)/2}})$. It implies from the construction of $G$, that $M n^{2\epsilon} = o(\sqrt{n}/\phi^{3/4})$, as $\phi = \Theta(1/n^{2\epsilon})$ and $M=o({n^{(1-\epsilon)/2}})$. Next, on the graph $G$ we show that any algorithm that sends at most $M n^{2\epsilon}$ many messages in expectation, is likely to find at most $O(M)$ inter-clique edges. Then, given the fact that only inter-clique edges can be used for communicating in the clique communication graph $\CG$, we show that the random variables representing the states of the resulting connected components (in $\CG$) are nearly independent of one another. We leverage this ``near independence'' to show that the algorithm is likely to elect either no leader or more than one leader with constant probability (similarly to identically distributed and fully independent indicator random variables), thus resulting in a contradiction. We formalize this overview in the remainder of this section.

\begin{lemma} \label{lem:phi}
  The conductance of the graph $G$ is $\phi = \Theta(\alpha) = \Theta({1}/{n^{2\epsilon}})$ with high probability.
\end{lemma}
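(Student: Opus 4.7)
The plan is to prove matching upper and lower bounds on $\phi(G)$. For the upper bound, I will exhibit an explicit bad cut: take $U = C_i$ for any single clique. Since every node of $G$ has degree exactly $n^\epsilon$, $\vol(U) = n^{2\epsilon}$, and by construction exactly $4$ inter-clique edges leave $C_i$. Hence the cut-conductance is $4/n^{2\epsilon} = O(\alpha)$, which already gives the upper bound.

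For the lower bound I will use the well-known fact that a random $4$-regular graph has constant conductance $\phi_{\GS} = \Omega(1)$ with high probability (this is where the randomness in the construction of $\GS$ enters). Fix any cut $(U, V \setminus U)$ in $G$ with $\vol(U) \leq \vol(V \setminus U)$ and let $u = |U| \leq n/2$. For each super-node $i$, set $a_i = |U \cap C_i|$, and partition the cliques into $T = \{i : a_i = n^\epsilon\}$, $Q = \{i : a_i = 0\}$, and the split set $S = N \setminus (T \cup Q)$. The key idea is a case split: either the cut slices enough cliques that the dense intra-clique edges alone supply many cut edges, or the cut is essentially clique-aligned and I can inherit expansion from $\GS$. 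Concretely, if $|S| \geq u/(c\,n^\epsilon)$ for a suitable constant $c$, then since $a_i \in [1, n^\epsilon - 1]$ each split clique contributes at least $a_i(n^\epsilon - a_i) - 2 \geq n^\epsilon - 3 = \Omega(n^\epsilon)$ intra-clique cut edges, yielding $|E_\mathcal{K}| = \Omega(u)$ and cut-conductance $\Omega(1/n^\epsilon)$, which is stronger than what is needed. Otherwise $|S| < u/(c\,n^\epsilon)$, and from $u = |T|\,n^\epsilon + \sum_{i \in S} a_i$ I deduce $|T| \geq (1 - 1/c)\,u/n^\epsilon = \Omega(u/n^\epsilon)$; the hypothesis $u \leq n/2$ further forces $|T| \leq N/2$. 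Expansion of $\GS$ then gives $|E_{\GS}(T, N \setminus T)| \geq 4\phi_{\GS}|T|$; at most $4|S|$ of these super-edges end in $S$, so for $c$ chosen larger than $1 + 2/\phi_{\GS}$ at least a constant fraction go from $T$ to $Q$. Each such super-edge corresponds to an inter-clique edge of $G$ whose two clique-endpoints sit entirely on opposite sides of $U$, so it is automatically cut. Hence $|E_\mathcal{K}| = \Omega(|T|) = \Omega(u/n^\epsilon)$, giving cut-conductance $\Omega(1/n^{2\epsilon}) = \Omega(\alpha)$.

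The main obstacle is tuning the constant $c$ so that the ``leakage'' of $\GS$-edges into the split set $S$ is dominated by the expansion gain; this is the only step where the random construction of $\GS$ is really invoked. Minor edge cases (such as very small $n^\epsilon$, or the knife-edge $u = n/2$) are absorbed into constants with no change of ideas. Combining the two directions yields $\phi(G) = \Theta(1/n^{2\epsilon}) = \Theta(\alpha)$ w.h.p., as claimed.
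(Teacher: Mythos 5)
Your proof is correct, but it takes a genuinely different route from the paper's. The paper argues structurally about the \emph{optimal} cut: it shows by a local exchange argument that any cut slicing through a clique can be strictly improved by moving the minority side of that clique across the cut (with a case analysis over whether the moved nodes enter or leave the min side, whether the min side flips to become the max side, and a separate treatment when the min side would be emptied, all under the threshold $\phi_{\mathcal{K}}<1/6$); it then places clique-aligned cuts of $G$ in bijection with cuts of $\GS$, computes $\phi_{\mathcal{K}_G}=4\phi_{\mathcal{K}_{\GS}}/n^{2\epsilon}$, and invokes the constant conductance of random $4$-regular graphs. You instead bound \emph{every} cut directly via the standard quotient-graph argument: partition cliques into fully-inside, fully-outside, and split, and case-split on $|S|$ --- if many cliques are split, the $\Omega(n^{\epsilon})$ intra-clique crossing edges per split clique already give cut-conductance $\Omega(1/n^{\epsilon})$ (strictly stronger than needed); if few are split, $T$ captures almost all of $U$, $|T|\le N/2$, and the expansion of $\GS$ minus the $4|S|$ possible leakage edges into $S$ yields $\Omega(u/n^{\epsilon})$ genuinely cut inter-clique edges. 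Your constant bookkeeping checks out ($c>1+1/\phi_{\GS}$ already suffices; $n^{\epsilon}=1/\sqrt{\alpha}>12$ handles the $-2$ from the removed intra-clique edges and the small-$u$ edge cases), and your explicit single-clique cut gives the matching upper bound that the paper obtains implicitly from the bijection. What each approach buys: yours avoids reasoning about an optimal cut entirely and sidesteps the delicate multi-case exchange argument, at the price of a slightly looser bookkeeping constant; the paper's exchange argument yields the cleaner structural fact that the optimal cut is clique-aligned, which it then reuses to get the exact conductance correspondence with $\GS$. Both proofs rest on the same external ingredient, namely that random $4$-regular graphs have constant conductance with high probability.
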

\begin{proof}

First, we define an optimal cut of a graph as the cut that determines the minimum cut-conductance of the graph, and hence also determines the conductance of the graph. We prove the lemma by using the following claim.

\begin{claim}
The optimal cut of the graph $G$ does not pass through any of the cliques, i.e., all the edges that are cut by the optimal cut comprises only of inter-clique edges.
\end{claim}

\begin{proof}
Let us assume for the sake of contradiction that a given cut $\mathcal{K}$ of the graph $G$, with cut-conductance $\phi_{\mathcal{K}} < 1/6$, is the optimal cut of $G$. We show that, if $\mathcal{K}$ intersects with (passes through) any clique, the conductance can always be reduced to $\phi_{new}$ such that $\phi_{new} < \phi_{\mathcal{K}}$ by moving a group of nodes from one side of the cut to the other. This will contradict our assumption of $\mathcal{K}$ being the optimal cut and thereby prove the above claim. If $\phi_{\mathcal{K}} \geq 1/6$, we compare this cut to the middle cut of the graph $G$ that cuts the graph into two equal parts and does not pass through any cliques. With a simple calculation it is easily show that this middle cuts' conductance is $< \phi_{\mathcal{K}}$.

We refer to the total volume of the graph (summation of all node degrees) as $V_{total}$. The side of the cut $\mathcal{K}$ that has the (initial) lower volume $(< V_{total}/2)$ is called as the \emph{min} side of the cut and the side (initially) having the larger volume $(> V_{total}/2)$ is called as the \emph{max} side of the cut. If volumes of both sides are equal then we arbitrarily assign one side as \emph{min} and the other as \emph{max}. For the given cut $\mathcal{K}$, we consider $\phi_{\mathcal{K}} = C/V$, where $C$ represents the cut edges (edges with end nodes on either side of the cut $\mathcal{K}$) and $V$ represents the volume of the \emph{min} side of the cut. Note that, since node degrees are uniform, each node contributes $n^{\epsilon}$ towards the volume, and thus the side having the lesser number of nodes has the minimum volume.

We look at the cliques that are cut by $\mathcal{K}$, the side of the clique that has $< n^{\epsilon}/2$ nodes is labeled as the \emph{minority} side and the side of the clique that has $> n^{\epsilon}/2$ nodes is labeled as the \emph{majority} side. If both sides have exactly $ n^{\epsilon}/2$ nodes labeling is done arbitrarily. In this regard, note that whenever we say \emph{min}/\emph{max} side we refer to the side of the cut of the entire graph, whereas when we say \emph{minority}/\emph{majority} side we refer to the cut sides of the particular clique under consideration.

Consider any clique $C_i$ that is divided by the cut $\mathcal{K}$. Let there be $k$ nodes in the minority side of the $C_i$ and $n^{\epsilon}-k$ nodes in the majority side, where $k \leq n^{\epsilon}/2$. To show that we can obtain a lower conductance, we always move $k$ nodes from the \emph{minority} side of the clique to the opposite side of the cut $\mathcal{K}$ (except Case 3, where moving the $k$ nodes leads to the volume of the \emph{min} side of the cut becoming $0$ : in this case $(n^\epsilon -k)$ nodes are moved from the \emph{majority} side of the clique to the other side of the cut). After the nodes are moved, we show that for all cases that the maximum possible value of conductance obtained after moving the nodes ($\phi_{new}$), is less than the conductance prior to the nodes being moved ($\phi_{\mathcal{K}}$), giving us a contradiction. Observe that the denominator is always greater than $0$ as in no case all the nodes are moved out of the eventual \emph{min} side. 

For each case (except Case 3), when $k$ nodes are moved, 
say there are $k_1$ internal-edged nodes and $k_2$ external-edged nodes such that $k_1 + k_2 = k$. Each of the $k_1$ internal-edged nodes in the minority side was previously connected to $n^{\epsilon} -k$ nodes in the majority side (due to the clique edges) and each of the $k_2$ external-edged nodes in the minority side was previously connected to at least $(n^{\epsilon} -k)-1$ nodes in the majority side (case that maximizes $\phi_{new}$). Therefore, comparing $\phi_{\mathcal{K}}$ with $\phi_{new}$, the reduction in the number of cut edges is at least $k_1(n^{\epsilon} -k) + k_2(n^{\epsilon} -k -1) = k(n^{\epsilon}-k) - k_2 $.  The total possible number of external-edged nodes is at most $4$ (as the super-node graph was $4$-regular), therefore in this case $k_2 \leq 4$. Also by our choice of $\epsilon$ (and the range of $\alpha$), we see that $4 < n^{\epsilon}/3 \implies 4 < kn^{\epsilon}/3 $. We consider the following cases for a clique $C_i$ while moving $k$ nodes from the \emph{minority} side to the opposite side of the cut.

\noindent \textbf{Case 1 :} If $k$ nodes move from \emph{min} side of the cut to the \emph{max} side.\\
 Moving $k$ nodes reduces the volume of the \emph{min} side of the cut by $kn^{\epsilon}$.
 \[ \phi_{new} = \frac{C-(k(n^{\epsilon}-k) - k_2)}{V-kn^{\epsilon}}  \leq \frac{C-kn^{\epsilon}+k^2 + 4}{V-kn^{\epsilon}}  < \frac{C-kn^{\epsilon} +kn^{\epsilon}/2 + kn^{\epsilon}/3}{V-kn^{\epsilon}}  = \frac{C-kn^{\epsilon}/6}{V-kn^{\epsilon}} \]
\[ \phi_{\mathcal{K}} - \phi_{new} \geq \frac{C}{V} - \frac{C-kn^{\epsilon}/6}{V-kn^{\epsilon}} = \frac{kn^{\epsilon}(V/6 -C)}{V(V-kn^{\epsilon})} = \frac{Vkn^{\epsilon}(1/6 -\phi_{\mathcal{K}})}{V(V-kn^{\epsilon})}\]
As $\phi_{\mathcal{K}}$ is $< 1/6$, $\phi_{\mathcal{K}} - \phi_{new} > 0$, which implies that $\phi_{new} < \phi_{\mathcal{K}}$.\\
\textbf{Case 2 :} If $k$ nodes move from \emph{max} side of the cut to the \emph{min} side.\\
\textbf{Case 2A :} If \emph{min} side's volume still remains $\leq V_{total}/2$ after the nodes are moved.\\
Moving $k$ nodes into the \emph{min} side of the cut increases the volume by $kn^{\epsilon}$. 
\[ \phi_{new} = \frac{C-k(n^{\epsilon}-k) + k_2}{V+kn^{\epsilon}}  \leq \frac{C-kn^{\epsilon}+k^2 + 4}{V+kn^{\epsilon}}  < \frac{C-kn^{\epsilon} +kn^{\epsilon}/2 + kn^{\epsilon}/3}{V+kn^{\epsilon}}  = \frac{C-kn^{\epsilon}/6}{V+kn^{\epsilon}} \]
The number of cut-edges strictly decrease from previous while the volume increases therefore it is clear that $\phi_{new} < \phi_{\mathcal{K}}$.\\
\textbf{Case 2B :} If \emph{min} side's volume becomes $>V_{total}/2$ due to the $k$ moved nodes.\\
The \emph{max} side now becomes the side with the lower value of volume as a result of $k$ nodes moving to the other side with its volume $= V_{total}-V-kn^{\epsilon}$, where $V_{total}$ is the total volume of the graph.
\[ \phi_{new} = \frac{C-k(n^{\epsilon}-k) + k_2}{ V_{total}-V-kn^{\epsilon}}  \leq \frac{C-kn^{\epsilon}+k^2 + 4}{ V_{total}-V-kn^{\epsilon}}  < \frac{C-kn^{\epsilon} +kn^{\epsilon}/2 + kn^{\epsilon}/3}{ V_{total}-V-kn^{\epsilon}}  = \frac{C-kn^{\epsilon}/6}{ V_{total}-V-kn^{\epsilon}} \]
\[ \phi_{\mathcal{K}} - \phi_{new} \geq  \frac{ \left(CV_{total} - 2CV\right) + \left(Vkn^{\epsilon}/6 -Ckn^{\epsilon}\right) }{V(V_{total}-V-kn^{\epsilon})} = \frac{ C\left(V_{total} - 2V\right) + Vkn^{\epsilon}\left(1/6 -\phi_{\mathcal{K}}\right) }{V(V_{total}-V-kn^{\epsilon})}\]
As $V \leq V_{total}/2$ (because initially $V$ was the volume of the side having the lower volume among the two sides of the cut) and as $\phi_{\mathcal{K}}$ is $< 1/6$, we see that the terms in the numerator,  $C\left(V_{total} - 2V\right)\ge 0$ and $Vkn^{\epsilon}\left(1/6 -\phi_{\mathcal{K}}\right)>0$, i.e. $\phi_{\mathcal{K}} - \phi_{new} > 0$, which implies that $\phi_{new} < \phi_{\mathcal{K}}$.\\
\textbf{Case 3 :} In a special case, when moving $k$ nodes from \emph{minority} side of the clique to the \emph{majority} side results in making the volume of the \emph{min} side of the cut $=0$. (This is the case where the \emph{min} side of the cut contains only the \emph{minority} side of a clique). Instead of moving $k$ nodes from the \emph{minority} side of the clique to the \emph{majority}, $(n^\epsilon - k)$ nodes are moved from the \emph{majority} side to the \emph{minority} side (such that the \emph{min} side of the cut now has exactly one clique). In contrast to all the previous cases, in this case nodes move from the \emph{majority} side of the clique to the \emph{minority} side. It is to be noted that this movement cannot result in increasing the volume of the \emph{min} side of the cut to a value $>V_{total}/2$. Say there are $k_1$ internal-edged and $k_2$ external-edged nodes in the \emph{majority} such that $k_1 + k_2 = (n^\epsilon - k)$. Each of the $k_1$ internal-edged nodes in the \emph{majority} side was previously connected to $k$ nodes in the \emph{minority} side (due to the clique edges) and each of the $k_2$ external-edged nodes in the \emph{majority} side was previously connected to at least $k-1$ nodes in the \emph{minority} side (case that maximizes $\phi_{new}$). Therefore, comparing $\phi_{\mathcal{K}}$ with $\phi_{new}$, the reduction in the number of cut edges is at least $k_1(k) + k_2(k -1) = k(n^{\epsilon}-k) - k_2 $.  The total possible number of external-edged nodes is at most $4$ (as the super-node graph was $4$-regular), therefore in this case $k_2 \leq 4$. Also by our choice of $\epsilon$ (and the range of $\alpha$), we see that $4 < n^{\epsilon}/3 \implies 4 < kn^{\epsilon}/3 $. %
Moving $(n^\epsilon - k)$ nodes into the \emph{min} side of the cut increases the volume by $n^{\epsilon}(n^\epsilon - k)$. 
\[ \phi_{new} = \frac{C-k(n^{\epsilon}-k) + k_2}{V+n^{\epsilon}(n^\epsilon - k)}  \leq \frac{C-kn^{\epsilon}+k^2 + 4}{V+n^{\epsilon}(n^\epsilon - k)}  < \frac{C-kn^{\epsilon} +kn^{\epsilon}/2 + kn^{\epsilon}/3}{V+n^{\epsilon}(n^\epsilon - k)}  = \frac{C-kn^{\epsilon}/6}{V+n^{\epsilon}(n^\epsilon - k)} \]
The number of cut-edges strictly decrease from previous while the volume increases therefore it is clear that $\phi_{new} < \phi_{\mathcal{K}}$.

Since for each case we can obtain a conductance of $\phi_{new} < \phi_{\mathcal{K}}$ if cut $\mathcal{K}$ passes through a clique, we conclude that the optimal cut would not pass through any of the cliques.
\end{proof}

Now, we give a one to one correspondence between the cuts on the super-node graph
$\GS$ and the cuts on $G$ that do not pass through any cliques by considering any
cut $\mathcal{K}_\GS$ of $\GS$ and an identical cut $\mathcal{K}_G$ on $G$ such
that if $\mathcal{K}_\GS$ cuts an edge $e'$ in the graph $\GS$, then the cut
$\mathcal{K}_{G}$ would cut edge $e$ of the graph $G$ that was created in behest
of $e'$ while constructing graph $G$ from $\GS$ (refer to the construction of $G$
described in the beginning of Section \ref{sec:lowerbound}). 

Note that, this also creates a one to one correspondence of their respective
cut-conductances such that $\phi_{\mathcal{K}_G}(G) =
4\phi_{\mathcal{K}_\GS}(\GS)/n^{2\epsilon}$. Clearly, the number of cut edges
across the cuts $\mathcal{K}_\GS$ and $\mathcal{K}_G$ remains same in either case.
Let the volume of the smaller side the cut $\mathcal{K}_\GS$ be $V$. Since each
super-node has a degree $=4$, the total number of super nodes present in the
smaller side of the cut equals $V/4$. As described earlier, while constructing
graph $G$ from $\GS$ each super-node is replaced by a clique of $n^\epsilon$ nodes
with each node having a degree $=n^\epsilon$, wherein the degree is adjusted to
include the $4$ inter-clique edges by removing $2$ intra-clique edges. Therefore,
the volume of the smaller side of the corresponding cut $\mathcal{K}_G$ of the
graph $G$ would be $(V/4)n^{2\epsilon}$. Thus, if the cut conductance determined
by $\mathcal{K}_\GS$ in $\GS$ is $\phi_{\mathcal{K}_\GS}$, it implies that the cut
conductance given by the cut $\mathcal{K}_G$ in $G$ would be
$\phi_{\mathcal{K}_G}=4\phi_{\mathcal{K}_\GS}/n^{2\epsilon}$. 

It immediately follows from the correspondence that if cut $\mathcal{K}_\GS$ is the optimal cut that determines the conductance of the graph $\GS$, then its corresponding identical cut $\mathcal{K}_G$ would be the cut determining the conductance of $G$. 
From \cite{Bollobas}, we know that for a sufficiently large $n$, \textit{almost every} random regular graph with degree $=4$ has a constant conductance which implies that w.h.p. the conductance of $G$, $\phi(G)= \Theta(1/n^{2\epsilon})$. 
\end{proof}

\subsection{Distinct parts remain disjoint}

Recall that we assume $M=o({n^{(1-\epsilon)/2}})$ and, assume towards a contradiction that the algorithm sends at most $M n^{2\epsilon}$ messages in expectation. 
In this section, we show that parts of the network where nodes might be initiating the exploration of their neighborhoods, evolve independently in the sense that they are likely to never communicate.

Let random variable $\msgs$ give the number of messages sent by the algorithm and let random variable $\msgs(C)$ give the number of messages sent by the nodes in clique $C$.

\begin{lemma} \label{lem:unexplored}
Without receiving any messages, if a clique $C$ sends a message\footnote{We slightly abuse notation by saying a clique $C$ sends a message when, in fact, some node in $C$ performs the sending action.} over an inter-clique edge, then it follows that the nodes in $C$ have  sent at least $\Omega(n^{2\epsilon})$ messages in expectation, i.e.  $\expect{\msgs(C)} = \Omega(n^{2\epsilon})$.

\end{lemma}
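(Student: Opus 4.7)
The plan is an indistinguishability / deferred-decisions argument. The intuition is a simple counting: clique $C$ contains $n^\epsilon$ nodes each with $n^\epsilon - 1$ ports, for a total of $\Theta(n^{2\epsilon})$ directed port-slots, but only $4$ of them point along inter-clique edges. Until some message actually crosses an inter-clique edge, the nodes in $C$ cannot locate those $4$ special ports any better than random guessing, so hitting one in expectation should cost $\Omega(n^{2\epsilon})$ messages.

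First, I would consider $G$ together with a uniformly random assignment of the four inter-clique ports inside $C$ (equivalently, a uniformly random choice of which $4$ nodes of $C$ play the external-edged role, and at each such node which of its $n^\epsilon - 1$ ports leads outside $C$, subject to the construction's removed-edge constraints). Since the theorem only needs a single worst-case graph, averaging the algorithm's expected message count over this randomized labeling and passing to the worst instance can only lower the bound, so it suffices to prove the $\Omega(n^{2\epsilon})$ lower bound under a random port labeling. Next, I would apply the principle of deferred decisions: because $C$ receives no external messages by hypothesis, and because the construction was carefully designed so that every node of $C$ has the same degree $n^\epsilon - 1$ and the removed intra-clique edges are placed symmetrically among the four external-edged nodes, the view of the algorithm at every node in $C$ is statistically identical to what it would see in a vertex-transitive clique where no port is designated as inter-clique. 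Therefore the labels identifying the four inter-clique port-slots need not be committed until they are actually probed.

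Then, at the $k$-th send from $C$ (conditioned on none of the previous $k-1$ sends having crossed an inter-clique edge), the $4$ hidden inter-clique port-slots are still uniformly distributed among the $n^\epsilon(n^\epsilon - 1) - (k-1)$ yet-unrevealed port-slots, so the probability that this send traverses an inter-clique edge is at most $\tfrac{4}{n^\epsilon(n^\epsilon - 1) - (k-1)}$. A union bound over $k \le K$ shows that the probability that any of the first $K$ sends is an inter-clique send is $O(K / n^{2\epsilon})$; choosing $K = c\, n^{2\epsilon}$ for a small constant $c$ makes this at most $1/2$. Consequently, conditioned on $C$ eventually sending an inter-clique message, $\expect{\msgs(C)} = \Omega(n^{2\epsilon})$, as claimed. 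The main obstacle is making the symmetry step airtight: one has to verify that the joint distribution of (algorithm state in $C$, labels of all unprobed ports) remains exchangeable with respect to the four hidden inter-clique positions throughout the execution, which is precisely where the uniform degree and symmetric placement of the removed intra-clique edges in the construction of $G$ become essential.
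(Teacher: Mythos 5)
Your proposal is correct and follows essentially the same route as the paper's proof: both rely on the uniformly random assignment of the four inter-clique ports among the $\Theta(n^{2\epsilon})$ ports of $C$, use deferred decisions to argue that each new send hits an inter-clique port with probability $O(1/n^{2\epsilon})$ as long as a constant fraction of ports remain unprobed, and conclude an $\Omega(n^{2\epsilon})$ expected message count. Your union-bound-over-the-first-$K$-sends step and your more explicit discussion of the exchangeability of the unrevealed port labels are only cosmetic refinements of the paper's argument, which asserts the same per-send bound of $8/n^{2\epsilon}$ directly.
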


\begin{proof}
Recall from the construction of the super-node graph $\GS$ that we have assigned the inter-clique ports uniformly at random among all available ports of $C$.
Any clique $C$ has a total of $n^{2\epsilon}$ ports out of which only $4$ ports belong to inter-clique edges. Also, the nodes are unaware of their neighbors' identities, and in particular, the four nodes containing inter-clique port are unaware of this fact. First, we see that if a clique $C$ sends more than $n^{2\epsilon}/2$ messages before sending its first inter-clique message, the lemma is vacuously true. Otherwise, given that no messages were received via an inter-clique edge, it holds that, at any point before sending the first inter-clique edge, there are at least $n^{2\epsilon}/2$ ports among the nodes in $C$ over which no message has been sent yet, and each of them is equally likely to connect to an inter-clique edge. Thus, the probability that a message is sent over an inter-clique edges for the first time (in clique $C$) is at most
$4/(n^{2\epsilon} - \tfrac{n^{2\epsilon}}{2}) = 8/n^{2\epsilon}$.

 Therefore, in expectation the number of messages sent by any clique $C$ before sending its first inter-clique message is at least $ n^{2\epsilon}/8$ which is $\Omega(n^{2\epsilon})$. 
\end{proof}

In the rest of our proof, will analyze the probability that certain subgraphs of the clique communication graph $\CG$ (see section \ref{sec:lbg}) contain a leader node.
We will first state a crucial consequence of Lemma~\ref{lem:unexplored} in the language of clique communication graphs:

\begin{lemma}\label{lem:edges}
With probability $1 - o(1)$, the clique communication graph $\CG$ contains at most $O(M)$ edges.
\end{lemma}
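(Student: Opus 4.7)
The plan is to bound $\expect{|E(\CG)|}$ by charging each edge of $\CG$ to the expected discovery cost paid by one of its endpoints, using (a slight extension of) Lemma~\ref{lem:unexplored}, and then to conclude via Markov's inequality.

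First, I would observe that every edge $e = \{C_1,C_2\}$ in $\CG$ is born at the moment the first message is sent over the associated inter-clique link of $G$; call the clique that sent this message the \emph{discoverer} of $e$. Since each clique has exactly $4$ inter-clique ports placed uniformly at random among its $n^{2\epsilon}$ total ports, the discoverer had to locate a previously-unknown inter-clique port by sending. Extending the argument of Lemma~\ref{lem:unexplored}, I would show that such a discovery still costs $\Omega(n^{2\epsilon})$ messages in expectation even when the clique has already received messages on other inter-clique edges: receiving on a port $p$ reveals only that $p$ itself is inter-clique and gives no information about the locations of the other inter-clique ports. Thus, by a deferred-decisions argument, conditioned on the transcript of the clique's sends and receives, the unrevealed inter-clique ports remain uniformly distributed among the ports the clique has never touched (and there are at least $n^{2\epsilon}/2$ such ports, else the bound holds vacuously).

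Second, letting $D_C$ denote the number of edges of $\CG$ discovered by clique $C$, linearity of expectation combined with the per-discovery cost yields
\begin{equation*}
  \expect{\msgs} \;=\; \sum_C \expect{\msgs(C)} \;\ge\; \Omega(n^{2\epsilon}) \cdot \expect{\textstyle\sum_C D_C} \;=\; \Omega(n^{2\epsilon}) \cdot \expect{|E(\CG)|},
\end{equation*}
where the final step uses that every edge of $\CG$ has exactly one discoverer. Combining with the assumed message budget $\expect{\msgs} \le M\,n^{2\epsilon}$ gives $\expect{|E(\CG)|} = O(M)$, and Markov's inequality (absorbing a super-constant slack factor $g(n)\to\infty$ whose product $g(n)\,M$ is still $o(n^{(1-\epsilon)/2})$) yields $|E(\CG)| = O(M)$ with probability $1-o(1)$, in a sense sufficient for the subsequent near-independence argument.

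The main obstacle is making the post-reception extension of Lemma~\ref{lem:unexplored} fully rigorous: I would need a careful deferred-decisions argument on the uniformly random inter-clique port assignment, conditioning on the algorithm's transcript inside $C$ and arguing that this transcript reveals only the labels of ports that have already been used, together with some Wald-style accounting to convert the per-discovery cost into a per-clique expected-message lower bound when a single clique makes many discoveries. I expect the argument to follow the same template as the proof of Lemma~\ref{lem:unexplored}, but the bookkeeping is the delicate step.
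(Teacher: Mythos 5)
Your proposal is correct and follows essentially the same route as the paper: lower-bound the expected message cost per discovered inter-clique edge via Lemma~\ref{lem:unexplored}, sum, compare against the budget $Mn^{2\epsilon}$, and finish with a Markov-type averaging step (the paper phrases this last step as a contradiction with the event ``$\geq cM$ edges'' having constant probability, which is the same calculation). The one place where your bookkeeping diverges is instructive: by charging each edge of $\CG$ to its discoverer, you allow a single clique to be charged up to four times, including for discoveries made \emph{after} it has already received inter-clique messages, which is why you need the post-reception extension of Lemma~\ref{lem:unexplored} that you flag as the delicate step. The paper avoids this entirely: it defines, for each clique in a non-singleton component, its \emph{first} inter-clique edge, observes that since each clique has degree at most $4$ in $\CG$ a total of $B$ edges forces at least $B/4$ cliques to have a first edge, and then applies Lemma~\ref{lem:unexplored} exactly once per such clique --- i.e., only in the ``no messages received yet'' regime where the lemma is stated. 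If you adopt that per-clique (rather than per-edge) accounting, your deferred-decisions extension becomes unnecessary and the proof closes with only the tools already in the paper; otherwise your sketch of the extension (unrevealed inter-clique ports remain uniform over untouched ports) is plausible but is genuinely the part that would need to be written out carefully.
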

\begin{proof}
Let $B = c M$, for a sufficiently large constant $c$, and suppose towards a contradiction that $\CG$ contains at least $B$ edges with constant probability $\gamma>0$.

For each clique that is in a non-singleton connected component in the clique communication graph $\CG$, we define its \emph{first edge} as the first inter-clique edge over which its nodes have sent (or received) a message to (from) another clique. 	
  Let $F$ be the set of cliques that have first edges. 	
  Since each clique can connect to at most $4$ other cliques in $\CG$, we have $|F| \ge B/4$.	
  We have	
  \begin{align}	
  \expect{\msgs} 	
    &\ge \expect{ \msgs \mid \text{$|F| \ge B/4$ edges}} 	
          \cdot	
          \Prob{\text{$|F| \ge B/4$ edges}} \notag\\	
    &\ge \gamma 	
          \cdot 	
          \expect{ \msgs \mid \text{$ |F|\ge B/4$ edges}} \notag \\	
    &\ge \gamma\sum_{C \in F}	
            \expect{ \msgs(C) \mid \text{$ |F|\ge B/4$ edges}} \label{eq:sumedges}	
  \end{align}	
  Since the number of messages required for discovering an inter-clique edge of $C$ is independent of the event $|F|\ge B/4$, it holds that	
  \[	
  \sum_{C \in F}	
    \expect{ \msgs(C) \mid \text{$ |F|\ge B/4$ edges}} 	
    = 	
  \sum_{C \in F}	
    \expect{ \msgs(C) }.	
  \]	
  Applying Lemma~\ref{lem:unexplored} for each $C \in F$, we obtain from \eqref{eq:sumedges} that	
  \[	
  \expect{\msgs}	=  
    \Omega \left(\gamma n^{2\epsilon} B/4 \right)= c \gamma M n^{2\epsilon} / 4.	
  \]	
By choosing $c$ sufficiently large, we obtain a contradiction to the assumption of sending at most $M n^{2\epsilon}$ messages in expectation.
\end{proof}

\noindent\textbf{Spontaneous Cliques.} Since we consider randomized algorithms, we assume that each node is equipped with a random bit string of infinite length. 
If a clique $C$ does not have any incoming edges in $\CG$  throughout the execution, i.e., it does not receive any messages from nodes in other cliques, then the actions and the state transitions of its nodes depend exclusively on the supplied random bit strings.
In particular, inspecting these random bit strings, we can determine whether nodes in $C$ will send messages across any inter-clique edges of $C$.
This motivates us to call $C$ \emph{spontaneous}, if some node in $C$ eventually sends an outgoing message assuming that no node ever receives an incoming message (as per its initial random string). (Note that it may not actually send an outgoing message because it may receive a message first from some node in another clique.) 
We use the notation $P(C)$ to denote the \emph{connected component} of a clique $C$ in $\CG$ and note that $P(C)$ can grow over time.\vspace{2mm} \\
\noindent\textbf{Disjoint Components.} We define $\disjoint$ to be the event where, at any point in the algorithm's execution, each connected component in $\CG$ contains at most one spontaneous clique, and each non-singleton connected component contains exactly one. This, we show is in fact likely to occur. The next lemma summarizes the main result of this subsection:

\begin{lemma} \label{lem:disjoint}
Event $\disjoint$ occurs with probability $1 - o(1)$.
\end{lemma}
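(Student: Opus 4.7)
The plan is to upper bound $\Prob{\neg\disjoint}$ by splitting this event into two sub-events: (a) some non-singleton component of $\CG$ contains no spontaneous clique, and (b) some component of $\CG$ contains two or more spontaneous cliques. Sub-event (a) has probability zero. To see this, fix any non-singleton component $K$ and let $C^\star$ be the clique in $K$ that sent the earliest (in chronological order) inter-clique message lying on an edge of $K$. At that moment, $C^\star$ had neither sent nor received any inter-clique message previously, because any earlier inter-clique event incident to $C^\star$ would already be an edge of $K$, contradicting minimality. Hence the execution of $C^\star$ up to and including that moment coincides with its phantom execution, so $C^\star$ also sends that inter-clique message in the phantom run and is therefore spontaneous by definition.

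For sub-event (b), I would condition on the event $\mathcal{E} = \{|E(\CG)| \le cM\}$ supplied by Lemma~\ref{lem:edges}, which holds with probability $1 - o(1)$. Under $\mathcal{E}$, the union of all non-singleton components of $\CG$ contains at most $2cM$ cliques. For each non-singleton component designate the spontaneous clique furnished by paragraph~(a) as its \emph{initiator}, and call every other clique in that component a \emph{non-initiator}. Event~(b) then implies that at least one non-initiator is spontaneous, so by Markov's inequality it suffices to show that, conditional on $\mathcal{E}$, the expected number of spontaneous non-initiators is $o(1)$.

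To bound this expectation I would apply the principle of deferred decisions to the random $4$-regular super-node graph $\GS$ and to the random placement of the four inter-clique ports inside each clique, revealing these choices only when the algorithm queries them. Each time a newly discovered inter-clique edge attaches a previously unexposed clique $D$ to the growing graph $\CG$, the identity of $D$ is, up to an $O(M/N)$ conditioning correction coming from the at most $2cM$ previously revealed stubs, uniformly distributed over the cliques whose stubs are still unrevealed. Meanwhile each clique's algorithmic random bit string is independent of everything else; thus, conditional on the revealed history, the probability that $D$ is spontaneous is at most $(1+o(1))\,p_s$, where $p_s$ denotes the marginal probability that a given clique is spontaneous. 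Summing over the at most $2cM$ discovery events available on $\mathcal{E}$, the expected number of spontaneous non-initiators is bounded by $(1+o(1))\cdot 2cM\cdot p_s$.

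The main obstacle is showing $p_s = O(M/N)$. To establish this, call a clique $C$ \emph{isolated} in the execution if $C$ never receives an inter-clique message; the actual and phantom runs of such a $C$ coincide, so an isolated spontaneous clique necessarily sends an inter-clique message without having received one first, and Lemma~\ref{lem:unexplored} then forces it to contribute $\Omega(n^{2\epsilon})$ messages in expectation. Summing over all $N$ cliques gives $\expect{|\mathrm{iso}\cap\mathrm{spont}|}\cdot \Omega(n^{2\epsilon}) \le \expect{\msgs} \le Mn^{2\epsilon}$, hence $\expect{|\mathrm{iso}\cap\mathrm{spont}|} = O(M)$. Since on $\mathcal{E}$ there are at most $2cM$ non-isolated cliques, conditioning on $\mathcal{E}$ we conclude $\expect{|\mathrm{spont}|} = O(M)$, and by the symmetry of the random construction $p_s = O(M/N)$. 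Plugging back, the expected number of spontaneous non-initiators on $\mathcal{E}$ is $O(M^2/N) = o(1)$, because $M = o\bigl(n^{(1-\epsilon)/2}\bigr) = o(\sqrt{N})$. A final union bound with $\Prob{\mathcal{E}^c} = o(1)$ yields $\Prob{\neg\disjoint} = o(1)$, as desired.
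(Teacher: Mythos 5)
Your overall strategy matches the paper's: condition on $\CG$ having $O(M)$ edges (Lemma~\ref{lem:edges}), then argue that each of the $O(M)$ revealed inter-clique edges has only an $O(M/N)$ chance of producing a component with two spontaneous cliques, giving failure probability $O(M^2/N)=o(1)$ since $M=o(\sqrt{N})$. Your part~(a) and your explicit attempt to bound the density of spontaneous cliques are welcome refinements of steps the paper leaves implicit. However, two steps do not close as written. First, your ledger of ``spontaneous non-initiators'' only charges discovery events in which a previously \emph{unexposed} clique $D$ is attached and happens to be spontaneous. It misses the complementary bad event in which a newly revealed edge lands on an \emph{already exposed} clique of a different non-singleton component: there, two components, each with its own spontaneous initiator, merge, yet the attached endpoint need not itself be spontaneous, so no spontaneous $D$ is recorded at any discovery event. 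This merge event is precisely the event $\{P \rightarrow Q\}$ with $Q$ a non-singleton component that the paper bounds by $k/(N-j)=O(M/(N-M))$ per edge; it has the same $O(M/N)$ per-edge probability, so your final bound survives, but the accounting must be extended to cover it (e.g., bound, for each revealed edge, the probability that its endpoint is either an exposed stub of another component or an unexposed spontaneous clique).

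Second, the derivation of $p_s=O(M/N)$ does not follow as stated. You establish $\expect{|\mathrm{spont}|\mid\mathcal{E}}=O(M)$, but the $p_s$ you define and then feed into the deferred-decisions step is the \emph{unconditional} marginal; decomposing over $\mathcal{E}$ and its complement only yields $p_s=O(M/N)+\Prob{\mathcal{E}^c}$, and Lemma~\ref{lem:edges} supplies no rate for $\Prob{\mathcal{E}^c}=o(1)$, which can dwarf $M/N$. Carrying the conditioning on $\mathcal{E}$ into the deferred-decisions step instead would undermine the independence between $D$'s random bits and the revealed history that you rely on, since $\mathcal{E}$ is a global event about the whole execution. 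A cleaner repair bounds the spontaneous cliques without reference to $\mathcal{E}$: each one either performs its first inter-clique send before receiving any inter-clique message, costing $\Omega(n^{2\epsilon})$ expected messages by Lemma~\ref{lem:unexplored} and hence numbering $O(M)$ in expectation, or receives one first and is therefore an endpoint of an edge of $\CG$, for which one then needs an expectation (not merely high-probability) bound on $|E(\CG)|$. To be fair, the paper's own proof silently assumes the analogous claim (that the open ports over all spontaneous singletons total $O(M)$), so you have correctly isolated the load-bearing step; but your derivation of it has a genuine hole.
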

\begin{proof}

From Lemma \ref{lem:edges}, we know that with probability $1 - o(1)$,
the clique communication graph $\CG$ contains at most $c M$ edges, for some fixed constant $c$. 
Also note that, the only way of violating event $\disjoint$, is the merging of two connected components, each that initially had only one spontaneous clique.
Clearly, for each non-singleton connected component, there is at least one
spontaneous node. 
We show that conditioned on the event that $\CG$ has $\le c M$ edges, the probability that a connected component $P$ selects an inter-clique edge to a subgraph $Q$, which can be either another non-singleton
connected component  or a spontaneous clique (that may still be a singleton) is quite low.
Denote this event by $\{ P \rightarrow Q \}$. 
Let $j$ and $k$ be the number of open ports of $P$ and $Q$, respectively.  
Then, it holds that
\[
\prob{[P \rightarrow Q]} \le \frac{k}{N-j} =
O\left(\frac{M}{N-M}\right) = o\left(1/\sqrt{N}\right),
\] 
since $M = o(\sqrt{N})$ by assumption.
This shows that, two
non-singleton connected components do not combine with probability at least $1-o\left(\frac{1}{\sqrt{N}}\right)$.
Considering that there are at most $cM$ possible edges in the clique graph (see Lemma~\ref{lem:edges}), we know that $\disjoint$ occurs with probability at least 
$
  \left( 1-\frac{1}{\sqrt{N}} \right)^{cM} = 1 - o(1).
$
\end{proof}

\subsection{Bounding the dependencies between connected components.}
So far, we have shown that connected components are likely to remain disjoint throughout the execution. 
However, we cannot directly argue that this implies a small probability of electing a leader, since the conditioning on event $\disjoint$ restricts the evolution of a given connected component, as we explain in more detail below.

We view the execution of the algorithm as a sequence of \emph{steps} performed by
cliques, where a step involves either an update to a clique's state (defined below)
or the sending of a message. Note that a \emph{step} here is different from a round as there may be simultaneous actions at cliques
happening in the same round, but we can consider an arbitrary order on such
simultaneous events for analysis.

We define the {state of clique $C$ in $\CG$} as  either (1) \emph{empty}, if
$C$ is not spontaneous, or (2) its state consists of the local states of the nodes that are part of the connected component in $\CG$. 
In this notation, sending a message between two nodes in the same clique corresponds to a local update to the clique's state.

Formally, we use the notation $S(C,t)$ to denote the \emph{state of clique $C$ after $t$ steps} and define $S(t)$ to be the \emph{collective state of all the cliques after $t$ steps}.
By inspecting $S(C,t)$, we can derive whether there is a leader in one of the cliques of the connected component of $C$ in $\CG$.

For the rest of the proof, we assume that all connected components remain disjoint throughout the execution, i.e., event $\disjoint$ occurs (see Lemma~\ref{lem:disjoint}).

Let $\kappa$ be a collection of states after step $t$ for all the cliques and suppose that 
$\kappa$ represents a state in which $\disjoint$ holds; formally, the event $S(t) = \kappa$ has nonzero probability conditioned on $\disjoint$.  
We use the notation $\kappa(C)$ to refer to the state of the clique $C$ in the collection of states $\kappa$.
If the clique nodes eventual states were completely independent, then we would have
$
\prob[S(t) = \kappa] = \prod_{C \in \CG} Pr(S(C,t) = \kappa(C)].
$
Note that, the conditioning on $\disjoint$ can introduce dependencies between the event that some clique transits to a given state and the state of some other cliques and thus we cannot assume that the equality holds.
However, we prove that any possible dependency due to event $\disjoint$, cannot decrease the probability of $S(t)=\kappa$, which is sufficient for our purposes:

\begin{lemma} \label{lem:almostindependent}
Let $\kappa$ be a collection of the clique states after step $t$ that has positive probability of occurring conditioned on $\disjoint$. 
Then, it holds that
\begin{align}
\Prob{S(t) = \kappa} \geq \prod_{C \in \CG} \Prob{S(C,t) = \kappa(C)}. \label{eq:dependent}
\end{align}
\end{lemma}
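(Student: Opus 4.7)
The plan is to prove the bound by induction on the step count $t$, exploiting the fact that each clique's random bit string is drawn independently and that, under $\disjoint$, the information flow across connected components is completely severed.  For the base case $t=0$ the joint state is determined by the initial random strings alone; these are mutually independent across cliques, so \eqref{eq:dependent} holds with equality.

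For the inductive step, I would consider the unique clique $C^*$ that acts at step $t+1$ and analyze two sub-cases based on whether its action keeps the connected component of $C^*$ intact or merges $C^*$'s component with another.  In the first sub-case (a local state update, an intra-clique message, or a message along an already-existing inter-clique edge in $\CG$) only the state $S(C^*,t)$ changes and the transition is determined by $C^*$'s own fresh randomness; by the independence of that randomness from everything outside the current component and by the inductive hypothesis applied to the preceding step, the joint probability picks up exactly the factor needed to keep \eqref{eq:dependent} valid.  In the second sub-case a new inter-clique edge is used, so $C^*$'s component absorbs the target clique $C'$; by the conditioning on $\disjoint$ we know $C'$ was previously a non-spontaneous clique in a component without a spontaneous member, hence $\kappa(C')$ was \emph{empty} before the merger and the induction hypothesis still applies to the configuration just before this step.

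The final ingredient is to show that the single-step transition factor for the LHS dominates the corresponding change in the RHS.  Because the transition depends only on $C^*$'s fresh random bit (and, in the merge case, on the pre-existing randomness of $C'$ that was previously unused), this factor is exactly $\Prob{S(C^*,t+1)=\kappa'(C^*) \mid S(C^*,t)=\kappa(C^*)}$ (times the analogous factor for $C'$ when applicable), which is at least as large as the corresponding ratio $\Prob{S(C^*,t+1)=\kappa'(C^*)}/\Prob{S(C^*,t)=\kappa(C^*)}$ on the RHS: the marginal $\Prob{S(C^*,t+1)=\kappa'(C^*)}$ sums over \emph{all} paths leading to that marginal state, many of which are inconsistent with $\disjoint$ and therefore overcounted.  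Multiplying through preserves \eqref{eq:dependent}.

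The main obstacle I anticipate is the merge sub-case: once $C'$ stops being a singleton, its marginal probability $\Prob{S(C',t+1)=\kappa'(C')}$ is no longer the trivial ``stay empty'' probability, and I must argue that the additional correlation induced by requiring $C^*$ (as opposed to some other clique) to be the one that first contacts $C'$ does not reduce the joint probability below the product.  I plan to resolve this by formalising a per-step coupling in which each clique's next random bit is drawn independently and is used only if that clique is the acting one, so that the joint step-transition factorises into a product of at-most-one non-trivial per-clique transition factor and trivial factors of $1$ for the other cliques; this makes the inductive step a clean multiplication of terms on both sides of \eqref{eq:dependent}.
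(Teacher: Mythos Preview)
Your proposal follows essentially the same route as the paper: induction on the step index, identification of the single clique $C^*$ that acts at the inductive step, and the observation that for all other cliques $C'\ne C^*$ the step-$t$ and step-$(t{+}1)$ marginals coincide, reducing the inductive step to a single-clique transition inequality.  The paper's argument for that inequality is framed slightly differently from yours: rather than comparing the transition factor to the ratio $\Prob{S(C^*,t{+}1)=\kappa'(C^*)}/\Prob{S(C^*,t)=\kappa(C^*)}$, the paper compares the step probability directly with the marginal $\Prob{S(C^*,t)=\kappa(C^*)}$ and argues, in the merge sub-case, that conditioning on the other components' states can only shrink the set of admissible target cliques for the outgoing message, so the unconditioned step probability is at least as large.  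Your ``overcounting forbidden paths'' intuition and your proposed per-step coupling express exactly the same phenomenon; either formalisation suffices, and the overall structure matches the paper's proof.
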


\begin{proof}
We use induction over the number of steps $t$. 
In the base case, i.e.\ the first step $t=1$, \eqref{eq:dependent} holds with equality as no other steps have been made yet.
Next, we assume the statement holds for step $t-1$ and show that it holds for any step $t\ge 2$:

\begin{align}
\prob{[S(t) = \kappa]} 
 &= \sum_{\text{states } \kappa'}\prob{[S(t-1) = \kappa']} 
     \cdot 
     \prob{[\text{step } t \text{ transitions from }\kappa' \text { to }\kappa]}  \notag\\
 &\geq 
 \sum_{\text{states } \kappa'} 
     \prob{[\text{ step } t \text{ transitions from }\kappa' \text{ to }\kappa ]} 
     \prod_{C' \in \CG} \prob{[S(C',t-1) = \kappa'(C')]},  \label{eq:probprod}
\end{align}
by the inductive hypothesis.
For each possible predecessor state $\kappa'$, the probability of transitioning to $\kappa$ depends on the needed step to move from $\kappa'$ to $\kappa$.  
This, however, depends on a single clique taking a step and changing its state accordingly.  
Let $C$ refer to the clique node that must perform a step to transform $\kappa'$ into $\kappa$ and denote the corresponding event that this happens by $\{\text{$C$ takes step: $\kappa'\rightarrow_t\kappa$}\}$ .  (We ignore states $\kappa'$ from which $\kappa$ is unreachable in one step; obviously their contribution to the probability of $\kappa$ is zero.)
We get
\[
\prob{[\text{ step } t \text{ transitions from }\kappa' \text{ to }\kappa ]} 
= 
\prob[\text{$C$ takes step: $\kappa'\rightarrow_t\kappa$}].
\]
Plugging this into the right-hand side of \eqref{eq:probprod} and factoring out $\prob[S(C, t-1) \!=\! \kappa'(C)]$ from the product, yields
\begin{align}
\prob{[S(t) = \kappa]} 
  &\ge \sum_{\text{$\kappa'$}} 
    \prob[\text{$C$ takes step: $\kappa'\rightarrow_t\kappa$}]
    \cdot 
    \prob[S(C, t-1) \!=\! \kappa'(C)] \!\prod_{C' \ne C}\!\! \prob[S(C',t-1) \!=\! \kappa'(C')]. \notag \\
  &= \!\prod_{C' \ne C}\!\! \prob[S(C',t) \!=\! \kappa(C')]
    \cdot
    \sum_{\text{$\kappa'$}} 
    \prob[\text{$C$ takes step: $\kappa'\rightarrow_t \kappa$}]
    \cdot 
    \prob[S(C, t-1) \!=\! \kappa'(C)], \label{eq:prod-lb}
\end{align}
where the last equality follows because the conditioning on $\disjoint$ tells us that $C$ is the only clique updating its state in step $t$, i.e., $\kappa(C')=\kappa'(C')$ and $S(C',t) = S(C',t-1)$, for all $C' \ne C$. 

To complete the proof, we will show that
\begin{align}
  \prob[\text{$C$ takes step: $\kappa'\rightarrow_t\kappa$}] 
  \ge 
  \prob[S(C, t) = \kappa(C)]. \label{eq:dep-ineq}
\end{align}

In calculating $\prob[\text{$C$ takes step: $\kappa'\rightarrow_t\kappa$}]$, we have to exclude the events that are prohibited by the fact that we have conditioned on $\disjoint$, which implies that this probability depends not just on the state of $C$ after step $t-1$, but also on the other connected components.  
Let $P_{t}(C)$ be the connected component of a spontaneous clique $C$ after step $t$.
Since we condition on event $\disjoint$, it cannot happen that some node in $P_{t-1}(C)$ receives a message from a node in some clique $C'\notin P_{t-1}(C)$, as this would result in a connected component $P_{t}(C)$ having $2$ spontaneous cliques. 
For a similar reason, step $t$ cannot be such that a node in $P_{t-1}(C)$ sends a message to a node in some non-singleton component $P_{t-1}(C')$, where $C' \ne C$.
Thus, we are left with the following two possibilities to show that \eqref{eq:dep-ineq} holds:
\begin{compactenum}
\item Step $t$ concerns only nodes in $P_{t-1}(C)$: 
  In this case, the event corresponding to step $t$ is independent of the state of the cliques not in $P_{t-1}(C)$ and hence \eqref{eq:dep-ineq} holds with equality.
\item Step $t$ consists of some node in $P_{t-1}(C)$ sending a message $m$ to a clique $C' \notin P_{t-1}(C)$, and $C'$ is not part of any non-singleton connected component:
  The left-hand side of \eqref{eq:dep-ineq} assumes that we do not condition on any additional state, and therefore $C'$ can be any of the, say $\ell$, cliques not in $P_{t-1}(C)$. 
  On the other hand, when conditioning on the state of components other than $P_{t-1}(C)$, the number of possible cliques where $m$ can be sent to might be smaller than $\ell$, to avoid hitting a clique that is in some other non-singleton connected component (which would violate $\disjoint$).
  In other words, the number of cliques that $m$ can be sent to cannot increase when we conditioning on additional state on the right-hand side of \eqref{eq:dep-ineq}.
\end{compactenum}
Plugging \eqref{eq:dep-ineq} into \eqref{eq:prod-lb}, we get
\begin{align}
  \prob{[S(t) = \kappa]} 
    &\ge   
      \!\prod_{C \in \CG}\!\! \prob[S(C,t) \!=\! \kappa(C)]
      \cdot
      \sum_{\text{$\kappa'$}} 
      \prob[S(C, t-1) \!=\! \kappa'(C)]. \notag
\end{align}
The lemma follows by using the fact that $\sum_{\text{$\kappa'$}}\prob[S(C, t-1) \!=\! \kappa'(C)] =1$.
\end{proof}

\subsection{Disjoint connected components cannot break the symmetry}

At this point, we have shown that, conditioned on the connected components remaining disjoint, the state of the individual connected components is \emph{almost} independent.  In particular, we have shown that the probability of collectively being in any specific disjoint state is at least as large as the product of the individual probabilities.   Throughout, we are conditioning on the connected components being disjoint.

To complete the proof, we need three further steps.  First, we need to relate the states of the cliques to whether or not a given clique has elected a leader.  Then, we need to relate this almost independent process to a collection of independent random variables that are easier to analyze.  Finally, we show that with constant probability the algorithm elects zero or more than one leaders.  

\paragraph{Leadership.}  We want to analyze the probability of a given set of outcomes in terms of leader election. We define an indicator random variable $Y(C,t)$ such that $Y(C,t) = 1$ if and only if clique $C$ is spontaneous and has a leader in its connected component after step $t$; we simply write $Y(C)$ when $t$ is clear from the context or not important. 
By symmetry, all cliques are identical, and hence are equally likely to be spontaneous and also equally likely to be in a connected component with a leader. 
We define $s$ as the probability of the clique $C$ being spontaneous and $p$ as the probability of the spontaneous clique $C$ having a leader, i.e, $p = \prob[Y(C) = 1 \mid \text{$C$ is spontaneous}]$.  
It follows that $\prob[Y(C)\!=\!1] = sp$.
As noted earlier, observe that with the conditioning on $\disjoint$, 
the $Y$s are \emph{not} necessarily independent. 
For example, the knowledge that $Y(C')=1$, for some clique $C'$, might imply that the connected component of $C'$ has a certain minimum size, which in turn
limits the ways in which the connected component of $C$ can expand in the next step. 

Let $Z$ be a vector of desired outcomes for these indicator random variables, i.e., for each clique $C$ we consider whether $Y(C) = Z_C$.  Let $L(C)$ be the set of states for $C$ compatible with the outcomes $Z_C$, i.e., where component $C$ does or does not elect a leader as specified by $L(C)$.  Let $L$ be the product of all the $L(C)$ subspaces, i.e., $L$ is exactly the set of states compatible with $Z$ for all $C$.  Let $F$ be the state of the algorithm when it stops sending messages.

In the following lemma, we show that the probability of being in one of the states compatible with $Z$ can be decomposed into the probabilities of the individual indicator random variables.  (If the connected components were really independent, it would be exact equality, rather than $\geq$.)

\begin{lemma}
\label{lem:addup}
$\Prob{Y = Z} \geq \prod_{C \in \CG} \Prob{Y(C) = Z_C}$.
\end{lemma}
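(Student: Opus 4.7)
My plan is to prove the lemma by decomposing the event $\{Y=Z\}$ into a union over compatible full states, applying the ``almost independence'' bound of Lemma~\ref{lem:almostindependent} to each state, and then exploiting the product structure of $L$ to factor the resulting sum.

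First, I would observe that the event $\{Y = Z\}$ is exactly the event that the terminal state $F$ lies in the set $L$ of joint states compatible with $Z$. Since the algorithm stops sending messages after a finite number of steps, I can fix some sufficiently large step count $t$ such that $S(t) = F$ with probability one (or equivalently, work directly with $F$ by padding with no-op steps). This lets me write
\[
\Prob{Y = Z} \;=\; \sum_{\kappa \in L} \Prob{S(t) = \kappa}.
\]
Now I apply Lemma~\ref{lem:almostindependent} to each term on the right, obtaining
\[
\Prob{Y = Z} \;\geq\; \sum_{\kappa \in L} \prod_{C \in \CG} \Prob{S(C,t) = \kappa(C)}.
\]

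Next I exploit the crucial structural property $L = \prod_{C \in \CG} L(C)$, which comes from the definition of $L(C)$ as the set of per-clique states for which $Y(C) = Z_C$: whether clique $C$ satisfies $Y(C) = Z_C$ depends only on $S(C,t)$, so the joint constraint $Y = Z$ factors cleanly across cliques. This lets me interchange the sum and the product:
\[
\sum_{\kappa \in L} \prod_{C \in \CG} \Prob{S(C,t) = \kappa(C)}
\;=\; \prod_{C \in \CG} \sum_{\kappa_C \in L(C)} \Prob{S(C,t) = \kappa_C}
\;=\; \prod_{C \in \CG} \Prob{Y(C) = Z_C},
\]
where the last equality uses that $\{Y(C) = Z_C\}$ is precisely the event $\{S(C,t) \in L(C)\}$. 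Chaining the two inequalities yields the claim.

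The main obstacle, and the step that requires care, is the very first one: justifying that the factorization of $L$ is compatible with the state decomposition $S(t) = (S(C,t))_{C \in \CG}$. This is true conditioned on $\disjoint$ (which we inherit from Lemma~\ref{lem:almostindependent}), since under that event each connected component, and hence each clique's local state, evolves without interference from the states of cliques outside its component, so $Y(C)$ really is a function of $S(C,t)$ alone. Once this is nailed down, the remaining manipulations are routine algebra on the sum--product expression. No Chernoff or birthday-paradox argument is needed here; the entire content of the lemma is the combination of Lemma~\ref{lem:almostindependent} with the product structure of $L$.
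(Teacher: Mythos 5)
Your proof is correct and follows essentially the same route as the paper's: decompose $\{Y=Z\}=\{F\in L\}$ into a sum over individual compatible states, apply Lemma~\ref{lem:almostindependent} termwise, and then use the product structure $L=\prod_C L(C)$ to interchange the sum and the product. Your added remark that $Y(C)$ depends only on $S(C,t)$ under $\disjoint$ is a sensible (and implicit in the paper) justification of that last factorization step.
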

\begin{proof}
The $\Prob{Y = Z}$ is really the same as $\Prob{F \in L}$, by the way in which we have defined $L$.  We first observe that the probability that $F \in L$ is actually the sum of a collection of disjoint events, i.e., the individual states.  For each, the probability can be decomposed by the near-independence property of Lemma~\ref{lem:almostindependent}.  We then observe that the set $L$ is actually the product of a collection of subspaces, allowing us to rearrange terms and recombine disjoint events. 
\begin{eqnarray*}
\Prob{F \in L} & = & \sum_{f \in L} \Prob{F = f}  \\   %
            & \geq & \sum_{f \in L} \prod_{C \in \CG} \Prob{F(C) = f_C} \hspace{3cm}\text{(by Lemma \ref{lem:almostindependent})} \\
            & \geq & \sum_{f_1 \in L(1), f_2 \in L(2), \ldots} \prod_{C \in \CG} \Prob{F(C) = f_C} \\ %
            & \geq & \prod_{C \in \CG} \sum_{f_C \in L(C)} \Prob{F(C) = f_C} \\ %
            & \geq & \prod_{C \in \CG} \Prob{F(C) \in L(C)} \\ %
            & \geq & \prod_{C \in \CG} \Prob{Y(C) = Z_C}
\end{eqnarray*}
\end{proof}

\paragraph{Independent variables.}  Recall that $\Prob{Y(C) = 1} = sp$, where $s$ is the probability that $C$ is spontaneous and $p$ is the probability that a clique elects a leader if it is spontaneous.  (And by symmetry, these are all identical.)  We define a new set of \emph{independent} indicator random variables $X(C)$ where $\Prob{X(C) = 1} = sp$.   

\begin{lemma}
\label{lem:independentX}
For any integer $k$, $\Prob{\sum_C{Y(C)} > k} \geq \Prob{\sum_C{X(C)} > k}$. 
\end{lemma}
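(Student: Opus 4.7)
The plan is to derive the statement as a direct consequence of Lemma~\ref{lem:addup}. The key observation is that since each $X(C)$ is an independent Bernoulli variable with success probability $sp$, and $sp = \Prob{Y(C)=1}$ by symmetry, the marginals agree: $\Prob{X(C)=Z_C} = \Prob{Y(C)=Z_C}$ for every $Z_C \in \{0,1\}$. By independence, the joint distribution of $X$ therefore factors as a pure product, so for every outcome vector $Z \in \{0,1\}^{|\CG|}$ we have
\[
\Prob{X = Z} \;=\; \prod_{C \in \CG} \Prob{X(C) = Z_C} \;=\; \prod_{C \in \CG} \Prob{Y(C) = Z_C}.
\]

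With this identification in hand, Lemma~\ref{lem:addup} can be read as a pointwise stochastic-dominance statement: for every $Z$, $\Prob{Y = Z} \ge \Prob{X = Z}$. I would then decompose the tail event along all compatible outcome vectors and sum the pointwise inequality:
\[
\Prob{\sum_C Y(C) > k} \;=\; \sum_{Z \,:\, \sum_C Z_C > k} \Prob{Y = Z} \;\ge\; \sum_{Z \,:\, \sum_C Z_C > k} \Prob{X = Z} \;=\; \Prob{\sum_C X(C) > k},
\]
where the middle step is just linearity applied to the inequality on each $Z$ in the summation range.

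There is no real obstacle here, since the heavy lifting has already been done in Lemma~\ref{lem:addup}; the only thing worth flagging explicitly is that one does \emph{not} need the pointwise inequality on all $Z$ simultaneously, only on the $Z$'s in the tail event $\{\sum_C Z_C > k\}$. So the argument is essentially mechanical once one notes that the product of matching marginals is exactly the joint distribution of the independent $X(C)$'s, and no coupling or combinatorial identity is needed.
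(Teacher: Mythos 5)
Your proposal is correct and follows essentially the same route as the paper: both invoke Lemma~\ref{lem:addup} together with the matching marginals $\Prob{X(C)=Z_C}=\Prob{Y(C)=Z_C}$ to obtain the pointwise inequality $\Prob{Y=Z}\ge\Prob{X=Z}$, and then sum over the tail event. The only cosmetic difference is that the paper first aggregates over vectors $Z$ of fixed support size $l$ and then sums over $l>k$, whereas you sum directly over all $Z$ with $\sum_C Z_C>k$; these are the same computation.
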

\begin{proof}
We show that this follows from Lemma~\ref{lem:addup}, by summing over the collection of outcomes where $\sum Y(C) > k$.
Since there are $N$ cliques in total and we can write $\Prob{\sum_C Y(C) > k} = \sum_{l=k+1}^{N} \Prob{\sum_C Y(C)=l}$, we will first obtain a bound on $\Prob{\sum_C Y(C)=l}$, which is the probability of the event that there are exactly $l$ spontaneous cliques (with leader), for integer $l>0$.

Let $Z$ be the $N$-bit vector of the desired outcomes of the indicator random variables $Y(C)$. By abuse of notation, we can think of $Y$ as a vector of the individual random variables $Y(C)$.
Let $\mathcal{Z}$ be the set of all $N$-bit vectors $Z$ that have a support of exactly size $l$.
It follows that 
\begin{align*}
\Prob{\sum_C Y(C) = l} &= \sum_{Z \in \mathcal{Z}} \Prob{Y=Z} \\
    & \geq \sum_{Z \in \mathcal{Z}}\prod_C \Prob{Y(C) = Z_C} \hspace{1cm} \text{(by Lemma \ref{lem:addup})}\\
											 & = \sum_{Z \in \mathcal{Z}}\prod_C \Prob{X(C) = Z_C} \hspace{1cm} \text{(by def.\ of $X(C)$)}\\
											 & = \sum_{Z \in \mathcal{Z}} \Prob{X=Z} \\
											 & \geq \Prob{\sum_C X(C) = l} 
\end{align*} 

Plugging this bound into $\Prob{\sum_C Y(C) > k}$, we obtain
\begin{align*}
\Prob{\sum_C Y(C) > k} = \sum_{l=k+1}^{N} \Prob{\sum_C Y(C)=l} \geq \sum_{l=k+1}^{N} \Prob{\sum_C X(C)=l} = \Prob{\sum_C X(C) > k}.
\end{align*}
This completes the proof of the lemma.
\end{proof}

\paragraph{Zero leaders.}  We now analyse the probability that there are zero leaders, and use that to show that $sp \geq 1/n^{1-\epsilon}$.

\begin{lemma}
\label{lem:zero}
$sp \geq 1/n^{1-\epsilon}$
\end{lemma}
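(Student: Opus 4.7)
The plan is to argue by contradiction: if $sp < 1/n^{1-\epsilon}$, then the algorithm fails to elect any leader with non-vanishing probability, contradicting the hypothesis that it succeeds with probability $1-o(1)$.

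First, I will convert the algorithm's success guarantee into a bound on $\Prob{\sum_C Y(C) = 0}$. Recall that Section~4 operates conditional on $\disjoint$, under which every non-singleton connected component of $\CG$ contains exactly one spontaneous clique. Consequently, $\sum_C Y(C)$ precisely counts the number of elected leaders. The hypothesis that the algorithm elects a unique leader with probability $1-o(1)$, combined with Lemma~\ref{lem:disjoint} (which gives $\Prob{\disjoint} = 1 - o(1)$), yields
\[
  \Prob{\sum_C Y(C) = 0 \mid \disjoint} \;=\; o(1).
\]

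Next, I will invoke Lemma~\ref{lem:addup} with the all-zero target vector $Z = \mathbf{0}$ to obtain a matching lower bound in terms of $sp$. Since $\Prob{Y(C) = 0} = 1 - sp$ for every clique $C$ by symmetry, the lemma gives
\[
  \Prob{\sum_C Y(C) = 0} \;\ge\; \prod_{C \in \CG}\Prob{Y(C) = 0} \;=\; (1-sp)^{N},
\]
where $N = \lfloor n^{1-\epsilon}\rfloor$ is the number of cliques. Combining the two bounds forces $(1-sp)^N = o(1)$.

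Finally, the contradiction is a one-line Bernoulli estimate. Suppose for contradiction that $sp < 1/n^{1-\epsilon} \le 1/N$. Then
\[
  (1-sp)^N \;>\; \left(1-\tfrac{1}{N}\right)^{N} \;\ge\; \tfrac{1}{e} - o(1),
\]
which is bounded away from zero and contradicts $(1-sp)^N = o(1)$. Hence $sp \ge 1/n^{1-\epsilon}$. The only delicate point will be confirming that Lemma~\ref{lem:addup} is applicable to the all-zero outcome, i.e.\ that $Y = \mathbf{0}$ is compatible with $\disjoint$; but this is immediate since no clique electing a leader is clearly consistent with the connected components remaining disjoint. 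Everything else is a routine calculation.
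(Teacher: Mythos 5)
Your proof is correct and follows essentially the same route as the paper: lower-bound the probability of zero leaders by $(1-sp)^{n^{1-\epsilon}}$ and then contradict the success guarantee via the Bernoulli estimate $(1-1/N)^{N} \ge 1/e - o(1)$. The only difference is that you obtain the product lower bound by invoking Lemma~\ref{lem:addup} with the all-zero vector $Z=\mathbf{0}$, whereas the paper re-derives that same bound by an explicit summation over subsets of spontaneous cliques; your shortcut is valid and slightly cleaner.
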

\begin{proof}
We first show that $\Prob{\sum_C Y(C) = 0} \geq (1 - sp)^{n^{1-\epsilon}}$.  We then use this to conclude that $sp \geq 1/n$.
Let us consider $W$ as a subset of cliques of $\CG$. When saying \emph{$W$ is spontaneous}, we mean that all cliques in $W$ are spontaneous.
\begin{eqnarray}
\Prob{\sum_C Y(C) = 0}	
	& \geq & \sum_{W \subseteq \CG} \Prob{\textrm{W are spontaneous}}\Prob{\textrm{no leaders in W}} \nonumber \\
	& \geq & \sum_{W \subseteq \CG} \Prob{\textrm{W are spontaneous}}(1-p)^{|W|} \nonumber \\
	& \geq & \sum_{W \subseteq \CG} \Prob{\textrm{W are spontaneous}} \prod_{C \in W} (1-p) \nonumber\\
	& \geq & \sum_{w_1 \in \{0,1\}, w_2 \in \{0,1\}, \ldots} \Prob{\textrm{$C$ spontaneous iff $w_C = 1$}} \prod_{C \in W} (1-p) \label{eq:8}\\
	& \geq & \sum_{w_1 \in \{0,1\}, w_2 \in \{0,1\}, \ldots}\quad \prod_{C : w_C = 1} s(1-p) \prod_{C : w_C=0} (1-s) \label{eq:9}\\
	& \geq & \prod_{C \in \CG} (s(1-p) + (1-s)) \nonumber\\ 
	& \ge & \prod_{C \in \CG} (1 - sp) \nonumber\\
  	& \ge & (1-sp)^{n^{1-\epsilon}} \nonumber
\end{eqnarray}
Equation \ref{eq:8} follows as we can introduce an indicator variable $w_C$ for each clique $C$ in $\CG$, where $w_C = 1$ iff the clique $C$ is spontaneous. Rearranging the equation for the two different possible values of $w_C$ and observing that $w_C=1$ for all $C \in W$, we obtain Equation~\ref{eq:9}.

Specifically, we know that, for any algorithm to succeed the probability of zero leaders has to be less than a constant. Here, we have shown that any algorithm that sends $\leq Mn^{2\epsilon}$ messages in expectation has at most $O(M)$ edges in the clique communication graph with probability $1-o(1)$, and the connected components formed in the clique communication graph are also disjoint with probability $1-o(1)$.  Conditioned on those events, we have just shown that with probability $(1-sp)^{n^{1-\epsilon}}$ there are no leaders and the algorithm fails.  Thus, for this to be smaller than some constant, we conclude that $sp > 1/n^{1-\epsilon}$. 
\end{proof}
This proves something that intuitively makes sense: in order to ensure at least one leader, if the probability of a clique electing a leader is $sp$, and if we have $n^{1-\epsilon}$ cliques, then the probability $sp \geq 1/n^{1-\epsilon}$ to ensure at least one leader.  (It required just a bit more care because we did not have complete independence.)

\paragraph{More than one leader.}  We now analyze the probability that there is more than one leader, showing that this occurs with constant probability.  This concludes our proof, as it indicates that the algorithm fails to elect exactly one leader with constant probability.  

\begin{lemma} \label{lem:morethanone}
$\Prob{\sum_C{Y(C)} > 1} > \Omega(1)$
\end{lemma}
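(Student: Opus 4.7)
The plan is to chain together the two preceding lemmas to reduce the problem to a standard calculation about independent Bernoulli trials. By Lemma~\ref{lem:independentX} applied with $k=1$, it suffices to prove that $\Prob{\sum_C X(C) > 1} = \Omega(1)$, where the $X(C)$ are fully independent Bernoulli variables, one per clique, each taking value $1$ with probability $sp$. Since there are $N = n^{1-\epsilon}$ cliques and Lemma~\ref{lem:zero} gives $sp \ge 1/n^{1-\epsilon} = 1/N$, we are reduced to a clean question: for $N$ i.i.d.\ Bernoulli$(q)$ variables with $q \ge 1/N$, what is $\Prob{\sum X(C) > 1}$?

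Next, I would write down the exact formula
\[
\Prob{\sum_C X(C) > 1} = 1 - (1-q)^N - N q (1-q)^{N-1} = 1 - (1-q)^{N-1}\bigl[1 + (N-1)q\bigr],
\]
and show that the right-hand side is monotonically non-decreasing in $q$ on $[0,1]$. Indeed, differentiating gives $N(N-1)\, q\, (1-q)^{N-2} \ge 0$, so the worst case over $q \ge 1/N$ is $q = 1/N$.

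Evaluating at $q = 1/N$ yields
\[
\Prob{\sum_C X(C) > 1} \;\ge\; 1 - (1-1/N)^N - (1-1/N)^{N-1},
\]
which tends to $1 - 2/e > 1/4$ as $N \to \infty$, and is bounded below by an absolute positive constant for all $N$ sufficiently large. Combining this with Lemma~\ref{lem:independentX} gives $\Prob{\sum_C Y(C) > 1} \ge 1 - 2/e - o(1) = \Omega(1)$, as required.

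The main obstacle is essentially bookkeeping: there is no deep probabilistic argument at this stage, since all the heavy lifting (near-independence of connected components, the $sp \ge 1/N$ bound, the passage to independent variables) has been done in the preceding lemmas. The only subtle point is verifying monotonicity in $q$ so that the lower bound on $sp$ really does translate into a constant lower bound on the probability of more than one leader; after that, the conclusion is a direct Poisson-type calculation.
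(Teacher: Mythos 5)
Your proposal is correct and follows essentially the same route as the paper: reduce to the independent variables $X(C)$ via Lemma~\ref{lem:independentX}, invoke $sp \ge 1/n^{1-\epsilon}$ from Lemma~\ref{lem:zero}, and bound the probability of zero or one successes to get $1 - 2/e - o(1)$. The only cosmetic difference is that you establish monotonicity of the combined expression $1 - (1-q)^{N-1}\bigl[1+(N-1)q\bigr]$ in $q$ by a single derivative, whereas the paper bounds the zero-success and one-success terms separately at $q = 1/N$; both yield the same constant.
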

\begin{proof}
We assume from Lemma~\ref{lem:zero} that $sp > 1/n^{1-\epsilon}$.  We know from Lemma~\ref{lem:independentX} that $\Prob{\sum_C{Y(C)} > 1} \geq \Prob{\sum_C{X(C)} > 1}$.  So we are going to analyze the probability that $\sum_C{X(C)} > 1$.  And this is a straightforward analysis of independent random variables.  

The probability that all the $X(C)$ are 0 is at most:
$$
(1 - sp)^{n^{1-\epsilon}} \leq (1 - 1/n^{1-\epsilon})^{n^{1-\epsilon}} \leq 1/e\ .
$$
(This relies on the fact that $sp \geq 1/n^{1-\epsilon}$.)

We can also analyze the probability that there is exactly one $C$ where $X(C) = 1$.  Specifically, this occurs with probability:
$$
n^{1-\epsilon}sp(1 - sp)^{n^{1-\epsilon}-1}\ .
$$ 
This is maximized when $sp = 1/n^{1-\epsilon}$, so we conclude that
$$
\Prob{\sum_C{X(C)} = 1} \leq n^{1-\epsilon}(1/n^{1-\epsilon})(1 - 1/n^{1-\epsilon})^{n^{1-\epsilon}-1} \leq  1/e + o(1).
$$
Finally, then, we conclude that $\Prob{\sum_C X(C)  > 1} \geq 1 - 2/e - o(1)$.  That is, with at least constant probability there is more than one $X(C) = 1$, and hence   
$\Prob{\sum_C(Y(C)) > 1} \ge \Omega(1)$.
With constant probability, the algorithm elects more than one leader.
\end{proof}

We conclude that if a given algorithm sends at most $Mn^{2\epsilon}$ messages in expectation, then, with constant probability, it either elects zero leaders or more than one leader, thus resulting in a contradiction.
This completes the proof for lower bounding the number of messages required for implicit leader election.
However, for the purpose of readability we restate the theorem and the proof outline.

\begin{reptheorem} {thm:lb}
 Suppose there is a randomized leader election algorithm that succeeds with probability $1 - o(1)$ in $n$-node networks where each node has a unique ID and knows the network size $n$.
  Then, for every $\alpha$, where $\tfrac{1}{n^2} < \alpha < \tfrac{1}{12^2}$, there exists a graph $G$ of $\Theta(n)$ nodes and conductance $\phi = \Theta(\alpha)$ such that
  the algorithm requires $\Omega\left( {\sqrt{n}}/{\phi^{3/4}} \right)$ messages in expectation.
\end{reptheorem}

\begin{proofoutline}
First, in section \ref{sec:lbg}, we construct a lower bound graph $G$ for which the above theorem would hold. By lemma \ref{lem:phi}, we know that graph $G$ has conductance $\phi = \Theta(1/n^{2\epsilon})$. 
We then assume towards a contradiction that there exists an algorithm that solves implicit leader election by sending at most $Mn^{2\epsilon}  = o(\sqrt{n}/\phi^{3/4})$ messages in expectation, where $M=o({n^{(1-\epsilon)/2}})$. %
Next, %
lemma \ref{lem:edges} shows that any algorithm sending at most $O(Mn^{2\epsilon})$ messages in expectation would have at most $O(M)$ edges in the clique communication graph $\CG$ with probability $1-o(1)$. Given this, we know from lemma \ref{lem:disjoint} that event $\disjoint$ occurs with probability $1-o(1)$, i.e. connected components (distinct parts of the network) do not merge (communicate). Thereafter, lemma \ref{lem:almostindependent} and \ref{lem:addup} show that these distinct parts are nearly independent i.e. the random variables representing the states of the resulting connected components (in $\CG$) are nearly independent of one another (behaves similarly to identically distributed and fully independent indicator random variables). Lemma \ref{lem:zero} and \ref{lem:morethanone} leverage this near independence to show that the algorithm is likely to elect either no leader or more than one leader with constant probability. This results in a contradiction and completes our proof. 
\end{proofoutline}

We also obtain the following corollaries that lower bounds the number of messages required for any algorithm that solves broadcast or constructs a spanning tree. %
On the constructed lower bound graph $G$, as opposed to implicit leader election, an algorithm for either broadcast or spanning tree construction would need to discover all $N = n^{1-\epsilon}$ cliques instead of just $\sqrt{N}$ cliques. In the following corollary, we lower bound the number of messages required for any broadcast algorithm.

\begin{corollary}
  
Suppose that there is a randomized broadcast algorithm that succeeds with probability $1 - o(1)$.
Then, for every $\alpha$, where $\left({1}/{n^2}\right) < \alpha < \left({1}/{12^2}\right)$, there exists a graph $G$ of $\Theta(n)$ nodes and conductance $\phi = \Theta(\alpha)$ such that the algorithm requires $\Omega\left( {n}/{\sqrt{\phi}} \right)$ messages in expectation.  
\end{corollary}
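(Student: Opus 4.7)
The plan is to reuse the lower bound graph $G$ constructed in Section~\ref{sec:lbg} with $N = n^{1-\epsilon}$ cliques and conductance $\phi = \Theta(1/n^{2\epsilon})$ (Lemma~\ref{lem:phi}). The key observation driving the argument is that a successful broadcast must reach every one of the $N$ cliques, whereas implicit leader election in the proof of Theorem~\ref{thm:lb} only required communication between roughly $\sqrt{N}$ cliques to break symmetry---this is precisely what converts the $\sqrt{N}$ factor in the leader election bound into an $N$ factor here.

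Concretely, I would argue that in any execution where broadcast succeeds with probability $1 - o(1)$, the clique communication graph $\CG$ must contain at least $N - 1$ edges, since every non-source clique needs to receive at least one inter-clique message in order for some node inside it to learn the broadcast. I then assign to each edge of $\CG$ the unique \emph{initiator} clique---the one that sent the first message across that edge---and observe that each clique can serve as initiator for at most four edges, since it has only four inter-clique ports. Hence at least $(N-1)/4 = \Omega(N)$ distinct cliques must initiate an edge.

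The final step is to lower bound by $\Omega(n^{2\epsilon})$ the expected number of messages sent by each initiator clique. This is essentially Lemma~\ref{lem:unexplored}, except that an initiator may already have received messages on some of its inter-clique ports before initiating a new edge. The main technical point to verify is that such partial information only helps the initiator by at most a constant factor: conditioned on the ports $C$ has already interacted with (and what $C$ has learned about them), the remaining inter-clique ports of $C$ are still distributed uniformly among $C$'s untested ports, and since there are at most four inter-clique ports in total, the expected number of blind sends needed to stumble onto a fresh inter-clique port remains $\Omega(n^{2\epsilon})$---either the clique sends at least $n^{2\epsilon}/2$ intra-clique messages, in which case the bound is trivial, or the probability of hitting a new inter-clique port per send stays $O(1/n^{2\epsilon})$ by a symmetry argument.

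Combining the three ingredients, the total expected message complexity is $\Omega(N \cdot n^{2\epsilon}) = \Omega(n^{1-\epsilon} \cdot n^{2\epsilon}) = \Omega(n^{1+\epsilon})$, and since $\phi = \Theta(1/n^{2\epsilon})$ gives $1/\sqrt{\phi} = \Theta(n^{\epsilon})$, this matches the claimed $\Omega(n/\sqrt{\phi})$. The hard part will be the extension of Lemma~\ref{lem:unexplored} to cliques that have learned partial port information from received messages, together with making sure the $\Omega(n^{2\epsilon})$ cost is attributed to each initiator disjointly so that no double-counting occurs across the two endpoints of an edge of $\CG$.
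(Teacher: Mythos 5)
Your proposal follows essentially the same route as the paper: reuse the lower bound graph of Section~\ref{sec:lbg}, observe that broadcast forces all $N = n^{1-\epsilon}$ cliques to be reached, charge $\Omega(n^{2\epsilon})$ per clique via Lemma~\ref{lem:unexplored}, and convert $\Omega(n^{1-\epsilon}\cdot n^{2\epsilon})$ into $\Omega(n/\sqrt{\phi})$ using Lemma~\ref{lem:phi}. In fact your version is somewhat more careful than the paper's two-line argument, since you explicitly flag (and sketch a fix for) the fact that Lemma~\ref{lem:unexplored} as stated applies only to cliques that have received no messages, whereas in a broadcast execution most initiator cliques will already have received inter-clique messages.
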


\begin{proof}
In the constructed lower bound graph $G$ (described in Section \ref{sec:lbg}), observe that any broadcast algorithm would need to find all the $N = n^{1-\epsilon}$ cliques. As shown in Lemma \ref{lem:unexplored}, we see that discovering a yet undiscovered clique requires $ \Omega(n^{2\epsilon})$ messages. Consequently, the total number of messages required to find all the cliques is $\Omega({n^{1-\epsilon}}\cdot n^{2\epsilon})$ in expectation. From Lemma \ref{lem:phi}, we know that the conductance of the graph $G$ is $\phi = 1/n^{2\epsilon}$. Therefore, the algorithm would require $\Omega(n\cdot n^\epsilon) = \Omega (n/\sqrt{\phi})$ messages in expectation.
\end{proof}

We repeat the same argument to give a message complexity lower bound for spanning tree construction.

\begin{corollary}
 Suppose that there is a randomized spanning tree construction algorithm that succeeds with probability $1 - o(1)$.
Then, for every $\alpha$, where $\left({1}/{n^2}\right) < \alpha < \left({1}/{12^2}\right)$, there exists a graph $G$ of $\Theta(n)$ nodes and conductance $\phi = \Theta(\alpha)$ such that the algorithm requires $\Omega\left( {n}/{\sqrt{\phi}} \right)$ messages in expectation.  
\end{corollary}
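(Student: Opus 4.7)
The plan is to reuse the lower bound graph $G$ from Section~\ref{sec:lbg} verbatim and mirror the counting argument given in the broadcast corollary above. The starting observation is that any spanning tree of $G$ must be a connected subgraph spanning all $n$ nodes. Since the node sets of different cliques are disjoint and the only edges of $G$ that cross between two cliques are inter-clique edges, the spanning tree must contain at least $N-1$ inter-clique edges (where $N = n^{1-\epsilon}$). Equivalently, the clique communication graph $\CG$ induced by messages actually sent by the algorithm must be connected; hence each of the $N$ cliques of $G$ must be ``visited'' in the sense of participating in at least one inter-clique message exchange. Otherwise, some clique's nodes would be unable to identify their parent (or children) in the output tree and the algorithm would fail.

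Given that every clique must be discovered, I would then invoke Lemma~\ref{lem:unexplored} exactly as the broadcast corollary does: because the inter-clique ports of any clique are a random tiny subset (only $4$ out of $n^{2\epsilon}$) and the algorithm can only learn about a port by sending a message through it, discovering a yet-undiscovered clique requires $\Omega(n^{2\epsilon})$ messages in expectation at the clique that initiates the crossing message. Summing this contribution over the $N = n^{1-\epsilon}$ cliques that must be discovered yields an expected message complexity of $\Omega(n^{1-\epsilon}\cdot n^{2\epsilon}) = \Omega(n\cdot n^{\epsilon})$. Applying Lemma~\ref{lem:phi}, which gives $\phi = \Theta(1/n^{2\epsilon})$ and hence $n^{\epsilon} = \Theta(1/\sqrt{\phi})$, this bound rewrites as $\Omega(n/\sqrt{\phi})$, which is the desired conclusion.

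The main obstacle, as in the broadcast case, is the careful bookkeeping behind the claim that each new clique discovery costs $\Omega(n^{2\epsilon})$ expected messages. Lemma~\ref{lem:unexplored} is stated for a clique that has not yet received any inter-clique message, so one has to argue that even after a clique has received such messages, the status of its remaining ports is still uniformly random (the port assignment is independent of the clique's history up to that point), so that searching for an additional inter-clique port is governed by the same kind of birthday-style calculation as in the proof of Lemma~\ref{lem:unexplored}. Apart from this reuse of the port-randomness argument, the proof is a direct specialization of the broadcast lower bound and requires no new ingredients.
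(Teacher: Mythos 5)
Your proposal is correct and follows essentially the same route as the paper: the spanning tree forces all $N = n^{1-\epsilon}$ cliques to be discovered, each discovery costs $\Omega(n^{2\epsilon})$ expected messages by Lemma~\ref{lem:unexplored}, and Lemma~\ref{lem:phi} converts the product into $\Omega(n/\sqrt{\phi})$. The caveat you raise about reusing the port-randomness argument after a clique has already received inter-clique messages is a genuine subtlety that the paper's own (very terse) proof also leaves implicit, so flagging it is a point in your favor rather than a divergence.
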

\begin{proof}
In the constructed lower bound graph $G$ (described in Section \ref{sec:lbg}), observe that any spanning tree construction algorithm would need to find at least $N-1 = O(n^{1-\epsilon})$ cliques. As shown in Lemma \ref{lem:unexplored}, we see that discovering a yet undiscovered clique requires $ \Omega(n^{2\epsilon})$ messages. Consequently, the total number of messages required to find all the cliques is $\Omega({n^{1-\epsilon}}\cdot n^{2\epsilon})$ in expectation. From Lemma \ref{lem:phi}, we know that the conductance of the graph $G$ is $\phi = 1/n^{2\epsilon}$. Therefore, the algorithm would require $\Omega(n\cdot n^\epsilon) = \Omega (n/\sqrt{\phi})$ messages in expectation.
\end{proof}

\section{The critical knowledge of the network size}
In this section, we show that the knowledge of the network size $n$ is critical for our algorithm to succeed by giving a message complexity lower bound of $\Omega(m)$ for all graphs if $n$ is not known.

In \cite{Kutten:2015:CUL:2742144.2699440}, Kutten et al. show a message complexity lower bound of $\Omega(m)$ in expectation for implicit leader election in general graphs (where $m$ refers to the number of edges in the network graph) even when the number of nodes in the network $n$ and the diameter of the network $D$ are known to all the nodes. Here, we show that this lower bound applies only to graphs that are not well connected or where nodes are not aware of the value of $n$. This lower bound fails for the case of well-connected graphs for the case where $n$ is known (as shown by our algorithm). However, we would like to point out that the knowledge of $n$ is critical for our algorithm to succeed. %

Consider any $2$-connected graph $G_0$ of $n$ nodes, where nodes do not know the value of $n$ and a range $Z = [1, n^4]$ of ID's. $G_0$ can have many instantiations, depending upon the node ID assignment and the port number mapping. An ID assignment is a function $\varphi$ : $V (G_0) \mapsto Z$. A port mapping for node $v$ is a mapping $P_v : [1, deg_v] \mapsto \Gamma(v)$ (namely, $v$'s neighbors). A port mapping for the graph $G_0$ is $P = \langle P_{v_1} , \dots , P_{v_n}\rangle$. Every choice of $\varphi$ and $P$ yields a concrete graph $G_{\varphi,P}$.

\begin{theorem} \label{thm:needn}
Let $\mathcal{R}$ be any implicit leader election algorithm that succeeds with probability at least $1 -\beta$, for some constant $\beta \leq 3/56$. If $n$ is not known to the nodes, for any $2$-connected graph $G_0$ of $n$ nodes and $m$ edges, there exists an id assignment and a port mapping, for which the expected number of messages used by $\mathcal{R}$ on $G$ is $\Omega(m)$.
\end{theorem}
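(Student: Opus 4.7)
The plan is an indistinguishability argument via a two-copy construction. I assume for contradiction that for \emph{every} id assignment and port mapping on $G_0$ the algorithm $\mathcal{R}$ sends $o(m)$ messages in expectation, and I derive a contradiction by building a doubled graph $G'$ on which $\mathcal{R}$ either sends many messages or, with constant probability, elects the wrong number of leaders. The key idea is that since the nodes do not know $n$, nothing in a node's local view can rule out the possibility that the graph is $G_0$ rather than the larger $G'$, so each copy of $G_0$ inside $G'$ will elect its own leader unless enough messages cross between the copies.

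Under the contradictory hypothesis I fix two id assignments $\varphi,\varphi''$ of $G_0$ with disjoint value sets (the nodes know no bound on the ids, so the values are unconstrained from their perspective). Writing $q_e(\varphi)$ for the probability that at least one message is sent on edge $e$ during an execution of $\mathcal{R}$ on $(G_0,\varphi)$, we have $\sum_e q_e(\varphi)=o(m)$ and $\sum_e q_e(\varphi'')=o(m)$, so by averaging there is an edge $e^*=(u,v)$ of $G_0$ with $q_{e^*}(\varphi)+q_{e^*}(\varphi'')=o(1)$. Form $G'$ by taking two vertex-disjoint copies $G_1,G_2$ of $G_0$, removing both copies of $e^*$, and inserting the cross edges $(u_1,v_2)$ and $(u_2,v_1)$; since $G_0$ is 2-connected, $G_0-e^*$ is connected and so $G'$ is connected. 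The ports at $u_1,v_1,u_2,v_2$ that used to carry the deleted copies of $e^*$ are re-used for the new cross edges, all other ports are inherited from $G_0$, $V_1$ receives the ids $\varphi$ and $V_2$ receives $\varphi''$.

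Let $A$ be the event that throughout the execution of $\mathcal{R}$ on $G'$, no message is ever sent on either cross edge. An induction on the step number shows that, whenever $A$ holds, the execution on $G'$ restricted to $V_i$ is identical, node by node and bit by bit, to the execution of $\mathcal{R}$ on the single copy $(G_0,\varphi)$ (for $i=1$) or $(G_0,\varphi'')$ (for $i=2$) with the matching random bits---because a cross-edge port at $u_i$ or $v_i$ looks locally indistinguishable from the port that led to $e^*$ in the single copy as long as no message has arrived on it. In particular, a message crosses a cross edge in $G'$ iff the corresponding single-copy $G_0$ execution sends a message on $e^*$. Since the random bits of $V_1$ and $V_2$ are independent,
\[
\Prob{A}=\bigl(1-q_{e^*}(\varphi)\bigr)\bigl(1-q_{e^*}(\varphi'')\bigr)=1-o(1).
\]
Conditioned on $A$, the two copies produce leaders via two independent single-copy executions of $\mathcal{R}$ on $G_0$, each conditioned on ``no message on $e^*$''; by the $1-\beta$ correctness on $G_0$, each copy elects exactly one leader with probability at least $1-\beta-o(1)$ after this conditioning (the conditioning costs only $o(1)$ because the unconditional probability of no message on $e^*$ is $1-o(1)$). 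By independence, conditional on $A$ the graph $G'$ has at least two leaders with probability $\ge(1-\beta-o(1))^2\ge 1-2\beta-o(1)$, so $\Prob{G'\text{ fails}}\ge\Prob{A}(1-2\beta-o(1))\ge 1-2\beta-o(1)$. For $\beta\le 3/56$ and $n$ sufficiently large, this exceeds $\beta$, contradicting the correctness of $\mathcal{R}$ on $G'$.

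The main obstacle is the inductive coupling claim: one must verify carefully that for every node $w\in V_i$ and every step $t$, conditional on no cross-edge message having been received by step $t$, the local state of $w$ in $G'$ agrees exactly with its state in the corresponding single-copy $G_0$-execution on the same random bits. This hinges on the port re-use at $u_i,v_i$ and on the fact that the identity of a neighbor across an unused port never enters any node's local view. A secondary subtlety is the careful bookkeeping of the conditioning on $A$ (and on the two single-copy events $A_i$ that no message on $e^*$ occurs) when converting the algorithm's unconditional success probability into a conditional one on each copy; this costs only an additive $o(1)$ because $\Prob{A_i}=1-o(1)$.
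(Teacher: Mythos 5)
Your proof is correct, but it takes a genuinely different route from the paper's. The paper also builds a two-copy ``dumbbell'' $Dumbbell(G_l[e_l],G_r[e_r])$ and uses the same $n$-unknown indistinguishability (its Observation~16 is exactly your coupling claim), but it obtains the $\Omega(m)$ bound by importing the \emph{bridge-crossing} machinery of Kutten et al.: a counting lemma stating that any deterministic algorithm achieving bridge crossing on a quarter of all $m^2$ dumbbells must send $\Omega(m)$ messages over the uniform distribution, followed by Yao's minimax principle to extract a worst-case dumbbell for randomized algorithms, and then a case analysis on whether leader election completes before or after the first bridge crossing. You bypass all of that with a direct averaging argument: since the expected message count upper-bounds the expected number of edges that ever carry a message, under the contradiction hypothesis some edge $e^*$ is used with probability $o(1)$ in both single-copy executions, and the one dumbbell built on $e^*$ already fools the algorithm with probability $1-o(1)$, yielding two leaders with probability roughly $(1-2\beta)$. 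Your version is more self-contained (no external lemma, no Yao, no distribution over inputs) and exhibits the bad instance constructively from the algorithm's own behavior; the paper's version buys compatibility with the $n$-known setting of Kutten et al., where the single-copy/dumbbell indistinguishability is unavailable and one genuinely needs the distributional bridge-crossing bound. One small polish point: phrasing the contradiction hypothesis as ``every instantiation uses $o(m)$ messages'' is not quite the negation of the $\Omega(m)$ claim; you should fix a concrete constant $c_0$ with, say, $(1-2c_0)(1-\beta-c_0)^2>\beta$ (which holds comfortably for $\beta\le 3/56$), assume every instantiation uses fewer than $c_0 m$ expected messages, and run the same argument to conclude that some instantiation needs at least $c_0 m$ — but this is a routine repair and the paper is no more formal on this point.
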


\begin{proof}
To show the lower bound, we rely on the construction of a graph family referred to as dumbbell graphs and on a solution of an intermediate problem called bridge crossing on this graph family. In this regard, we reuse some of the work done in \cite{Kutten:2015:CUL:2742144.2699440} to show that solving bridge crossing on this dumbbell graph family requires $\Omega(m)$ messages in expectation. For completeness, we rewrite some of the definitions and lemmas used in \cite{Kutten:2015:CUL:2742144.2699440}.

Given any $2$-connected graph $G_0$, lets $\mathcal{G}$ be the collection of concrete graphs $G_{\varphi,P}$ obtained from $G_0$ by fixing the node id assignment and the port number mapping. The set of id's of this graph is denoted by $ID(G_{\varphi,P}) = \{\varphi(v) | v \in V(G_0)\}$.
An ``open graph" $G[e]$ is obtained from a graph $G \in \mathcal{G}$ by erasing an edge $e$ of $G_0$ and leaving the two ports that were attached to it empty. 
Let $G^{open}$ be the collection of open graphs obtained from $G_0$.

For two open graphs $G'[e']$ and $G''[e'']$ with disjoint sets of id's, $ID(G'[e']) \cap ID(G''[e'']) = \varnothing$, let $Dumbbell(G'[e'], G''[e''])$ be the graph obtained by taking one copy of each of these graphs, and connecting their open ports. Hence, a dumbbell graph is composed of two open graphs plus two connecting edges, referred to as bridges. Moreover, we say that $G'[e']$ participates on the left and $G''[e'']$ participates on the right in $Dumbbell(G'[e'], G''[e''])$. Strictly speaking, there could be two such graphs, but let us consider only one of them.

For concreteness, if $e' = (v',w')$ and $e'' = (v'', w'')$ where $ID(v') < ID(w')$ and $ID(v'') < ID(w'')$, then the graph $Dumbbell(G'[e'], G''[e''])$ contains the bridge edges $(v',v'')$ and $(w',w'')$. We create a collection $\mathcal{I}$ of inputs for our problem consisting of all the dumbbell graphs. 
\[\mathcal{I} = {Dumbbell(G'[e'], G'' [e'']) \bigm| G' [e'], G'' [e''] \in G^{open} ,
ID(G' [e']) \cap ID(G''[e'']) = \varnothing}.\]
Partition the collection of inputs $\mathcal{I}$ into classes as follows: for every two graphs $G' , G'' \in G$, define the class $C(G',G'') = {Dumbbell(G'[e'], G'' [e'']) \bigm| e',e'' \in E(G')}$, consisting of the $m^2$ dumbbell graphs constructed from $G'$ and $G''$. Finally, create a uniform distribution $\Psi$ on $\mathcal{I}$.

Similar to in \cite{Kutten:2015:CUL:2742144.2699440}, we define an intermediate problem on the input collection $\mathcal{I}$, called \textit{bridge crossing} (BC). An algorithm for this problem is required to send a message on one of the two bridge edges connecting the two open graphs (from either direction). More precisely, any algorithm solving BC is allowed to start simultaneously at all nodes, and succeeds if during its execution, a message has crossed one of the two connecting bridge edges. (Note that in our model, the nodes are unaware of their neighbors' identities, and in particular, the four nodes incident to the two bridge edges are unaware
of this fact.)

We now give a high level overview of the main ideas of the proof. %
For any given $2$-connected graph $G_0$, we show that there exists a graph $Dumbbell(G_l[e_l], G_r [e_r])$ in the collection $\mathcal{I}$ corresponding to $G_0$, for which any algorithm that solves bridge crossing requires $\Omega(m)$ messages in expectation. Here $G_l$ and $G_r$ are graphs obtained from $G_0$ by some id assignment and port mapping; $G_l[e_l], G_r [e_r]$ are their corresponding open graphs obtained by removing edges $e_l$ and $e_r$ respectively. The existence of the dumbell graph follows from the following lemma (from \cite{Kutten:2015:CUL:2742144.2699440}) and  Yao's minmax principle (c.f. Prop. 2.6 in \cite{Motwani:1995:RA:211390}). Thereafter, we give an indistinguishability argument in which we show that if $n$ is not known, no algorithm can distinguish between graphs $G_{l}$ and $Dumbbell(G_l[e_l], G_r [e_r])$ (or $G_{r}$ and $Dumbbell(G_l[e_l], G_r [e_r])$) with sufficiently large probability by sending only $o(m)$ messages. Based on this indistinguishability argument and the fact that BC requires $\Omega(m)$ messages, we show that one side of the dumbbell graph (either $G_l$ or $G_r$) would need to send at least $m/2$ messages to solve leader election.  %

\begin{lemma} \text{(Lemma $3.6$ of \cite{Kutten:2015:CUL:2742144.2699440})}
Every deterministic algorithm $\mathcal{D}$ that achieves BC on at least $1/4$ of the dumbbell graphs in the collection $\mathcal{I}$ has expected message complexity $\Omega(m)$ on $\Psi$.
\end{lemma}

\noindent Combining the above lemma with Yao's minmax principle, we obtain the following lemma that describes the message complexity of any algorithm for BC (both deterministic and randomized) that succeeds with sufficiently high probability ($>5/8$) on the worst case graph of $\mathcal{I}$.

\begin{lemma} \label{lem:comb}
Any algorithm $\mathcal{A}$ that solves BC with probability $>5/8$ on the worst-case graph of $\mathcal{I}$ (say $Dumbbell(G_l[e_l], G_r [e_r])$) has expected message complexity of at least $\Omega(m)$.
\end{lemma}

Consider a universal leader election algorithm $\mathcal{R}$ that succeeds on any given graph $G$ with probability at least $1-\beta$, where $\beta \leq 3/56$. We imagine running $\mathcal{R}$ in parallel on all three graphs ($Dumbbell(G_l[e_l], G_r [e_r]), G_{l}$ and $G_{r}$) using the same random bits. Let $\mathcal{X}$ be the time-point where algorithm $\mathcal{R}$ achieves BC on the graph $Dumbbell(G_l[e_l], G_r [e_r])$ (if $\mathcal{R}$ does not achieve BC, we consider $\mathcal{X} = \infty$). Also, let the expected number of messages sent by algorithm $\mathcal{R}$ on any graph $G$ until time-point $t$ be represented by $msg_{t}(G)$, and so $msg_{\mathcal{X}}(Dumbbell(G_l[e_l], G_r [e_r]))$ is the expected number of messages sent by $\mathcal{R}$ up to the time-point $\mathcal{X}$ on the graph $Dumbbell(G_l[e_l], G_r [e_r])$. 

Observe that until $\mathcal{X}$, the nodes in $G_{l}[e_l]$ (the left side of $Dumbbell(G_l[e_l], G_r [e_r])$) are not aware of the existence of $G_{r}[e_r]$ (as no message has traveled across the bridge edges and $n$ is also not known). As $G_{l}$ has the exact same ids as  $G_{l}[e_l]$ and $\mathcal{R}$ uses the same random bits, until the point $\mathcal{X}$, nodes behave identically in both cases. That is, after $t$ steps (where $t<\mathcal{X}$), if node $x$ in $G_l[e_l]$ is in state $\sigma$, then node $x$ of $G_{l}$ would also be in state $\sigma$. The same argument follows for $G_r [e_r]$ and $G_{r}$. This implies that until BC, the state of any node in $G_l$ (resp. $G_r$) is identical to its corresponding node in $Dumbbell(G_l[e_l], G_r [e_r])$ and as such, the behavior of the nodes would be identical.

\begin{observation} \label{obv:sim}
If $n$ is not known, any algorithm $\mathcal{A}$ (using the same random bits) cannot differentiate if it is running on $Dumbbell(G_l[e_l], G_r [e_r])$ or on $G_{l}$ (resp. $G_{r}$) until the time-point $\mathcal{X}$, when it achieves BC on $Dumbbell(G_l[e_l], G_r [e_r])$. Therefore, until the point $\mathcal{X}$ the behavior of any node in $G_l$ (resp. $G_r$), would be identical to that of the corresponding node in $Dumbbell(G_l[e_l], G_r [e_r])$.
\end{observation}

Let $\mathsf{succ}$ be the event that algorithm $\mathcal{R}$ successfully elects a unique leader in all the three graphs $G_l$, $G_r$ and $Dumbbell(G_l[e_l], G_r [e_r])$ within finite time. %
Conditioned on $\mathsf{succ}$, let $\mathcal{Y}_1$ and $\mathcal{Y}_2$ be the time-points where $\mathcal{R}$ solves leader election on $G_{l}$ and $G_{r}$ respectively ($\mathcal{Y}_1 \neq \infty$ and $\mathcal{Y}_1 \neq \infty$). That is, after $\mathcal{Y}_1$ (resp.  $\mathcal{Y}_2$), exactly one node elects itself as the leader and no more messages are sent on $G_l$ (resp. $G_r$).  %
We also define $\mathcal{Y} = \max(\mathcal{Y}_1, \mathcal{Y}_2)$. %
Using the fact that bridge crossing requires $\Omega(m)$ messages on $Dumbbell(G_l[e_l], G_r [e_r])$, we will show that leader election takes at least $\Omega(m/2)$ messages either for graph $G_{l}$ or for graph $G_{r}$. In this regard, we consider two different possibilities, each of which is conditioned on the event $\mathsf{succ}$. 
\\
\textbf{Case 1 :} $\mathcal{Y} < \mathcal{X}$. %
We show that this case is not possible by showing a contradiction. As $\mathcal{Y} < \mathcal{X}$ %
, it implies that $\mathcal{R}$ solves LE on both $G_{l}$ and $G_{r}$ before solving BC on the graph $Dumbbell(G_l[e_l], G_r [e_r])$. From Observation \ref{obv:sim}, we know that $\mathcal{R}$ cannot differentiate if it is running on $Dumbbell(G_l[e_l], G_r [e_r])$ or on $G_{l}$ (resp. $G_{r}$) until the point $\mathcal{X}$. Therefore algorithm $\mathcal{R}$ on $Dumbbell(G_l[e_l], G_r [e_r])$ would behave in an identical fashion with that in $G_{l}$ (resp. $G_{r}$) and would elect two leaders (one from $G_{l}$ and other from $G_{r}$). Thus contradicting our assumption of event $\mathsf{succ}$. This also tacitly implies that if event $\mathsf{succ}$ happens then $\mathcal{R}$ also solves BC (as $\mathcal{X}$ cannot be $\infty$).
\\
\textbf{Case 2 :} $\mathcal{Y} \geq \mathcal{X}$. %
This case implies that either both $\mathcal{Y}_1 \geq \mathcal{X}$ and $\mathcal{Y}_2 \geq \mathcal{X}$ or only either one of them is $\geq \mathcal{X}$. %
Using Observation \ref{obv:sim}, we can say that the total number of messages sent upto the point $\mathcal{X}$ by running $\mathcal{R}$ on $G_{l}$ and $G_{r}$ is exactly equal to the number of messages sent by running $\mathcal{R}$ on $Dumbbell(G_l[e_l], G_r [e_r])$ until time $\mathcal{X}$. There also might be some additional messages sent on  $G_l$ and/or $G_r$ as $\mathcal{Y} \geq \mathcal{X}$. Therefore, the total number of messages sent by $\mathcal{R}$ on $G_{l}$ and $G_{r}$ would be at least $msg_{\mathcal{X}}(Dumbbell(G_l[e_l], G_r [e_r]))$. By Lemma \ref{lem:comb}, we know that  $msg_{\mathcal{X}}(Dumbbell(G_l[e_l], G_r [e_r]))\geq \Omega(m)$. Thus, we see that  $msg_{\mathcal{Y}_1}(G_{l}) + msg_{\mathcal{Y}_2}(G_{r}) \geq  msg_{\mathcal{X}}(Dumbbell(G_l[e_l], G_r [e_r])) \geq \Omega(m)$. 

That is, conditioned on the event $\mathsf{succ}$, either $msg_{\mathcal{Y}_1}(G_{l}) \geq \Omega(m/2)$ or $msg_{\mathcal{Y}_2}(G_{r}) \geq \Omega(m/2)$. Without loss of generality, assume that $msg_{\mathcal{Y}_1}(G_{l}) \geq msg_{\mathcal{Y}_2}(G_{r})$. Also, since $\mathcal{R}$ is a universal leader election algorithm that succeeds on any given graph with probability at least $1-\beta$, where $\beta \leq 3/56$, then the probability that event $\mathsf{succ}$ happens would be $\geq (1-\beta)^3 \geq 4/5$. This implies from above $\expect{\text{Messages sent on } G_L \mid \mathsf{succ}} \ge \Omega(m/2)$. To calculate the value of $\expect{\text{Messages sent on } G_L}$, we use $\expect{\text{Messages sent on } G_L} \geq \expect{\text{Messages sent on } G_L \mid \mathsf{succ}}\cdot \prob(\mathsf{succ}) \geq \Omega(m/2)\cdot 4/5 = \Omega(m)$. We consider the $msg_{\mathcal{Y}_1}(G_{l}) = \Omega(m)$ as the worst case message complexity and the corresponding graph as the worst case graph. The existence of this worst-case graph proves the theorem.
\end{proof}

\section{Conclusion}
In this paper we show that implicit leader election can be achieved in sub-linear message complexity for sufficiently well-connected graphs. This shows that the major communication cost for the explicit variant of the leader election comes from broadcasting the leader information to all the nodes rather than the process of electing a leader.


Furthermore, we observe that that there exists a possible gap of $O(1/\phi^{5/4})$ between the upper and the lower bounds shown here. It remains an interesting open problem to see if this gap can be reduced further.
\section*{Acknowledgments}
This research was supported by AcRF Tier $1$ grant T1 251RES1719 (Adaptive Data Structures: Concurrent, Cache-Efficient, Distributed). Peter Robinson acknowledges the support of the Natural Sciences and Engineering Research Council of Canada (NSERC).

\bibliographystyle{plain}
\bibliography{bibliography}

\end{document}